\newcommand{\projdn}{\proj[\downarrow]}
\newcommand{\projin}{\proj[\otimes]}
\NewDocumentCommand{\proj}{s O{}}{
    \IfBooleanTF{#1}{
        \IfNoValueTF{#2}{
            \bm{\Pi}
        }{
            \bm{\Pi}_{#2}
        }
    }{
        \IfNoValueTF{#2}{
            \Pi
        }{
            \Pi_{\mathrm{#2}}
        }
    }
}
\NewDocumentCommand{\gph}{sd()}{\IfBooleanTF{#1}{\mathcal{G}_{\mathsf{l}}(#2)}{\mathcal{G}(#2)}}
\NewDocumentCommand{\invgph}{sd()}{\IfBooleanTF{#1}{\mathcal{G}^{-1}_{\mathsf{l}}(#2)}{\mathcal{G}^{-1}_{\mathsf{r}}(#2)}}
\newcommand{\ubar}[1]{\underaccent{\bar}{#1}}
\newcommand{\leftrarrows}{\mathrel{\raise.75ex\hbox{\oalign{%
  $\scriptstyle\leftarrow$\cr
  \vrule width0pt height.5ex$\hfil\scriptstyle\relbar$\cr}}}}
\newcommand{\lrightarrows}{\mathrel{\raise.75ex\hbox{\oalign{%
  $\scriptstyle\relbar$\hfil\cr
  $\scriptstyle\vrule width0pt height.5ex\smash\rightarrow$\cr}}}}
\newcommand{\Rrelbar}{\mathrel{\raise.75ex\hbox{\oalign{%
  $\scriptstyle\relbar$\cr
  \vrule width0pt height.5ex$\scriptstyle\relbar$}}}}
\def\leftrightarrowsfill@{\arrowfill@\leftrarrows\Rrelbar\lrightarrows}
\newcommand{\xleftrightarrows}[2][]{\ext@arrow 3399\leftrightarrowsfill@{#1}{#2}}
\def\rightleftarrowsfill@{\arrowfill@\lrightarrows\Rrelbar\leftrarrows}
\newcommand{\xrightleftarrows}[2][]{\ext@arrow 3399\rightleftarrowsfill@{#1}{#2}}
\NewDocumentCommand{\Forall}{s}{
    \IfBooleanTF{#1}{
        \, \forall \,
    }{
        \text{ for all }
    }
}
\NewDocumentCommand{\Exists}{s}{
    \IfBooleanTF{#1}{
        \exists\,
    }{
        \text{There exists }
    }
}
\newcommand\numberthis{\addtocounter{equation}{1}\tag{\theequation}}
\NewDocumentCommand{\rF}{
    O{R} d()
}{
    \IfNoValueTF{#2}{
        \mathbb{#1}
    }{
        \mathbb{#1}_{#2}
    }
} 
\NewDocumentCommand{\cF}{O{C}}{\mathbb{#1}} 
\NewDocumentCommand{\ii}{s}{
    \IfBooleanTF{#1}{
        \mathbf{i}
    }{
        i
    }
}
\NewDocumentCommand{\tRe}{r()}{
    \mathrm{Re}\left(#1\right)
}
\NewDocumentCommand{\tIm}{r()}{
    \mathrm{Im}\left(#1\right)
}
\NewDocumentCommand{\sbound}{sd()}{
    \IfBooleanTF{#1}{

    }{
        \partial\,#2
    }    
}
\NewDocumentCommand{\sint}{sd()}{
    \IfBooleanTF{#1}{
        
    }{
        \mathrm{int}\,#2
    }
}
\NewDocumentCommand{\gasq}{O{\mathbf{\Sigma^2}}d()}{
    #1\left(#2\right)
}
\NewDocumentCommand{\invgasq}{O{\mathbf{\Sigma^{-2}}}d()}{
    #1\left(#2\right)
}
\NewDocumentCommand{\nr}{O{\mathbf{W}}d()}{
    #1\left(#2\right)
}
\NewDocumentCommand{\invnr}{O{\mathbf{W}^{-1}}d()}{
    #1(#2)
}
\NewDocumentCommand{\nnr}{O{\mathbf{W}_{\mathrm{N}}}d()}{
    #1\left(#2\right)
}
\NewDocumentCommand{\invnnr}{O{\mathbf{W}_{\mathrm{N}}}d()}{
    #1^{-1}\left(#2\right)
}
\NewDocumentCommand{\anr}{O{W}d()}{
    #1'\left(#2\right)
}
\NewDocumentCommand{\maxph}{sO{\phi}d()}{
    \IfNoValueTF{#3}{
        \IfBooleanTF{#1}{
            #2_{\max}
        }{
            \bar{#2}
        }
    }{
        \IfBooleanTF{#1}{
            #2_{\max}\left(#3\right)
        }{
            \overline{#2}\left(#3\right)
        }
    }
}
\NewDocumentCommand{\minph}{sO{\phi}d()}{
    \IfNoValueTF{#3}{
        \IfBooleanTF{#1}{
            #2_\{\min\}
        }{
            \ubar{#2}
        }
    }{
        \IfBooleanTF{#1}{
            #2_\{\min\}\left(#3\right)
        }{
            \underline{#2}\left(#3\right)
        }
    }
}
\NewDocumentCommand{\hermp}{s O{H} r()}{
    \IfBooleanTF{#1}{
        (#3)_{\mathrm{#2}}
    }{
        #3_{\mathrm{#2}}
    }
}
\NewDocumentCommand{\shermp}{s O{S} r()}{
    \IfBooleanTF{#1}{
        \left(#3\right)_{\mathrm{#2}}
    }{
        #3_{\mathrm{#2}}
    }
}
\NewDocumentCommand{\mati}{s o}{
    \IfNoValueTF{#2}{
        \IfBooleanTF{#1}{
            \mathbb{I}
        }{
            I
        }
    }{
        \IfBooleanTF{#1}{
            \mathbb{I}_{#2}
        }{
            I_{#2}
        }
    }
} 
\NewDocumentCommand{\mato}{s}{
    \IfBooleanTF{#1}{
        \mathbf{0}
    }{
        0
    }
} 
\NewDocumentCommand{\diag}{sm}{
    \IfBooleanTF{#1}{
        \mathrm{diag}
        \left\{
            #2
        \right\}
    }{
        D_{#2}
    }
}
\NewDocumentCommand{\blkdiag}{m}{
    \mathrm{blkdiag}
    \{
        #1
    \}
}
\NewDocumentCommand{\inv}{d()}{
    \IfNoValueTF{#1}{
        ^{-1}
    }{
        \left(#1\right)^{-1}
    }
}
\NewDocumentCommand{\tp}{s d()}{
    \IfBooleanTF{#1}{
        \IfNoValueTF{#2}{
            ^{\mathsf{T}}
        }{
            \left(#2\right)^{\mathsf{T}}
        }
    }{
        \IfNoValueTF{#2}{
            ^\mathsf{T}
        }{
            \left(#2\right)^{\mathsf{T}}
        }
    }
}
\NewDocumentCommand{\ct}{s d()}{
    \IfBooleanTF{#1}{
        \IfNoValueTF{#2}{
            ^{\mathsf{H}} 
        }{
            \left(#2\right)^{\mathsf{H}}
        }
    }{
        \IfNoValueTF{#2}{
            ^{\mathsf{H}} 
        }{
            \left(#2\right)^{\mathsf{H}}
        }
    }
}
\NewDocumentCommand{\pinv}{s d()}{
    \IfNoValueTF{#2}{
        \IfBooleanTF{#1}{
            ^{+}
        }{
            ^{\dagger}
        }
    }{
        \IfBooleanTF{#1}{
            \left(#2\right)^{+}
        }{
            \left(#2\right)^{\dagger}
        }
    }
}
\NewDocumentCommand{\cj}{r()}{
    \overline{#1}
}
\NewDocumentCommand{\iprod}{r() r() o}{
    \IfNoValueTF{#3}{
        \left \langle #1, #2 \right \rangle
    }{
        \left \langle #1, #2 \right \rangle_{#3}
    } 
}
\newcommand{\poreals}{\rF_+}
\newcommand{\nnreals}{\overline{\rF}_+}
\newcommand{\svmax}{\overline{\sigma}}
\newcommand{\svmin}{\underline{\sigma}}
\newcommand{\phmax}{\overline{\phi}}
\newcommand{\phmin}{\underline{\phi}}
\newcommand{\rhinf}{\mathcal{RH}_\infty}
\NewDocumentCommand{\angmin}{o}{
    \underline{\alpha}\IfNoValueF{#1}{_{#1}}
}
\NewDocumentCommand{\angmax}{o}{
    \overline{\alpha}\IfNoValueF{#1}{_{#1}}
}
\NewDocumentCommand{\ang}{o}{
    \alpha\IfNoValueF{#1}{_{#1}}
}
\NewDocumentCommand{\sphmin}{o}{\underline{\psi}\IfNoValueF{#1}{_{#1}}}
\NewDocumentCommand{\sphmax}{o}{\overline{\psi}\IfNoValueF{#1}{_{#1}}}
\NewDocumentCommand{\sph}{o}{\psi\IfNoValueF{#1}{_{#1}}}
\NewDocumentCommand{\kmin}{D[]{k}}{\underline{#1}}
\NewDocumentCommand{\kmax}{D[]{k}}{\overline{#1}}
\NewDocumentCommand{\pt}{o}{
    \mathrm{p}\IfNoValueF{#1}{_{#1}}
}
\newcommand{\invmap}{f_{\mathrm{inv}}}
\NewDocumentCommand{\convhull}{s r()}{
    \IfBooleanTF{#1}{
        \mathbf{co}\,\left\{#2\right\}
    }{
        \mathbf{co}\,#2
    }
}
\NewDocumentCommand{\parahull}{s r()}{
    \IfBooleanTF{#1}{
        \mathbf{para}\,\left\{#2\right\}
    }{
        \mathbf{para}\,#2
    }
}
\NewDocumentCommand{\conihull}{s r()}{
    \IfBooleanTF{#1}{
        \mathbf{cone}\,\left\{#2\right\}
    }{
        \mathbf{cone}\,#2
    }
}
\NewDocumentCommand{\cone}{s}{
    \IfBooleanTF{#1}{\mathbf{C}}{\mathbf{C}_{\circ}}
}
\NewDocumentCommand{\epi}{s r()}{
    \IfBooleanTF{#1}{
        \mathbf{epi}\,\left\{#2\right\}
    }{
        \mathbf{epi}\,#2
    }
}
\NewDocumentCommand{\eigs}{s r()}{
    \IfBooleanTF{#1}{
        \sigma \left(#2\right)
    }{
        \Lambda \left(#2\right)
    }
}
\NewDocumentCommand{\nzeigs}{s r()}{
    \IfBooleanTF{#1}{
        \sigma_{\neq 0} \left(#2\right)
    }{
        \Lambda_{\neq 0} \left(#2\right)
    }
}
\NewDocumentCommand{\scomp}{s d()}{
    \IfBooleanTF{#1}{
        \IfNoValueTF{#2}{
            ^{\mathsf{c}} 
        }{
            \left(#2\right)^{\mathsf{c}}
        }
    }{
        \IfNoValueTF{#2}{
            ^{\mathsf{c}} 
        }{
            \left(#2\right)^{\mathsf{c}}
        }
    }
}
\NewDocumentCommand{\numran}{D[]{\mathbf{W}}r()}{
    #1(#2)
}
\NewDocumentCommand{\anumran}{D[]{\mathbf{W}'}r()}{
    #1(#2)
}
\NewDocumentCommand{\vnumran}{D[]{}r()}{
    \mathbf{V}_{#1}(#2)
}
\NewDocumentCommand{\invvnumran}{D[]{}r()}{
    \mathbf{V}_{#1}^{-1}(#2)
}
\NewDocumentCommand{\dwshell}{sD[]{\mathbf{DW}}r()}{
    \IfBooleanTF{#1}{\widetilde{#2}}{#2}(#3)
}
\NewDocumentCommand{\invdwshell}{sD[]{\mathbf{DW}}r()}{
    \IfBooleanTF{#1}{\widetilde{#2}^{-1}}{#2^{-1}}(#3)
}
\NewDocumentCommand{\srg}{sd[]r()}{
    \mathbf{SRG}\IfNoValueF{#2}{_{\IfBooleanTF{#1}{\mathrm{#2}}{#2}}}(#3)
}
\NewDocumentCommand{\invsrg}{d[]r()}{
    \mathbf{SRG}\IfNoValueF{^{-1}}{^{-1}_{\mathrm{#1}}}(#2)
}
\NewDocumentCommand{\ssg}{d[] r()}{
    \mathbf{SSG}\IfNoValueF{#1}{_{\mathrm{#1}}}(#2)
}
\NewDocumentCommand{\disc}{s D[]{} d()}{
    \IfBooleanTF{#1}{
        \mathbf{D}^{\mathsf{c}}
    }{
        \mathbf{D}
    }
    \IfNoValueTF{#2}{}{_{#2}}
    \IfNoValueTF{#3}{}{
        (#3)
    }
} 
\NewDocumentCommand{\parab}{sD[]{}d()}{
    \IfBooleanTF{#1}{
        \mathbf{P}^{\mathsf{c}}
    }{
        \mathbf{P}
    }
    \IfNoValueTF{#2}{}{_{#2}}
    \IfNoValueTF{#3}{}{
        [#3]
    }
}
\NewDocumentCommand{\seg}{s}{
    \IfBooleanTF{#1}{
        \mathbf{Seg}^{\mathsf{c}}
    }{
        \mathbf{Seg}
    }
}
\NewDocumentCommand{\plane}{sD[]{}d()}{
    \IfBooleanTF{#1}{
        \mathbf{H}^{\mathsf{c}}
    }{
        \mathbf{H}
    }
    \IfNoValueTF{#2}{}{_{#2}}
    \IfNoValueTF{#3}{}{
        (#3)
    }
}
\NewDocumentCommand{\texture}{D[]{T}D(){}}{
    \mathscr{#1}_{#2}
}
\newtheorem{theorem}{Theorem}
\newtheorem{lemma}{Lemma}
\newtheorem{prop}{Proposition}
\newtheorem{corol}{Corollary}
\newtheorem{example}{Example}
\numberwithin{theorem}{section}
\newtheorem{defn}{Definition}
\newtheorem{remark}{Remark}
\numberwithin{remark}{section}
\crefname{defn}{Definition}{Definitions}
\crefname{prop}{Proposition}{Propositions}
\crefname{corol}{Corollary}{Corollaries}
\def\BibTeX{{\rm B\kern-.05em{\sc i\kern-.025em b}\kern-.08em
    T\kern-.1667em\lower.7ex\hbox{E}\kern-.125emX}}
\begin{document}
\title{The Phantom of Davis-Wielandt Shell: A Unified Framework for Graphical Stability Analysis of MIMO LTI Systems}
\author{Ding Zhang,
Xiaokan Yang, 
{\color{black} Axel Ringh}, 
{\color{black} Li Qiu, \IEEEmembership{Fellow, IEEE}}
\thanks{This work was  partially supported by the Wallenberg AI, Autonomous Systems and Software Program (WASP) funded by the Knut and Alice Wallenberg Foundation and the Swedish Research Council (VR) under grant 2024-05776, as well as by the Hong Kong Research Grants Council under the projects GRF 16203223 and GRF 16206324. 
}
\thanks{Ding Zhang is an independent researcher, Hong Kong SAR (e-mail: ding.zhang@connect.ust.hk).}
\thanks{Xiaokan Yang is with the Department of Mechanics and Engineering Science, Peking University, Beijing, China (e-mail: yxkan21@stu.pku.edu.cn).}
\thanks{Axel Ringh is with the Department of Mathematical Sciences, Chalmers University of Technology and University of Gothenburg, Gothenburg, Sweden (e-mail: axelri@chalmers.se).}
\thanks{Li Qiu is with the School of Science and Engineering, The Chinese University of Hong Kong, Shenzhen, Guangdong, China (e-mail: qiuli@cuhk.edu.cn).}
}

\maketitle
\begin{abstract}
This paper presents a unified framework based on Davis-Wielandt (DW) shell for graphical stability analysis of multi-input and multi-output linear time-invariant feedback systems. Connections between DW shells and various graphical representations, as well as gain and phase measures, are established through an intuitive geometric perspective. Within this framework, we map the relationships and relative conservatism among various separation conditions. A rotated scaled relative graph ($\theta$-SRG) concept is proposed as a mixed gain-phase representation, from which a closed-loop stability criterion is derived and shown to be the least conservative among the existing 2-D graphical conditions for bi-component feedback loops. We also propose a reliable and generalizable algorithm for visualizing the $\theta$-SRGs and include a system example to demonstrate the reduced conservatism of the proposed condition.
\end{abstract}

\begin{IEEEkeywords}
Davis-Wielandt (DW) shell, Scaled relative graph, MIMO systems, Nyquist criterion, Stability analysis
\end{IEEEkeywords}

\section{Introduction}
\label{sec:introduction}
\IEEEPARstart{G}{raphical} system representations and stability conditions are intuitive and insightful tools favored by control theorists and practitioners. Notably, there has been a recent surge in the exploration of graphical representations of multi-input and multi-output (MIMO) linear and nonlinear systems, aimed at enabling graphical stability analysis of such systems.

The Nyquist plot is a fundamental graphical tool in the classical frequency-domain analysis and design of single-input and single-output (SISO) systems \cite[Chaps.~10-14]{astromFeedbackSystemsIntroduction2020}. The closed-loop stability (and instability) can be assessed through the Nyquist plot of the return ratio (a.k.a. open-loop transfer function). A variety of stability margins can be read off from this plot. The celebrated Bode plot can also be regarded as a frequencywise portrait of it. Furthermore, when mapped onto the unit sphere by the stereographic projection, the chordal distance between the Nyquist plots of two systems provides a meaningful distance measure between their dynamical difference. This distance can be used to quantify robustness and often appears as a metric to be maximized in SISO $\mathcal{H}_\infty$ control problems \cite[Chap.~9.4]{qiuIntroductionFeedbackControl2009}.
There is no doubt that the Nyquist plots offer rich geometric interpretations of many dynamical properties of systems and closed loops.

The extension of the Nyquist plot and criterion from SISO to MIMO linear systems took off long ago \cite{macfarlaneGeneralizedNyquistStability1977,desoerGeneralizedNyquistStability1980}.
For MIMO linear time-invariant (LTI) systems, its eigenloci---trajectories of eigenvalues of the transfer matrix evaluated along the Nyquist contour---serve as a natural generalization of the SISO Nyquist plot. The generalized Nyquist criterion, which is a rather intricate graphical condition in terms of the eigenloci of the return ratio, again provides a necessary and sufficient characterization of closed-loop stability.
However, unlike in the SISO case where the Nyquist plot of the return ratio can be easily constructed from those of loop components due to the simple arithmetic rules $|ab|=|a||b|, \angle ab = \angle a + \angle b$ for complex scalars, there appears to be no obvious link between the eigenloci of the return ratio and those of each loop component for MIMO systems.
This is a critical feature missing in the MIMO Nyquist plot, as in many cases only the information of each loop component is available, let alone when one of the loop components is an uncertain set instead of a known transfer matrix. Meanwhile, on the numerical front, plotting the eigenloci and verifying the criterion is challenging, especially when the size of the transfer matrices is big and branches of eigenloci intersect.  

Alternatively, resorting to other graphical system representations, which are less precise than eigenloci but more informative on the eigenloci of the return ratio from each component, accumulates a powerful asset of tools that overcome the abovesaid limitation. 
A well-known example in this category is the singular value-based gain, a notion woven into the development of $\mathcal{H}_{\infty}$ robust control \cite{zhouRobustOptimalControl1995}. At each frequency, the product of the largest singular values of loop components effectively bounds the magnitude of eigenloci of the return ratio.
Parallel to this, considerable recent efforts have been devoted to developing its phase counterpart. In contrast to the consensus on the gain concept, several phase concepts have emerged. 
    These include the sectorial phases \cite{wangPhasesSemiSectorialMatrix2023,chenPhaseTheoryMultiinput2024}, which are derived from the numerical range and are well-defined for semisectorial matrices. The sums of the smallest and largest sectorial phases of the loop components define a cone that covers the eigenloci of the return ratio. More recent developments include singular angles \cite{chenSingularAngleNonlinear2025} and segmental phases \cite{chenCyclicSmallPhase2025}, both derived from the normalized numerical range, as well as phases extracted from scaled relative graph (SRG) \cite{baron-pradaMixedSmallGain2025} and its signed version (SSG) \cite{eijndenPhaseScaledGraphs2025}. These phases are well defined for all matrices, though generally more conservative than sectorial phases when the matrices are sectorial. 
    The loop sums of these phases also provide conic bounds on the eigenloci of the return ratio, similar in spirit to the sectorial phases. 
It is worth noting that the gain, along with the aforementioned phases---except for the sectorial and SSG phases---are all some measure $d$ that exhibits either the submultiplicativity ($d(AB)\leqslant d(A)d(B)$) or the exponential submultiplicativity ($\exp{d(AB)}\leqslant \exp{d(A)}\exp{d(B)}$). Moreover, the measure $d$ itself bounds the spectrum either in magnitude or argument. These properties make them naturally suited for analyzing the cyclic interconnection, as demonstrated in \cite{chaffeyGraphicalNonlinearSystem2023,chenCyclicSmallPhase2025}.

The 2-D sets from which gain and phase information are extracted are of interest in their own right. Although submultiplicativity-type properties are generally invalid, they often lead to less conservative conditions for bi-component loops than those based solely on gains or phases. 
In particular, the SRG, a concept introduced in the optimization community \cite{hannahScaledRelativeGraph2016,ryuScaledRelativeGraphs2022} and later brought into control theory by \cite{chaffeyGraphicalNonlinearSystem2023,patesScaledRelativeGraph2021}, has attracted considerable attention. It has since inspired a line of follow-up works, including the extension in \cite{eijndenPhaseScaledGraphs2025} to incorporate missing phase lead/lag information into SRG, and the customization for linear systems in \cite{chenGraphicalDominanceAnalysis2025, baron-pradaStabilityResultsMIMO2025}, which uses the less restrictive frequencywise SRGs of transfer matrices rather than considering the SRG of linear systems as operators on signal spaces.

That said, despite the proliferation of the aforementioned graphical representations and stability conditions, their connection and distinction remain elusive. In this work, we bridge this gap by leveraging the concept of Davis-Wielandt (DW) shell to build a unified framework that derives the aforementioned sets and conditions via intuitive geometric transformations. This framework allows us to identify the equivalences and implications among various conditions. Furthermore, it leads us to a rotated variant of SRG, which yields the least conservative stability condition to date among all of its existing improvements.
We focus on MIMO LTI systems in this paper, though we anticipate a nonlinear counterpart exists. The main technical contributions of this paper are summarized as follows:
\begin{itemize}
    \item We extend the DW shell-based condition in \cite{lestasLargeScaleHeterogeneous2012} by removing the normality assumption on potential zero eigenvalues of loop components. This broadens the applicability of DW shell-based conditions.
    \item We unveil the connection between various 2-D representations with the 3-D DW shells, as well as the connection among their induced conditions. From this viewpoint, we examine the respective emphasis and conservatism of various 2-D graphical stability conditions.
    \item We propose the concept of $\theta$-SRG, and show that with the parameter $\theta$, one can derive the least conservative stability condition among all for bi-component feedback systems.
    \item Through the lens of DW shells, tractable and generalizable algorithms are proposed to visually characterize the $\theta$-SRGs, including the standard SRGs as a special case.
\end{itemize}  

The rest of this paper is organized as follows: \cref{sec:dw-shell-prelim} introduces basic aspects of DW shells, including a deferred literature review. In \cref{sec:dw-phantom}, we uncover the connection between DW shells and various lower-dimensional graphical representations of matrices, along with the relationships among their associated separation conditions. In particular, we propose a $\theta$-SRG concept and derive the corresponding separation condition. 
These results are then used to develop graphical stability conditions for feedback systems in \cref{sec:stability-ana}, where an example is provided to demonstrate the reduced conservatism.
\cref{sec:numerical-theta-srg} presents a tractable algorithm to visualize $\theta$-SRGs.
Finally, the paper concludes in \cref{sec:concl}.

\subsection{Nomenclature}
Let $\ii* = \sqrt{-1}$ be the imaginary unit. We denote by $\mathbb{R}$, $\mathbb{C}$, and $\mathbb{Z}$ the sets of real numbers, complex numbers, and integers, respectively.
The sets of positive and nonnegative reals are denoted by $\mathbb{R}_+$ and $\overline{\mathbb{R}}_+$.
Given a matrix $A \in \cF^{n\times n}$, we use $A\tp, \cj(A)$, $A\ct*$, $A\pinv$ to denote its transpose, conjugate, conjugate transpose, and Moore-Penrose pseudoinverse, respectively. 
$\blkdiag{\cdot}$ denotes a block-diagonal matrix formed by placing the matrices within the braces along the diagonal.
We denote by $\Lambda(A)$ the set of its eigenvalues, $\Lambda_{\neq 0}(A)$ the set of its nonzero eigenvalues. Its graph $\gph(A)$ is a linear subspace consisting of all input-output pairs governed by $A$: 
$    \gph(A):=\left\{\left[\begin{smallmatrix}
        x \\ y
    \end{smallmatrix}\right] \in \cF^{2n} : y = Ax, x \in \cF^{n}\right\}.$
The matrix $A$ admits a Toeplitz decomposition: $A = \hermp(A) + \ii*\shermp(A)$ where both $\hermp(A) = (A+A\ct)/2$ and $\shermp(A)=(A-A\ct)/(2\ii*)$ are Hermitian. Unless otherwise specified, the norm $\|\cdot\|$ refers to the Euclidean norm for vectors and the induced 2-norm for matrices, where the latter equals the largest singular value of the matrix. 

Most of the geometric objects discussed herein are embedded in $\cF\times \rF$, which is isomorphic to the 3-dimensional real space $\rF^3$ and treated as a real inner product space endowed with the inner product $\iprod((u,p))((v,q))=\tRe(\cj(u) v) + pq$. 
The three axes of $\cF\times\rF$ are referred to as the $\mathrm{Re}$-, $\mathrm{Im}$-, and $\nu$-axes, with the $\mathrm{Re}$-$\mathrm{Im}$ plane taken as horizontal and the $\nu$-axis as vertical.
The brace notation $(z,r)$ and column vector notation $\left[\begin{smallmatrix}
    z \\ r
\end{smallmatrix}\right]$ are used interchangeably to denote elements of $\cF\times \rF$, where the latter is invoked when a linear transformation on the element arises.
A hyperplane $\plane(n,x_0)$ is uniquely determined by its normal vector $n$ and a point $x_0$ on the plane: $\plane(n,x_0):=\{x:\iprod(n)(x-x_0) = 0\}$. We assign special notations to two particular classes of hyperplanes in $\cF\times \rF$: the vertical ones $\plane[\theta,d]:= \plane((e^{\ii*\theta},0),(de^{\ii*\theta},0))$ where $\theta \in \rF, d \in \nnreals$, and the horizontal ones $\plane[\gamma] = \plane((0,-1),(0,\gamma))$ where $\gamma \in \nnreals$. $\parab[a]\,(a\geqslant0)$ denotes a paraboloid in $\cF\times \rF$ that passes through $(a,1)$: $\parab[a] = \{(z,\nu): |z|^2 = a^2 \nu\}$.
For a real subset $I\subseteq \rF$, $\cone I$ is the punctured cone $\{z: |z|\in\poreals, \angle z \in I\}$ and $\cone* I = \cone I \cup \{0\}$. 
Given $\alpha,\beta\in\rF$ with $\beta-\alpha\in[0,2\pi]$, $\seg[\alpha,\beta]$ denotes the circular segment of the unit disk whose bisector ray has argument $(\alpha+\beta)/2$ and whose central angle is $\beta-\alpha$. We also refer to the argument of the bisector ray as the bisector angle for brevity.

For a set $S\subseteq \cF[H]\times \rF$ where $\cF[H]$ is a real vector space, its epigraph $\epi(S)$ refers to the Minkowski sum $S+(\{0\}\times \nnreals)$.
We denote by $\sbound(S)$ and $\sint(S)$ the boundary and interior of $S$, respectively. The set $S\scomp$ refers to the complement set of $S$ with respect to its ambient space. The convex, conic, and paraboloidal hulls of $S$ are defined as follows:
\begin{align*}
    \convhull(S) &:= \{\lambda u + (1-\lambda)v: u, v \in S, \lambda \in [0,1]\},\\
    \conihull(S) &:= \{\lambda v: v \in S, \lambda \in \poreals\},\\
    \parahull(S) &:= \{(\lambda x, \lambda^2\nu): (x,\nu)\in S, \lambda \in \poreals\}.
\end{align*}

\section{Preliminaries on the Davis-Wielandt Shell}
\label{sec:dw-shell-prelim}
The notion of DW shell (of a matrix, operator, or relation) traces back to the seminal papers by Wielandt \cite{wielandtEigenvaluesSumsNormal1955} and Davis \cite{davisShellHilbertspaceOperator1968,davisShellHilbertspaceOperator1970}. Since then, a few follow-up works have appeared in the linear algebra community, see \cite{liDavisWielandtShellsOperators2008,liEigenvaluesSumMatrices2008} for example. Meanwhile, on the control side, its special form when the transfer matrix is frequencywise normal appears in \cite{jonssonScalableRobustStability2010}, although the name was not used explicitly there. More recently, it was introduced and applied to the analysis of general large-scale networks \cite{lestasNetworkStabilityGraph2011} and then to communication networks in \cite{zhangLocalStabilityCongestion2025}. It was also interpreted as an effective tool that mixes the multivariable gain and sectorial phase information \cite{zhaoWhenSmallGain2022}, and was utilized to study a series of robust stability analysis problems such as MIMO gain\,/\,phase margin problems\cite{srazhidinovComputationPhaseGain2023}, and the sectored disk problem\cite{liangFeedbackStabilityMixed2025}.

\begin{defn}[{Davis-Wielandt Shell}] \label{defn:dw-shell}
    Given a matrix $A \in \cF^{n\times n}$, its Davis-Wielandt (DW) shell is a subset of $\cF\times \nnreals$, which is defined as
    \begin{align}
        \label{eq:dw-defn}
        \dwshell(A):=& \left\{\left(\frac{\iprod(x)(y)}{\|x\|^2},\frac{\|y\|^2}{\|x\|^2}\right): 
        \begin{bNiceMatrix}
            x \\ y
        \end{bNiceMatrix} \in \gph(A), x\neq 0\right\} \\
        =& \left\{(x\ct A x, \|Ax\|^2): \|x\|=1\right\}. \nonumber
    \end{align}
\end{defn}
\begin{remark} \label{rem:davis-defn}
    Davis's original definition \cite[Eq.~1.3]{davisShellHilbertspaceOperator1968} is a 3-D body always contained in the unit ball centered at the origin. His definition can be mapped to (\ref{defn:dw-shell}) via a bijective linear fractional function defined on the unit ball. It is unclear to us when and why the definition was altered. We adopt the form in (\ref{eq:dw-defn}) since its connection to lower-dimensional sets found in recent literature is more direct. Nonetheless, all properties and results presented here can be translated into Davis's unit ball framework, where elegant symmetries between 0 (south pole) and $\infty$ (north pole), as well as between Davis's shell and its inverse shell, naturally arise.
\end{remark}

The DW shell exhibits nice geometric properties while encompassing useful information of a matrix. In the following lemma, we summarize essential properties for building up the geometric intuition that we require in this paper.
\begin{lemma}[Basic Properties of DW Shells] \label{lem:dw-shell-basics}
    Given a matrix $A\in\cF^{n\times n}$, it holds true that
    \begin{enumerate}
        \item $\dwshell(A)$ is always compact (i.e., closed and bounded).
        \item $\dwshell(A)$ is convex when $n=1$, or $n\geqslant3$. When $n=2$, $\dwshell(A)$ is an ellipsoid, possibly degenerated. 
        \item $\dwshell(A)$ is always contained in $\epi(\parab[1])$. Moreover, $\dwshell(A)\cap \parab[1] = \{(\lambda,|\lambda|^2): \lambda\in \Lambda(A)\}$.
        \item $\dwshell(A)$ is the convex hull of $\dwshell(A)\cap \parab[1]$ if and only if $A$ is normal.
    \end{enumerate}
\end{lemma}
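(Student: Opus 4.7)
The plan is to handle the four claims in turn, using compactness of the sphere, the Cauchy-Schwarz inequality, the spectral theorem, and a cited convexity result of Au-Yeung and Poon. For claim 1, the map $x\mapsto(x\ct A x,\|Ax\|^2)$ from $\cF^n$ to $\cF\times\nnreals$ is continuous, and the unit sphere $\{x:\|x\|=1\}$ is compact, so its image $\dwshell(A)$ is compact. For claim 3, Cauchy-Schwarz applied to the pair $(x,Ax)$ gives $|x\ct A x|^2\leqslant\|x\|^2\|Ax\|^2$; restricting to $\|x\|=1$ yields $|z|^2\leqslant\nu$ for every $(z,\nu)\in\dwshell(A)$, which is the defining inequality of $\epi(\parab[1])$. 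Equality in Cauchy-Schwarz holds iff $Ax$ is a scalar multiple of $x$, i.e., iff $x$ is a unit eigenvector with eigenvalue $\lambda$, in which case the image point is $(\lambda,|\lambda|^2)$. This gives both inclusions of $\dwshell(A)\cap\parab[1]=\{(\lambda,|\lambda|^2):\lambda\in\Lambda(A)\}$.

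For claim 2, the case $n=1$ is a single point and so is trivially convex. For $n=2$, I would parameterize unit $x$ by two real angles (after modding out a global phase, which drops out of both coordinates of the defining map), insert into the map, and recognize the image after trigonometric simplification as an affine image of $S^2$, i.e., a (possibly degenerate) ellipsoid in $\cF\times\rF\cong\rF^3$. For $n\geqslant 3$, convexity of $\dwshell(A)$ is the 3-D analogue of the Toeplitz-Hausdorff theorem, established by Au-Yeung and Poon; I would simply cite this rather than reproduce its dimension-counting argument.

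For claim 4, the forward direction follows directly from the spectral theorem: writing a normal $A=\sum_i\lambda_iv_iv_i\ct$ with orthonormal $\{v_i\}$ and a unit $x$ with coordinates $c_i=v_i\ct x$, a direct calculation gives $(x\ct A x,\|Ax\|^2)=\sum_i|c_i|^2(\lambda_i,|\lambda_i|^2)$ with $\sum_i|c_i|^2=1$, so every DW shell point is a convex combination of eigenvalue points on $\parab[1]$; the reverse inclusion is attained by choosing $c_i=\sqrt{t_i}$ for any prescribed convex weights $\{t_i\}$. For the converse, by claim 3 the hypothesis reads $\dwshell(A)=\convhull(\dwshell(A)\cap\parab[1])$ with $\dwshell(A)\cap\parab[1]=\{(\lambda,|\lambda|^2):\lambda\in\Lambda(A)\}$; projecting both sides onto the first coordinate and using that projection commutes with the convex hull operation yields $\numran(A)=\convhull(\Lambda(A))$, which is a classical characterization of normal matrices.

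The main obstacle is the $n\geqslant 3$ convexity in claim 2, whose full proof is non-elementary; I would cite Au-Yeung-Poon rather than redo it. A secondary technical point is the converse of claim 4, which leans on the classical (but non-trivial) fact that $\numran(A)=\convhull(\Lambda(A))$ characterizes normal matrices; this can in turn be traced back, via Schur triangularization and the elliptical range theorem, to the $2\times 2$ case.
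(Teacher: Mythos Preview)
The paper does not prove this lemma at all; it simply cites Li--Poon--Sze for 1)--3) and Grone--Johnson--Sa--Wolkowicz (condition~(69)) for 4). Your proposal therefore goes beyond the paper, and your arguments for 1), 3), the $n\geqslant 3$ case of 2) (identifying $\dwshell(A)$ with the joint numerical range of the Hermitian triple $\hermp(A),\shermp(A),A\ct A$ and invoking Au-Yeung--Poon is exactly the right move), and the forward direction of 4) are all sound.

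The converse of 4), however, has a real gap. The implication ``$\numran(A)=\convhull(\Lambda(A))\Rightarrow A$ normal'' that you invoke is \emph{false} beyond the $2\times 2$ case. Take
\[
A=\blkdiag{\left[\begin{smallmatrix}0&2\\0&0\end{smallmatrix}\right],\,3,\,3\ii*,\,-3,\,-3\ii*}\in\cF^{6\times 6}.
\]
The nilpotent block contributes the closed unit disc to the numerical range, which sits inside the square $\convhull(\{3,3\ii*,-3,-3\ii*\})$; hence $\numran(A)=\convhull(\Lambda(A))$, yet $A$ is not normal. Projecting the DW hypothesis onto the first coordinate discards exactly the information that distinguishes this $A$ from a normal matrix.

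A repair that stays close to your outline is to keep the $\nu$-coordinate. The map $(z,\nu)\mapsto(z-c,\ \nu-2\tRe(\cj(c)z)+|c|^2)$ carries $\dwshell(A)$ affinely onto $\dwshell(A-cI)$, so the hypothesis propagates to every translate: $\dwshell(A-cI)=\convhull(\{(\lambda_i-c,|\lambda_i-c|^2)\})$ for all $c\in\cF$. Reading off the minimum of the $\nu$-coordinate then yields $\svmin(A-cI)=\min_i|\lambda_i-c|$ for every $c$, and \emph{this} pseudospectrum identity is a genuine characterisation of normality --- it is among the equivalent conditions catalogued in the Grone et~al.\ survey the paper cites.
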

\begin{remark} \label{rem:pf-basic-dw}
    Properties 1)\,-\,3) can be found in \cite{liDavisWielandtShellsOperators2008}, and Property 4) is hidden in \cite[Cond.~(69)]{groneNormalMatrices1987}.
\end{remark}

It is often handy to link the DW shell of some transformations of $A$ with $\dwshell(A)$ itself.
Notably, we have $\dwshell(A)=\dwshell(A\tp)$, whereas $\dwshell(A)$ and $\dwshell(A\ct)$ are symmetric about the $\mathrm{Re}$-$\nu$ plane \cite[Thm.~3.7]{liDavisWielandtShellsOperators2008}. The DW shell is invariant under unitary similarity, i.e., $\dwshell(U\ct A U)=\dwshell(A)$ for all unitary $U$. Multiplying the matrix by a unimodular scalar corresponds to a rotation of $\dwshell(A)$ around the $\nu$-axis. Specifically, $\dwshell(e^{\ii*\alpha}A)\,(\alpha \in \rF)$ is obtained by rotating $\dwshell(A)$ around the $\nu$-axis counterclockwise for $\alpha$. Since the $\nu$-coordinate of $\dwshell(A)$ is the square of the input-output gain, multiplying $A$ by a positive scalar $\gamma$ does not scale $\dwshell(A)$ linearly, but instead scales it in a paraboloidal fashion, as shown in \cref{fig:parab-scaling}. Note that the first coordinate of $\dwshell(A)$ still scales linearly, thus the vertical projection of $\dwshell(\gamma A)$ onto $\cF$ should coincide with the vertical projection of the linear scaling of $\dwshell(A)$ onto $\cF$.
\begin{figure}[!t]
    \centering
    \subfloat[Paraboloidal v.s. linear scaling\label{fig:parab-scaling}]{
        \makebox[.46\columnwidth][c]{
            \adjincludegraphics[Clip ={0} {.04\height} {0} {0}, height=4.2cm]{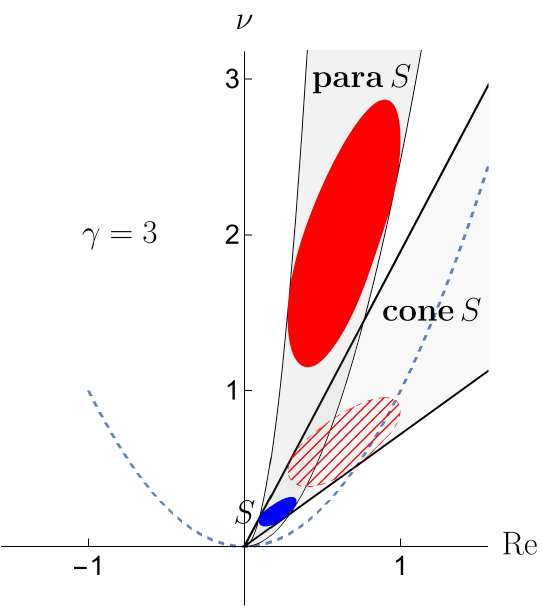}
        }
    }
    \subfloat[{$\pt[3] = \invmap^{2}\pt[0]$}\label{fig:invmap-3d}]{
        \makebox[.52\columnwidth][c]{
            \adjincludegraphics[Clip={0.01\width} {.06\height} {0.03\width} {0}, height=4.2cm]{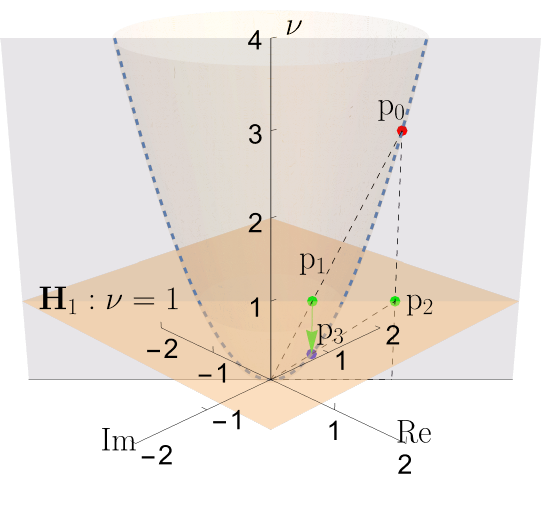}
        }
    }
    \caption{{(a): A slice of the DW shell of a scaled matrix (solid red). (b): Pointwise behavior of $\invmap^{2}$.}} 
    \vspace{-.4cm}
\end{figure}
For an invertible $A$, to relate $\dwshell(A\inv)$ to $\dwshell(A)$, define a mapping $\invmap:\cF\times \poreals \mapsto \cF\times \poreals$:
\begin{align} \label{eq:invmap}
    \invmap (z,\nu):= \left(\frac{\cj(z)}{\nu},\frac{1}{\nu}\right).
\end{align}
Then, $\dwshell(A\inv)$ is the image of $\dwshell(A)$ under $\invmap$. This map is a linear fractional transformation, which preserves convexity \cite[Sec.~2.3.3]{boydConvexOptimization2004}. To see what $\invmap$ does pointwise, we decompose it into a composition of two maps:
\begin{align*}
    (z,\nu) \xlongrightarrow{\invmap^{1}} (\cj(z),\nu) \xlongrightarrow{\invmap^{2}} \left(\frac{\cj(z)}{\nu},\frac{1}{\nu}\right).
\end{align*}
The first map, $\invmap^{1}$, reflects $\dwshell(A)$ across the $\mathrm{Re}$-$\nu$ plane (which actually results in $\dwshell(A\ct)$). As illustrated in \cref{fig:invmap-3d}, taking a point $\pt[0] \in \cF\times \poreals$, we can find its image under $\invmap^{2}$ as follows: 
1)\,draw the line through the origin and $\pt[0]$ and find its intersection with $\plane[1]$, denoted $\pt[1]$; 
2)\,find the vertical projection of $\pt[0]$ on $\plane[1]$, denoted $\pt[2]$; 
3)\,project $\pt[1]$ vertically onto the line connecting the origin and $\pt[2]$, yielding $\pt[3]$.
Then, $\pt[3] = \invmap^{2}\pt[0]$. Notably, $\pt[0],\pt[3]$ always lie on the same paraboloid, i.e., $\parab[a] \cap \plane[\theta,0]$ is $\invmap^{2}$-invariant for any given $a\in \nnreals$ and $\theta \in \rF$.

\begin{table*}[!htbp]
\caption{Definitions of 1-D and 2-D matrix representations of a square matrix $A$.} \label{tab:2d-set-defns}
\centering
\begin{NiceTabular}{@{\hspace{.5em}}l@{}|l@{}|p{11em}}
    \toprule
    1-D / 2-D representation & definition & related system concept \\ \midrule
    \begin{minipage}[t]{11em}
        squared gain \\ interval (1-D) 
    \end{minipage}
        & \begin{minipage}[t]{33em}
            $\displaystyle\gasq(A)=[\svmin(A)^2, \svmax(A)^2]$ with $\displaystyle \svmin(A) := \min \left\{\frac{\|y\|}{\|x\|}:
        \begin{bmatrix}
            x \\ y
        \end{bmatrix} \in \gph(A), x \neq 0\right\}$\\\hspace{2em} and $\displaystyle \svmax(A) := \max \left\{\frac{\|y\|}{\|x\|}: \begin{bmatrix}
                x \\ y
            \end{bmatrix} \in \gph(A), x \neq 0\right\}$
        \end{minipage} 
        & $\mathcal{H}_\infty$ gain  \\ 
    \begin{minipage}[t]{11em}
        numerical range \\
        (2-D) \cite[Sec.~1]{hornTopicsMatrixAnalysis1994} 
    \end{minipage}
        & \begin{minipage}[t]{27em}
            $\displaystyle \nr(A):= \left\{\frac{\iprod(x)(y)}{\|x\|^2} : \begin{bmatrix}
                x \\ y
        \end{bmatrix} \in \gph(A), x\neq 0\right\}$
        \end{minipage} & sectorial phase \cite{chenPhaseTheoryMultiinput2024} \\
    \begin{minipage}[t]{11em}
        scaled relative graph \\ 
        (2-D) \cite{hannahScaledRelativeGraph2016,patesScaledRelativeGraph2021,chaffeyGraphicalNonlinearSystem2023} 
    \end{minipage}
        & \begin{minipage}[t]{31em}
            $\displaystyle\srg(A):= \left\{ \frac{\|y\|}{\|x\|}\exp\left(\pm \ii* \arccos\frac{\mathrm{Re}\iprod(x)(y)}{\|x\|\|y\|}\right):\begin{bmatrix}
                x \\ y
            \end{bmatrix} \in \gph(A), x \neq 0\right\}$
        \end{minipage} 
        & \begin{minipage}[t]{11em} 
            SRG phase \cite{baron-pradaMixedSmallGain2025}\\(a.k.a. singular angle \cite{chenSingularAngleNonlinear2025})
        \end{minipage}\\ 
    \begin{minipage}[t]{11em}
        signed scaled \\ relative graph (2-D)
    \end{minipage}
        & \begin{minipage}[t]{37em}
            $\displaystyle
                \ssg(A):= \left\{\frac{\|y\|}{\|x\|}\exp\left(\ii*\, \mathrm{sign}(\mathrm{Im}\iprod(x)(y)) \arccos\frac{\mathrm{Re}\iprod(x)(y)}{\|x\|\|y\|}\right):\begin{bmatrix}
                    x \\ y
                \end{bmatrix} \in \gph(A), x \neq 0\right\}$
        \end{minipage} 
        & SSG phase \cite{eijndenPhaseScaledGraphs2025} \\ & & \\
    \begin{minipage}[t]{11em}
        normalized numerical \\ range (2-D) 
    \end{minipage}
        & \begin{minipage}[t]{30em} 
            $\displaystyle \nnr(A):=\left\{\frac{\iprod(x)(y)}{\|x\|\|y\|} : \begin{bmatrix}
                x \\ y
            \end{bmatrix} \in \gph(A), y\neq 0 \right\}$
        \end{minipage} 
        & segmental phase \cite{chenCyclicSmallPhase2025}
    \\\bottomrule
\Block[l]{1-3}{\textbullet$\hspace{.5em}\mathrm{sign}(x)$ takes value 1 if $x>0$, $-1$ if $x<0$. When $x=0$, we let $\mathrm{sign}(x)=\pm 1$.}
\end{NiceTabular}\vspace{-.3em}\\
\end{table*}
\begin{figure*}[!t]
    \centering
    \adjincludegraphics[Clip={0} {0} {0.02\width} {0}, width=\textwidth]{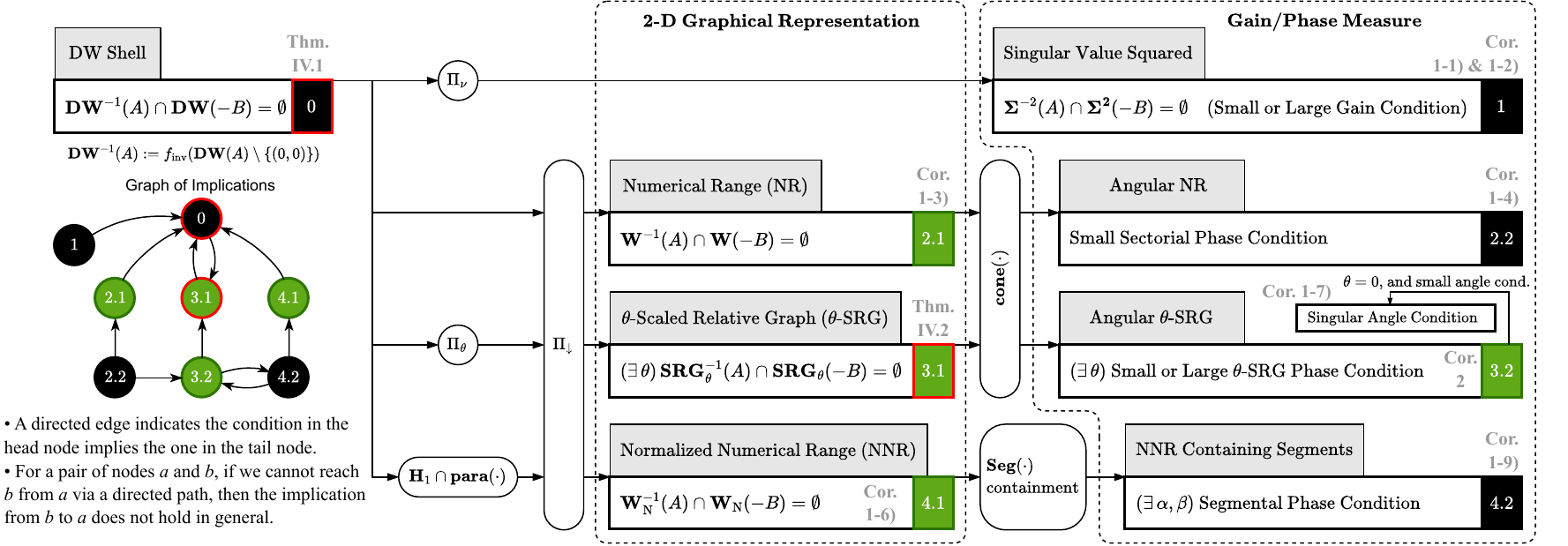}
    \caption{A DW perspective on various graphical representations\slash measures, separation conditions and their relationships: gray boxes depict the graphical representation of a standalone matrix. White labeled boxes (and the corresponding nodes in the graph of implications) represent separation conditions derived from the corresponding graphical representations. Exisiting conditions are labeled in black, and newly derived ones are in green. The least conservative conditions are highlighted with red circles.\label{fig:all-in-one-diagram}}
    \vspace{-.5em}
\end{figure*}

The separation of DW shells of two matrices characterizes the nonsingularity of the unitary orbits formed around these two matrices \cite[Thm.~2.1]{liEigenvaluesSumMatrices2008}:
\begin{lemma}[Nonsingularity of Unitary Orbit] \label{lem:dw-sep}
    Given matrices $A, B\in \cF^{n\times n}$, $A + U\ct B U$ is nonsingular for all unitary $U$ if and only if $\dwshell(A)\cap\dwshell(-B)=\emptyset$. 
\end{lemma}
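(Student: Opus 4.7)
The plan is to recast the singularity of $A + U\ct B U$ as the existence of a unitary sending one ordered pair of vectors to another, and then use the Gram-matrix characterization of such unitaries to land precisely on the DW-shell intersection condition.

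First I would unpack the singularity: $A + U\ct B U$ is singular iff there exists a unit vector $x\in\cF^n$ with $Ax + U\ct B U x = 0$. Setting $y := Ux$, which is also a unit vector since $U$ is unitary, this rearranges to the pair of conditions $Ux = y$ and $U(Ax) = -By$. Hence, $A + U\ct B U$ is singular for some unitary $U$ if and only if there exist unit vectors $x, y \in \cF^n$ and a unitary $U \in \cF^{n \times n}$ that simultaneously realizes $x \mapsto y$ and $Ax \mapsto -By$.

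Next I would invoke the elementary fact that a unitary on $\cF^n$ realizing a prescribed assignment between two ordered tuples exists if and only if the two tuples have matching Gram matrices. For the tuples $(x, Ax)$ and $(y, -By)$ with $\|x\| = \|y\| = 1$, the Gram matrices agree if and only if
\[
x\ct A x = y\ct(-B)y \quad \text{and} \quad \|Ax\|^2 = \|{-B}y\|^2 ,
\]
which is precisely the statement that $(x\ct A x, \|Ax\|^2)$ is a common element of $\dwshell(A)$ and $\dwshell(-B)$. Chaining the two equivalences gives the lemma.

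The main obstacle to handle carefully is the sufficiency direction: matching Gram matrices only yield an isometry between the subspaces $\mathrm{span}\{x, Ax\}$ and $\mathrm{span}\{y, -By\}$, so one must extend it to a unitary on the full space $\cF^n$. This extension is routine---pick orthonormal bases of the two orthogonal complements, which have equal dimension since the Gram matrices share the same rank, and map one basis to the other---but it is the only step that requires any care. Once the extended unitary $U$ is fixed, applying $U\ct$ to $U(Ax) = -By = -B(Ux)$ yields $Ax = -U\ct B U x$, so $(A + U\ct B U)x = 0$ witnesses the singularity, closing the equivalence.
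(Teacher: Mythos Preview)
Your proof is correct. The paper does not prove this lemma itself---it simply cites it as \cite[Thm.~2.1]{liEigenvaluesSumMatrices2008}---so there is no in-paper proof to compare against directly. That said, the paper does prove its extension, \cref{lem:dw-sep-ext}, in Appendix~\ref{app:proof-dw-sep-ext}, and the argument there rests on the same underlying idea: reduce singularity to the existence of unit vectors with matching inner-product data, then build the unitary. The paper carries this out by explicit Gram--Schmidt decomposition of $x$ against $Ax/\|Ax\|$ and of $By$ against $y$, and then hand-picks a unitary aligning the resulting orthonormal pairs; your abstract invocation of the Gram-matrix criterion is a cleaner packaging of exactly the same step, with the same extension-to-the-full-space issue handled in the same way.
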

Since the DW shell is either a convex body or the boundary of a convex set, the condition in \cref{lem:dw-sep} is equivalent to the existence of a separating hyperplane in $\cF\times \rF$ that separates $\dwshell(A)$ and $\dwshell(-B)$ \cite[Cor.~11.4.2]{rockafellarConvexAnalysis1970}.

\section{Phantom(s) of the DW Shell: Connections and Implications} \label{sec:dw-phantom}
The results in Sections~\ref{sec:dw-phantom} and \ref{sec:sep-cond} are summarized in \cref{fig:all-in-one-diagram}.
This figure illustrates the connection between DW shells and lower-dimensional representations (gray boxes), including various 2-D sets, such as numerical range and SRG, and gain / phase measures. 
From this perspective, in this section we introduce the concept of $\theta$-SRGs and study their basic properties.
In \cref{sec:sep-cond}, we then provide a general DW shell-based characterization of the nonsingularity of $I+AB$ and show its equivalence with the $\theta$-SRG condition. Furthermore, we elaborate on how various existing conditions (white boxes in \cref{fig:all-in-one-diagram}) can be recovered along the derivational tree rooted in the DW shell and uncover hidden implications among them.

\subsection{Shadow Interpretations and $\theta$-SRG}\label{subsec:shadow-interpret}
\begin{figure*}[t]
    \centering
    \subfloat[{$\proj[\nu]$}]{\makebox[.2\textwidth][c]{\includegraphics[height=2.5cm]{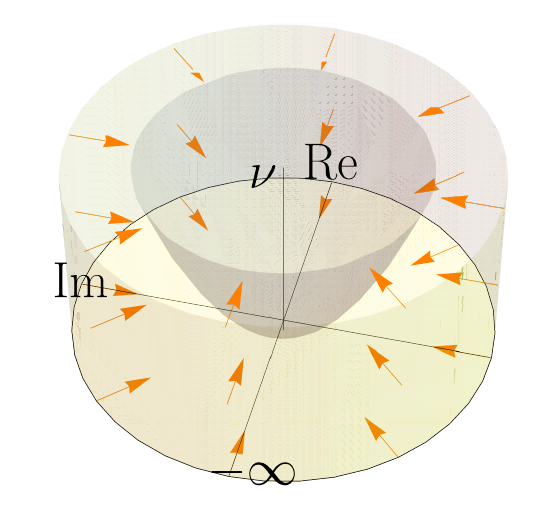}}}
    \subfloat[{$\projdn$}]{\makebox[.2\textwidth][c]{\includegraphics[height=2.5cm]{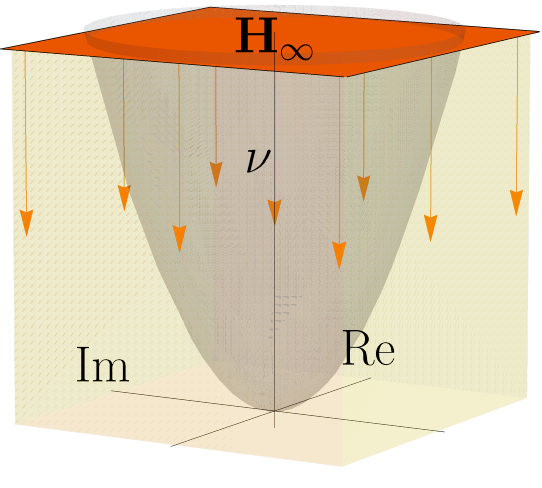}}}
    \subfloat[{$\projin$}]{\makebox[.25\textwidth][c]{\includegraphics[height=2.5cm]{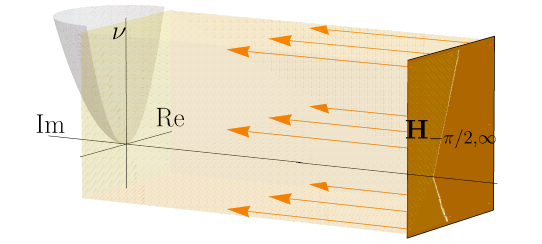}}}
    \subfloat[{The first projection in obtaining SSG}]{\makebox[.3\textwidth][c]{\includegraphics[height=2.5cm]{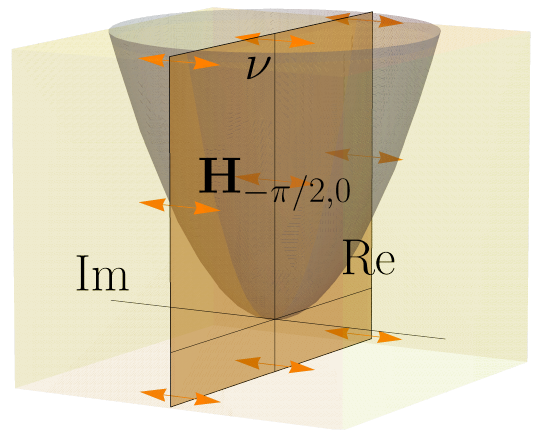}}}
    \caption{Optical configurations for seeing the gain interval, numerical range, SRG, SSG as shadows of the DW shell.\label{fig:optical-config}}
\end{figure*}
We have compiled the definitions of singular value-based gain and, to the best of our knowledge, all existing 2-D sets in use in \cref{tab:2d-set-defns}. 
These sets can all be understood, in a certain sense, as `shadows' of the DW shell.
To this end, we define the following mappings on the domain $\epi(\parab[1])$:
\begin{align*} 
    &\projdn (z,r) := z,\quad \proj[\nu] (z,r) := r, \\
    &\projin (z,r) := (\tRe(z) + \ii* \sqrt{r - \tRe(z)^2}, r).
\end{align*}
When considering SRG we can, due to its inherent symmetry about the real axis, focus on only half of it in the upper (or lower)%
\footnote{The real axis itself is included in the upper or lower half plane.}
half plane of $\cF$, which we denote by $\srg*[PI](A)$ (or $\srg*[NI](A)$). The relationships below are straightforward from definitions:
\begin{prop} \label{prop:dw-low-dim-set}
    For a given $A \in\cF^{n\times n}$, the following equalities hold: 
    $\gasq(A) = \proj[\nu]\,\dwshell(A)$, $\numran(A) = \projdn\,\dwshell(A)$, $\nnr(A)  = \projdn \left((\parahull(\dwshell(A)) )\cap \plane[1]\right)$, and $\srg*[PI](A) = (\projdn\circ\projin)\,\dwshell(A)$.
\end{prop}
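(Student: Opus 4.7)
The plan is to verify each of the four equalities by direct unpacking of the definitions in \cref{defn:dw-shell} and Table~\ref{tab:2d-set-defns}, exploiting that the DW coordinates and every 2-D set listed are homogeneous in the pair $(x,y)\in\gph(A)$. In particular, the equivalent homogeneous form of $\dwshell(A)$ given in (\ref{eq:dw-defn}) lets me move freely between the normalization $\|x\|=1$ and an arbitrary nonzero input, which is precisely what aligns the DW projections with the homogeneous definitions on the right-hand sides.

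Equalities (1) and (2) are essentially one-line computations. The projection $\proj[\nu]$ extracts the second coordinate of $\dwshell(A)$, giving $\{\|Ax\|^2:\|x\|=1\}$, which is the image of the compact connected unit sphere under a continuous map and hence the closed interval $[\svmin(A)^2,\svmax(A)^2]=\gasq(A)$. Symmetrically, $\projdn$ extracts the first coordinate, immediately producing $\numran(A)=\{x\ct A x:\|x\|=1\}$.

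For (4), I would first check that $\projin$ is well-defined on $\dwshell(A)$: by Property~3) of \cref{lem:dw-shell-basics} we have $\dwshell(A)\subseteq \epi(\parab[1])$, so every $(z,\nu)\in\dwshell(A)$ satisfies $\nu\geqslant|z|^2\geqslant(\tRe z)^2$ and the square root in the definition of $\projin$ is real. Then for $(z,\nu)=(x\ct A x,\|Ax\|^2)$ with $\|x\|=1$, the point $(\projdn\circ\projin)(z,\nu)=\tRe(x\ct A x)+\ii*\sqrt{\|Ax\|^2-\tRe(x\ct A x)^2}$ has modulus $\|Ax\|$ and argument $\arccos\bigl(\tRe(x\ct A x)/\|Ax\|\bigr)\in[0,\pi]$. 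Comparing with the SRG definition with the pair $(x,Ax)\in\gph(A)$ and $\|x\|=1$, this is exactly the upper-half-plane SRG element attached to $x$. Homogeneity of both sides in $x$ then yields the claimed equality with $\srg*[PI](A)$; the only edge case to mention is that inputs with $Ax=0$ contribute $0$ to $(\projdn\circ\projin)\,\dwshell(A)$ and must be consistently treated (in line with the $y\neq 0$ restriction used in the table).

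The one identity requiring a little more care is (3). The key geometric observation is that for $(z,\nu)\in\dwshell(A)$ with $\nu>0$, the paraboloidal ray $\{(\lambda z,\lambda^2\nu):\lambda>0\}$ meets $\plane[1]$ at the single point $(z/\sqrt{\nu},1)$, obtained by setting $\lambda=1/\sqrt{\nu}$. Substituting $(z,\nu)=(x\ct A x,\|Ax\|^2)$ with $\|x\|=1$ and $Ax\neq 0$ gives $z/\sqrt{\nu}=\iprod(x)(Ax)/\|Ax\|=\iprod(x)(y)/(\|x\|\|y\|)$, and the $\projdn$-image is therefore $\{\iprod(x)(y)/(\|x\|\|y\|):(x,y)\in\gph(A),\,y\neq 0\}=\nnr(A)$ by homogeneity in $x$. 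The remaining bookkeeping is to observe that the degenerate stratum $\nu=0$ of $\dwshell(A)$ contributes only points on the $\nu$-axis to $\parahull(\dwshell(A))$, which are disjoint from $\plane[1]$ and thus correctly excluded. None of the four items presents a deep obstacle; the main care is in (3) (degenerate fiber at $\nu=0$ and the paraboloidal scaling) and in (4) (well-definedness of $\projin$), both of which are handled by \cref{lem:dw-shell-basics}.
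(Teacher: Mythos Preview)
Your proof is correct and proceeds exactly as the paper intends: the proposition is introduced with the phrase ``The relationships below are straightforward from definitions,'' and no further argument is supplied, so your direct unpacking of \cref{defn:dw-shell} and Table~\ref{tab:2d-set-defns} is precisely the intended route. Your explicit treatment of the boundary cases (the $Ax=0$ contribution in the SRG identity and the $\nu=0$ stratum in the $\nnr$ identity) in fact supplies more detail than the paper itself provides.
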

Except for $\nnr(A)$, all other sets can be intuitively interpreted as projections of $\dwshell(A)$ under specific optical configurations as shown in \cref{tab:optical-configs}, \cref{fig:optical-config} and \cref{fig:allInOne}.
\begin{table}[t]
\centering
\caption{Optical configurations.\label{tab:optical-configs}}
\begin{NiceTabular}{|l@{\hspace{.6em}}l@{\hspace{.6em}}l|}
\hline 
set & light source & screen \\ \hline\hline
$\gasq(A)$ & cylindrical surface & $\nu$-axis \\
$\nr(A)$ & $\plane[\infty]$ & $\cF$ \\
$\srg*[PI](A)$ & $\plane[-\pi/2,\infty]$ &  \Block{1-1}{first $\parab[1] \cap (\cF_{\mathrm{PI}}\times \nnreals)$, then $\cF$}\\
$\ssg(A)$ & $\plane[-\pi/2,0]$ & \Block{1-1}{first $\parab[1]$, then $\cF$} \\
$\srg[\theta+](A)$ & $\plane[\theta-\pi/2,\infty]$ & \Block{1-1}{first $\parab[1]\cap(\cF_{\theta+}\times \nnreals)$, then $\cF$} \\\hline
\end{NiceTabular}\vspace{.3em}
\begin{minipage}{\linewidth} 
    \raggedright 
    $\ast$\,We assume that both sides of the light source emit light rays orthogonal to the surface to which the light source is aligned.
\end{minipage}
\vspace{-.6cm}
\end{table} 
In particular, the SRG is obtained through a two-step projection of the DW shell: first onto a paraboloidal screen, then down to the complex plane. The non-convexity of SRG stems from the first projection $\projin$, which uses the paraboloidal surface $\parab[1]$ to capture the shadow. Nonetheless, since the DW shell is generally convex and the second projection $\projdn$ builds up a one-to-one correspondence between $\parab[1]$ and $\cF$, this non-convexity is manageable (cf.~\cref{sec:numerical-theta-srg}). 
Furthermore, the SSG can be obtained by shifting the planar light source along the imaginary axis from $\infty$ to the $\mathrm{Re}$-$\nu$ plane.

\begin{figure}[!h]
    \centering
    \subfloat[Overview]{
        \adjincludegraphics[Clip={0.01\width} {.05\height} {0.01\width} {0}, height=4cm]{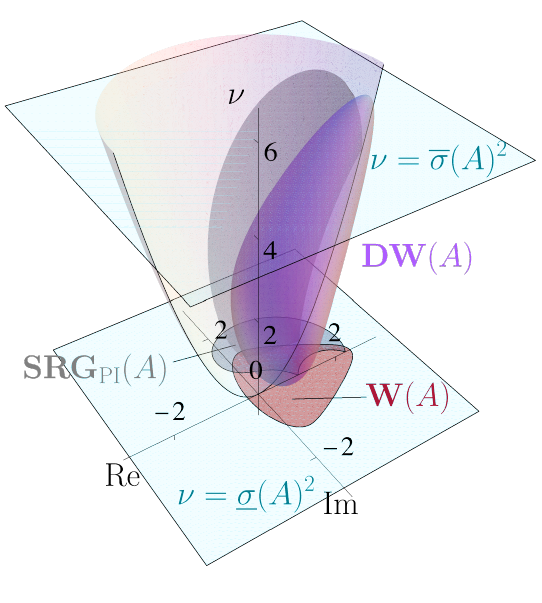}
    }
    \subfloat[Side view]{
        \adjincludegraphics[Clip={0} {.01\height} {0} {0}, height=4cm]{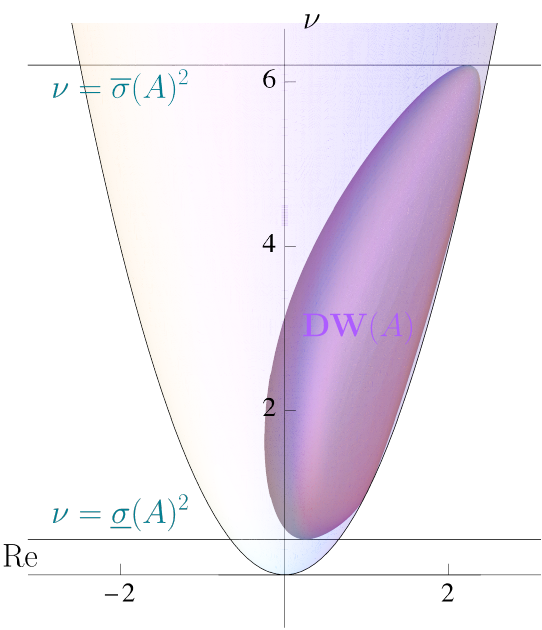}
    }\\
    \subfloat[Top view]{
        \adjincludegraphics[Clip={0} {0} {0} {0}, width=4cm]{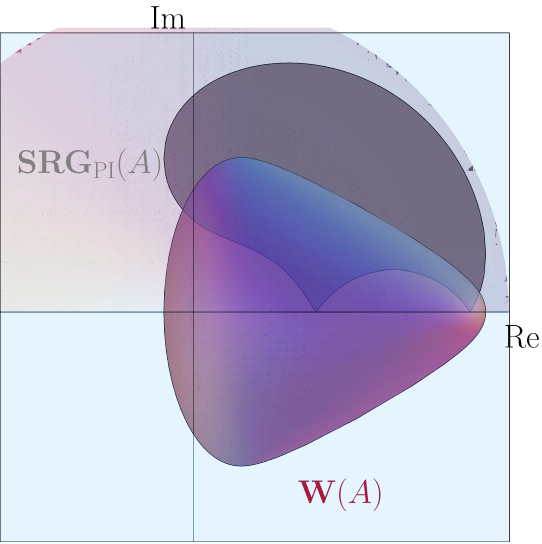}
    }
    \subfloat[SRG and $\theta$-SRG]{\label{fig:dw-srg-theta}
        \includegraphics[height=4cm]{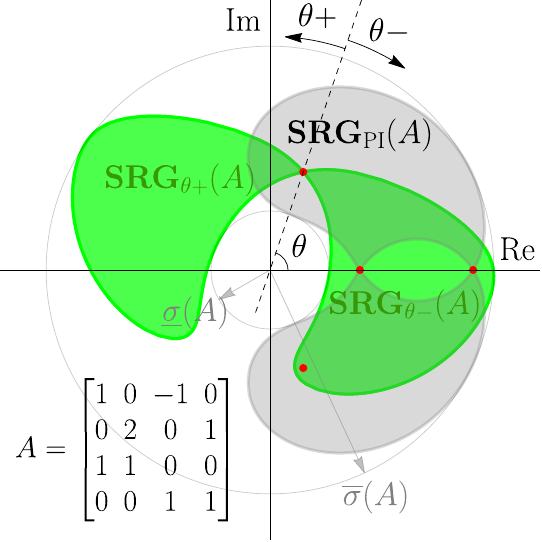}
    }
    \caption{(a)-(c): Observe squared gain interval (side view), numerical range (top view), and scaled relative graph (side and top views combined) from the DW shell. (d): SRG and $\theta$-SRG of a real matrix $A$ (example drawn from \cite{patesScaledRelativeGraph2021}).} \label{fig:allInOne}
    \vspace{-.4cm}
\end{figure}

Inspired by this shadow-based interpretation, we propose a generalization of the SRG by rotating the planar light source at $\infty$ around the $\nu$-axis (see the last row in \cref{tab:optical-configs}): For a given $\theta\in\rF$, we define
\begin{align}
    \srg[\theta](A):= e^{\ii*\theta} \srg(e^{-\ii*\theta}A). \label{eq:defn-theta-srg} 
\end{align}
We refer to \cref{eq:defn-theta-srg} as the $\theta$-SRG of $A$ hereafter. Clearly, $\srg[0](A)$ coincides with the usual $\srg(A)$. Like the usual SRG, the $\theta$-SRG (see \cref{fig:dw-srg-theta} for an illustration) is a 2-D set inherently symmetric about the $\theta$-axis, that is, the line $e^{\ii*\theta} \rF$. We denote by $\srg[\theta+](A)$ and $\srg[\theta-](A)$ the portions of $\srg[\theta](A)$ that lie within half planes $\cF_{\theta+}:=e^{\ii*\theta}\cF_{\mathrm{PI}}$ and $\cF_{\theta-}:=e^{\ii*\theta}\cF_{\mathrm{NI}}$, respectively. It therefore suffices to focus on $\srg[\theta+](A)$ only.
For the ease of description, we define the $\theta$-projection as
$\proj[\theta](z,r):= \mathrm{diag}\{e^{\ii*\theta},1\} \projin(e^{-\ii*\theta}z,r)$, which represents the first projection described in the last row of \cref{tab:optical-configs}, and note that 
\begin{align*}
    \srg[\theta+](A)&=(\projdn\circ\proj[\theta])\dwshell(A),\\
    \srg[\theta-](A)&=(\projdn\circ\proj[\theta+\pi])\dwshell(A).
\end{align*}

\subsection{Basic Properties of $\theta$-SRGs}
Many basic properties of $\theta$-SRGs can be derived from those of standard SRGs, which are already well established in the literature \cite{patesScaledRelativeGraph2021,ryuScaledRelativeGraphs2022,chaffeyGraphicalNonlinearSystem2023,huangScaledRelativeGraph2024,baron-pradaStabilityResultsMIMO2025}. However, we here provide alternative proofs of these results that solely rely on the graphical understanding of DW shells, as outlined in \cref{sec:dw-shell-prelim}, covering the SRG as a particular case.
\begin{prop} \label{prop:theta-srg-basic}
    For a given $A\in \cF^{n\times n}$ and any $\theta \in\rF$, it holds true that 
    \begin{enumerate}
        \item $\pi$-periodicity in $\theta$: $\srg[\theta](A)= \srg[\theta+\pi](A)$.
        \item Variation under (conjugate) transpose, scalar multiplications (rotation and scaling) and unitary similarity: 
        \begin{alignat*}{2}
            &\srg[\theta](A)                = \srg[\theta](A\tp)                      =\cj(\srg[-\theta](A\ct))&&;\\ 
            &\srg[\theta](e^{\ii*\alpha}A)  = e^{\ii*\alpha} \srg[\theta-\alpha](A)   &&\hspace{-2em}\text{ for }\alpha\in\rF;\\ 
            &\srg[\theta](\gamma A)         = \gamma \srg[\theta](A)                  &&\hspace{-2em}\text{ for }\gamma \in \nnreals; \\
            &\srg[\theta](U\ct* A U)        = \srg[\theta](A)                         &&\hspace{-2em}\text{ for unitary }U.    
        \end{alignat*}
        \item Spectrum containment:
        $\srg[\theta+](A) \cap e^{\ii*\theta}\rF = \eigs(A) \cap e^{\ii*\theta}\rF$.
        The set $\{(\lambda,|\lambda|^2):\lambda \in \Lambda(A)\cap \cF_{\theta+}\}$ collects all points in $\dwshell(A)$ that are $\proj[\theta]$-invariant, while $\{(\lambda,|\lambda|^2):\lambda \in \Lambda(A)\cap \cF_{\theta-}\}$ collects all points in $\dwshell(A)$ that are $\proj[\theta+\pi]$-invariant. As a consequence, the spectrum containment $\Lambda(A)\subseteq \srg[\theta](A)$ always holds. Furthermore, we have
        \begin{align*}
            \Lambda(A) = \bigcap_{\theta\in[0,\pi)} \srg[\theta](A).
        \end{align*} 
        \item The set $\srg[\theta+](A)$ is simply connected. For any two distinct points $z_1, z_2 \in \srg[\theta+](A)$, the minor (circular) arc centered on the $\theta$-axis that joins $z_1$ and $z_2$ lies entirely within $\srg[\theta+](A)$.
        \item When $A$ is normal, the boundary $\sbound(\srg[\theta+](A))$ consists of a set of arcs centered on the $\theta$-axis that joins distinct eigenvalues of $A$.
    \end{enumerate}
\end{prop}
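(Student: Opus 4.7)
The plan is to exploit the shadow identities $\srg[\theta+](A)=(\projdn\circ\proj[\theta])\dwshell(A)$ and $\srg[\theta-](A)=(\projdn\circ\proj[\theta+\pi])\dwshell(A)$ displayed just before the proposition. The key observation is that $\proj[\theta]$ depends on $(z,r)$ only through $\tRe(e^{-\ii*\theta}z)$ and $r$, so after an $e^{-\ii*\theta}$-rotation the analysis reduces to $\theta=0$. In that case $\projdn\circ\proj[0]$ factors through the map $\phi(u,r):=u+\ii*\sqrt{r-u^2}$, which is a homeomorphism from $\{(u,r)\in\rF^2:r\geq u^2\}$ onto $\{w\in\cF:\tIm(w)\geq 0\}$ with inverse $\phi^{-1}(w)=(\tRe(w),|w|^2)$. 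Concretely, $\srg[0+](A)=\phi(S)$ with $S:=\{(\tRe(z),r):(z,r)\in\dwshell(A)\}\subseteq\rF^2$ the convex image of $\dwshell(A)$ under the linear projection onto the $\mathrm{Re}$-$\nu$ plane.

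Parts~1--3 are then short. Part~1 is immediate from the shadow viewpoint, since $\proj[\theta+\pi]$ and $\proj[\theta+2\pi]=\proj[\theta]$ merely swap $\srg[\theta+](A)$ and $\srg[\theta-](A)$, leaving their union invariant. For part~2, I would trace the DW shell transformations recorded in \cref{sec:dw-shell-prelim}: transpose invariance, conjugate reflection about the $\mathrm{Re}$-$\nu$ plane, $\nu$-axis rotation under scalar rotation, paraboloidal scaling under positive real scaling, and unitary invariance. Each passes through $\proj[\theta]$ cleanly; for example, rotating $\dwshell(A)$ by $\alpha$ around the $\nu$-axis is equivalent to rotating the light source by $-\alpha$, giving $\srg[\theta](e^{\ii*\alpha}A)=e^{\ii*\alpha}\srg[\theta-\alpha](A)$, while the homogeneity $\phi(\gamma u,\gamma^2 r)=\gamma\phi(u,r)$ for $\gamma\ge 0$ converts paraboloidal scaling of $\dwshell(A)$ into linear scaling of $\srg[\theta](A)$. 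For part~3, a direct check gives $\proj[\theta](\lambda,|\lambda|^2)=(\lambda,|\lambda|^2)$ for $\lambda\in\cF_{\theta+}$ and symmetrically for $\proj[\theta+\pi]$ on $\cF_{\theta-}$; combined with $\dwshell(A)\cap\parab[1]=\{(\lambda,|\lambda|^2):\lambda\in\Lambda(A)\}$ from \cref{lem:dw-shell-basics}, this yields both the invariance claim and the spectrum containment $\Lambda(A)\subseteq\srg[\theta](A)$. The $\theta$-axis intersection claim follows by noting that any preimage $(z',r')\in\dwshell(A)$ of $z\in\srg[\theta+](A)$ lying on the $\theta$-axis must satisfy $r'=\tRe(e^{-\ii*\theta}z')^2$; together with $|z'|^2\le r'$ this forces $(z',r')\in\parab[1]$ and hence $z'=z\in\Lambda(A)$. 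The intersection formula then holds because every nonzero $z$ lies on the $\theta$-axis for exactly one $\theta\in[-\pi/2,\pi/2)$ (namely $\theta\equiv\angle z\pmod\pi$), and if $0$ is in the intersection then $(0,0)\in\dwshell(A)$, i.e., $0\in\Lambda(A)$.

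For parts~4 and~5 I focus on $\theta=0$. Simple connectivity of $\srg[0+](A)=\phi(S)$ is automatic since $\phi$ is a homeomorphism and $S$ is convex. The arc property rests on the identity $r-u^2=R^2-(u-c)^2$ whenever $r=2cu+(R^2-c^2)$: this shows that $\phi$ sends any line in the $(u,r)$-plane to a circular arc in the upper half plane centered at $c\in\rF$ of radius $R$. Consequently, the line segment in $S$ joining the two preimages $(\tRe z_j,|z_j|^2)$ of distinct $z_1,z_2\in\srg[0+](A)$ maps exactly onto the minor arc between them---with the degenerate case $\tRe z_1=\tRe z_2$ producing a vertical segment, interpreted as a circle of infinite radius---and convexity of $S$ places the whole arc in $\srg[0+](A)$. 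For part~5, normality combined with \cref{lem:dw-shell-basics}(4) gives $\dwshell(A)=\convhull(\{(\lambda,|\lambda|^2):\lambda\in\Lambda(A)\})$, so $S$ is a convex polygon whose vertices lie among $\{(\tRe\lambda,|\lambda|^2):\lambda\in\Lambda(A)\}$; its piecewise-linear boundary maps under $\phi$ to arcs centered on the real axis whose endpoints are of the form $\tRe\lambda+\ii*|\tIm\lambda|$, i.e., the eigenvalues reflected into the upper half plane when they happen to lie below it.

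I expect the main obstacle to be the bookkeeping in part~5: a complex-conjugate pair $\lambda,\bar\lambda$ collapses to the single projection $(\tRe\lambda,|\lambda|^2)\in S$, so the vertices of $S$ correspond two-to-one with $\Lambda(A)$, and some care is needed to identify which eigenvalues sit at the ends of which boundary arcs before reflecting into the upper half plane. The only other low-level fiddle is the degenerate vertical case in part~4's arc property, which is cleanest to handle as the infinite-radius limit of the circle-arc construction above.
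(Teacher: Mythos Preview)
Your proposal is correct and runs parallel to the paper's argument, but with an organizational twist worth flagging. For parts~4--5 the paper stays in $\cF\times\rF$: it picks preimages $\pt[1],\pt[2]\in\dwshell(A)$ of $z_1,z_2$, notes that the 3D segment between them lies in a hyperplane parallel to the $(\theta-\pi/2)$-axis, and invokes the geometric fact that any such hyperplane meets $\parab[1]$ in a curve projecting under $\projdn$ to a circle centered on the $\theta$-axis. You instead factor $\projdn\circ\proj[0]=\phi\circ p$, with $p$ the linear projection onto the $\mathrm{Re}$--$\nu$ plane and $\phi(u,r)=u+\ii*\sqrt{r-u^2}$ a homeomorphism onto the closed upper half plane, and then work entirely with the convex 2D set $S=p(\dwshell(A))$. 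The two encode the same identity---a line in the $(u,r)$-plane maps under $\phi$ to a real-centered circular arc---but your factorization makes simple connectivity of $\srg[\theta+](A)$ immediate (it is the homeomorphic image of a convex set), a point the paper asserts but does not separately justify. Your $S$, embedded in $\cF$ as $u+\ii* r$, is exactly the vertical numerical range $\vnumran[\theta](A)$ that the paper introduces only later in \cref{rem:vnumran}, so your route effectively anticipates that remark. The bookkeeping worry you raise about part~5 (eigenvalues in $\cF_{\theta-}$ appearing as their reflections) is real but is a wrinkle in the \emph{statement} rather than the proof; the paper's own one-line treatment of 5) does not address it either.
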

\begin{proof}
    See Appendix~\ref{app:proof-theta-srg-basic}.
\end{proof}
Another interesting observation, resulting from the viewpoint used to show \cref{prop:theta-srg-basic}, is stated below:
\begin{prop}
    For matrices $A, B$ of conformable sizes and any given $\theta\in\rF$, $\srg[\theta](A)\cap \srg[\theta](B) = \emptyset$ if and only if there exists a closed disc $\disc$ centered on the $\theta$-axis such that one of $\srg[\theta](A)$ and $\srg[\theta](B)$ is contained in $\disc$, and the other is contained in $\disc\scomp$.
\end{prop}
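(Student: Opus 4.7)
The plan is to recast the disc-separation question in $\mathbb{C}$ as a linear-separation question in the $\mathrm{Re}$--$\nu$ plane, where the projected DW shells are compact convex, and then invoke the classical separating hyperplane theorem.

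The ``if'' direction is immediate since a closed disc $\mathbf{D}$ and its complement $\mathbf{D}^{\mathsf{c}}$ are disjoint. For the ``only if'' direction, Proposition~\ref{prop:theta-srg-basic}(2) allows us to rotate and assume $\theta = 0$ without loss of generality, and the symmetry of the $\theta$-SRG about the $\theta$-axis reduces the question to the upper halves $\mathbf{SRG}_{0+}(A)$ and $\mathbf{SRG}_{0+}(B)$. By Proposition~\ref{prop:dw-low-dim-set} together with the definition of $\Pi_\otimes$, the map $(x,\nu) \mapsto x + \mathbf{i}\sqrt{\nu - x^2}$ with inverse $z \mapsto (\mathrm{Re}(z), |z|^2)$ is a continuous bijection between the planar projection $\pi_{\mathrm{Re},\nu}\,\mathbf{DW}(\cdot)$ onto the $\mathrm{Re}$--$\nu$ plane and the corresponding upper-half $\theta$-SRG. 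Thus SRG disjointness is equivalent to disjointness of the two planar projections. These are compact convex subsets of $\mathbb{R}^2$ by Lemma~\ref{lem:dw-shell-basics} (for $n=2$ the ellipsoidal-surface case still projects to a filled ellipse), so the classical separating hyperplane theorem yields a strictly separating line $ax + b\nu = d$ with $(a, b) \neq (0,0)$.

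To translate the line back to a disc, I would observe the key identity $|z - c|^2 - c^2 = \nu - 2cx$ for any SRG point $z$ with $(x,\nu) = (\mathrm{Re}(z), |z|^2)$. Hence every line of the form $\nu - 2cx = d'$ pulls back precisely to the boundary circle of $\mathbf{D}(c, \sqrt{d' + c^2})$. When the separating line has $b \neq 0$, I normalize $b = 1$, set $c = -a/2$ and $r = \sqrt{d + c^2}$ (the radicand is nonnegative because $\nu \geq \mathrm{Re}(w)^2$ on any DW shell), which produces the desired disc with one SRG inside and the other outside. When $b = 0$, the separating line is vertical in the $\mathrm{Re}$--$\nu$ plane and corresponds to a half-plane separation in $\mathbb{C}$; boundedness of both $\mathbf{SRG}_0(A)$ and $\mathbf{SRG}_0(B)$ by $\overline{\sigma}$ allows replacing this limiting half-plane with a sufficiently large bounded disc centered far along the real axis.

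The main technical obstacle is the degenerate $b = 0$ case, where one must verify that an actual bounded disc can replace the half-plane separator. This amounts to a short compactness argument exploiting the finiteness of $\|A\|$ and $\|B\|$: taking $c \to -\infty$ with $r$ chosen so that $r - |c|$ tracks the separating real value yields a disc whose boundary uniformly approximates the separating vertical line on the bounded region containing both SRGs. All remaining ingredients---the rotation reduction, the bijection with the planar projection, and the linear separation---are direct consequences of the shadow interpretation established in Proposition~\ref{prop:dw-low-dim-set}.
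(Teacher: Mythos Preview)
Your proof is correct and takes essentially the same route the paper intends. The paper does not spell out a proof for this proposition, stating only that it follows from the viewpoint used for Proposition~\ref{prop:theta-srg-basic}; your reduction to the convex planar shadows in the $\mathrm{Re}$--$\nu$ plane is exactly the vertical-numerical-range picture of Remark~\ref{rem:vnumran}, and your identity $|z-c|^2-c^2=\nu-2cx$ is the 2-D manifestation of the observation, used in Appendix~\ref{app:proof-theta-srg-basic} for property~4), that hyperplanes parallel to the $(\theta-\pi/2)$-axis cut $\mathbf{P}_1$ in curves projecting to circles centered on the $\theta$-axis.
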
 

\section{Separation Conditions and Their Relationships \label{sec:sep-cond}}
The internal stability of a closed-loop system is closely related to the nonsingularity of the form $I+AB$.
Guided by \cref{lem:dw-sep}, we first establish a direct characterization of the nonsingularity of $I+A\Delta_B$, where $\Delta_B$ ranges over the set of matrices unitarily similar to $B$, in terms of the DW shell separation.
Subsequently, we develop a DW perspective on various graphical conditions in lower dimensions. Through this, we discuss the relative conservatism of these conditions. Moreover, we propose a $\theta$-SRG based condition that fully encompasses the DW condition.

These conditions are represented as white boxes in \cref{fig:all-in-one-diagram} and labelled locally therein. Conditions labelled in green are not explicitly spelled out in the literature; notably, conditions in \cref{lem:dw-sep-ext} (0 in \cref{fig:all-in-one-diagram}) and \cref{thm:dw-sep-vnumran} (3.1 in \cref{fig:all-in-one-diagram}) are main contributions of this section. 
The relative conservatism of these conditions is illustrated by a graph of implications in \cref{fig:all-in-one-diagram}, where each node represents a condition and each directed edge indicates that the head condition implies the tail one (given any assumptions imposed by the head condition). Conditions enclosed in red frames are identified as the least restrictive.
Note that the implications from lower-dimensional conditions to higher-dimensional ones are natural, as will be explained in \cref{subsec:sep-thms}; the implications $0\Rightarrow3.1$, $3.2\Leftrightarrow 4.2$, and $2.2\Rightarrow 3.2$ will be shown in \cref{thm:dw-sep-vnumran}, \cref{corol:srg-phase}, and the end of this section, respectively.

\subsection{DW Separation Condition}
Define the inverse DW shell of a square matrix $A$ as:
\begin{align}
    \label{defn:inv-dw}
    \invdwshell(A):=\invmap(\dwshell(A)\setminus{\{(0,0)\}})
\end{align} where $\setminus$ denotes the set difference. We propose the following variant of \cref{lem:dw-sep}, which plays a pivotal role in later developments.
\begin{theorem}[DW Separation Condition] \label{lem:dw-sep-ext}
    For given matrices $A, B\in \cF^{n\times n}$, the matrix $I + AU\ct BU$ is nonsingular for all unitary $U$ if and only if $\invdwshell(A)\cap\dwshell(-B)=\emptyset$.
\end{theorem}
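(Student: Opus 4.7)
The strategy is to imitate the Gram-matching argument that underlies Lemma~\ref{lem:dw-sep}, adapting it from the additive form $A + U\ct BU$ to the multiplicative form $I + AU\ct BU$, and then to re-express the resulting DW-shell membership in terms of $\invdwshell(A)$. Because the construction will never require inverting $A$, singular $A$ should be handled uniformly, aligning cleanly with the exclusion of $(0,0)$ in the definition of $\invdwshell(A)$.

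First, I would perform an algebraic reduction: $I + AU\ct BU$ is singular iff there is a nonzero $x$ with $AU\ct BUx = -x$. Setting $y := U\ct BUx$, this is equivalent to the existence of nonzero vectors $x, y$ with $Ay = -x$ (which automatically forces $y \neq 0$ and $Ay \neq 0$) and a unitary $U$ satisfying $U\ct BUx = y$. The existence of such $U$ is precisely where the DW Gram-matching enters: rewriting the constraint as $B(Ux) = Uy$ and using that unitarity constrains the Gram data of $(Ux, Uy)$ to match that of $(x, y)$, one shows that such a $U$ exists iff $(\iprod(x)(y)/\|x\|^2,\ \|y\|^2/\|x\|^2) \in \dwshell(B)$. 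This is an unpacking of the Gram-matching step behind Lemma~\ref{lem:dw-sep}, and I would cite it rather than re-derive it.

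The remaining work is to eliminate $x$ via the relation $x = -Ay$. The first coordinate of the DW-membership becomes $-(Ay)\ct y/\|Ay\|^2 = -\cj(\iprod(y)(Ay))/\|Ay\|^2$, and the second becomes $\|y\|^2/\|Ay\|^2$. Using $\dwshell(-B) = \{(-c, \nu) : (c, \nu) \in \dwshell(B)\}$ (the $\pi$-rotation about the $\nu$-axis noted in the preliminaries) to absorb the sign, the membership becomes
\[
\bigl(\cj(\iprod(y)(Ay))/\|Ay\|^2,\ \|y\|^2/\|Ay\|^2\bigr) \in \dwshell(-B).
\]
A direct calculation identifies the left-hand side with $\invmap\bigl(\iprod(y)(Ay)/\|y\|^2,\ \|Ay\|^2/\|y\|^2\bigr)$, and its argument lies in $\dwshell(A)\setminus\{(0,0)\}$ precisely because $Ay \neq 0$. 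Letting $y$ range over all nonzero vectors with $Ay \neq 0$, this image traces out all of $\invdwshell(A)$, so the existence of a singularity-inducing $U$ is equivalent to $\invdwshell(A) \cap \dwshell(-B) \neq \emptyset$.

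The main obstacle I foresee is the bookkeeping in the final step---tracking the conjugation inside $\invmap$, the sign flip between $\dwshell(B)$ and $\dwshell(-B)$, and in particular the alignment between the $(0,0)$-exclusion in the definition of $\invdwshell(A)$ and the constraint $Ay \neq 0$ forced automatically by the algebraic reduction. Once that correspondence is pinned down, both directions of the equivalence run simultaneously, and no separate case analysis is needed for singular $A$.
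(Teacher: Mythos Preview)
Your proposal is correct and follows essentially the same route as the paper: both arguments reduce singularity of $I+AU\ct BU$ to the existence of vectors $x,y$ with $Ay=-x$ and $B(Ux)=Uy$, invoke the Gram-matching principle to characterize when such a unitary $U$ exists, and then identify the resulting DW-point with an element of $\invdwshell(A)\cap\dwshell(-B)$. The only differences are organizational---you run both directions as a single iff chain and cite the Gram-matching step as the content behind Lemma~\ref{lem:dw-sep}, whereas the paper treats the two contrapositive directions separately and carries out the explicit Gram--Schmidt construction of $U$ by hand.
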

\begin{proof}
See Appendix~\ref{app:proof-dw-sep-ext}.
\end{proof}
\begin{remark}
    As mentioned earlier, $\invmap$ is a linear fractional transform and preserves the convexity. Thus, $\invdwshell(A)$ is convex when $n\geqslant3$ since $\dwshell(A)\setminus{\{(0,0)\}}$ is convex in this case.
    If $A$ is invertible, $\invdwshell(A)$ is just equal to $\dwshell(A\inv)$. In this case, we have $I+ AU\ct B U = A (A\inv + U\ct B U)$ and the nonsingularity of $I+AU\ct B U$ is equivalent to that of $A\inv + U\ct B U$. Thus, \cref{lem:dw-sep-ext} can be regarded as a corollary of \cref{lem:dw-sep}.
    In case where $A$ is singular, we may use the identities of $\invmap$ discussed following \cref{eq:invmap} to sketch the possibly unbounded $\invdwshell(A)$. In particular, $\invdwshell(0)=\emptyset$ by definition and $\dwshell(-B)\,\cap\,\invdwshell(0)$ is always empty for any given matrix $B$. This agrees with the fact that $I+0 U\ct B U$ is always nonsingular.
\end{remark}

\begin{remark} \label{rem:dw-for-eigen-estimate}
    If $\invdwshell(A)$ and $\dwshell(-B)$ are disjoint, then \cref{lem:dw-sep-ext} implies $-1\notin \eigs(AB)$. This implication provides a way to estimate the spectrum of $AB$, namely, if $\invdwshell(A) \cap \dwshell(-zB)=\emptyset$ for all $z \in S$ where $S\subseteq \cF$ is given, then $\eigs(AB)\subseteq \scomp(-S\inv)$. This can be particularly useful in enforcing the zero winding number requirement in generalized Nyquist criterion for closed-loop stability.   
\end{remark}

\begin{table*}[h]
    \centering
\caption{Inverse 1-D and 2-D representations derived from $\invdwshell(A)$.\label{tab:inv-sets}}
\begin{NiceTabular}{l|l}
    \toprule
    inverse 1-D / 2-D representation & formulas \\ \midrule
    inverse squared gain interval $\invgasq(A)$ 
        & \begin{minipage}[t]{27em}
            $\begin{aligned} \label{eq:inv-gain}
                \proj[\nu] \invdwshell(A) = 
                \begin{cases}
                    [\frac{1}{\svmax(A)^2}, \frac{1}{\svmin(A)^2}] &, A \text{ is nonsingular,} \\
                    [\frac{1}{\svmax(A)^2},\infty) &, A \text{ is singular.}^{\ast}
                \end{cases}
            \end{aligned}$
        \end{minipage} \\ 
    inverse numerical range $\invnr(A)$
        & \begin{minipage}[t]{38.5em}
            $\begin{aligned}
                \projdn \invdwshell(A) = \cj(\projdn ((\conihull(\dwshell(A))) \cap \plane[1])) 
            \end{aligned}$. Let $U$ be an isometry onto the range of $A$,\\
            then $\cj(\projdn ((\conihull(\dwshell(A))) \cap \plane[1])) = \begin{cases}
                    \cF &, \text{if $0$ is a non-normal eigenvalue of $A$,}\\
                    \nr(U\ct A\pinv U) &, \text{otherwise.}
                \end{cases}$
        \end{minipage}\\
    inverse normalized numerical range $\invnnr(A)$
        & \begin{minipage}[t]{30em} 
            $\projdn((\parahull(\invdwshell(A)))\cap\plane[1]) = \cj(\nnr(A)). \vphantom{\frac{\sqrt{a^{a^{a^{a}}}}}{b_{b_{b_{b}}}}}$
        \end{minipage} \\
    inverse $\theta$-scaled relative graph $\invsrg[\theta](A)$
        & \begin{minipage}[t]{31em}
            $\invsrg[\theta+](A) = (\projdn\circ\proj[\theta]) \invdwshell(A) = (\srg[(-\theta)-](A)\setminus\{0\})\inv,$\\
            $\invsrg[\theta](A) = (\srg[-\theta](A)\setminus\{0\})\inv.$
        \end{minipage} 
    \\ \bottomrule
    \Block[l]{1-2}{\textbullet\hspace{.5em}We treat $[\infty,\infty)$ as $\emptyset$.}
\end{NiceTabular}
\vspace{-2em}
\end{table*}

\subsection{The Shadows of Inverse DW Shell}
To relate the DW separation condition in \cref{lem:dw-sep-ext} to existing lower-dimensional conditions, for each lower-dimensional matrix representation, we define its \emph{inverse set} as the set obtained by applying to the inverse DW shell $\invdwshell(A)$ the same mapping used to obtain the representation from $\dwshell(A)$. Precisely, suppose that $\mathbf{S}(A)$ is a 2-D representation of $A$ and it is the image of $\dwshell(A)$ under some mapping $g:\epi(\parab[1]) \mapsto \cF$. Then, the corresponding inverse set, denoted by $\mathbf{S}^{-1}(A)$, is defined as $g(\invdwshell(A))$. 
\begin{remark}
    Readers should distinguish the inverse set from both $\mathbf{S}(A\inv)$, which is the representation associated with $A\inv$ (and requires $A$ to be invertible), and $\mathbf{S}(A)^{-1}$, which denotes the pointwise inverse of $\mathbf{S}(A)$. These three sets are generally different. However, we expect a well-defined inverse set ${\mathbf{S}\inv}(A)$ to coincide with $\mathbf{S}(A\inv)$ when $A$ is invertible.
\end{remark} 

Using the identities of the relevant mappings (including $\invmap$), we summarize in \cref{tab:inv-sets} the explicit formulas for these inverse sets in terms of the representations of the original matrix. Additional explanations are provided in Appendix~\ref{app:proof-inv-set-formula}.

\subsection{Separation Theorems and Their Relationships} \label{subsec:sep-thms}
By simultaneously applying a mapping to $\invdwshell(A)$ and $\dwshell(-B)$, we obtain a family of `projected' versions of the DW separation condition stated in \cref{lem:dw-sep-ext}. These conditions take the form of separations in terms of the gain or 2-D representations of $-B$ and the corresponding inverse sets associated with $A$. Each of these projected conditions serves as a sufficient condition for $\invdwshell(A)\cap\dwshell(B)=\emptyset$ since the mappings to lower-dimensional spaces are all well-defined functions, and separation in codomain implies separation of their preimages.
By using even simpler graphical sets characterized by angular measures to cover the abovesaid 2-D representations and requiring these covering sets to be disjoint, one can further arrive at various phase-type inequalities that ensure the separation of their generating 2-D representations.  

In the following, we collect in \cref{lem:low-dim-conds} a set of sufficient graphical conditions for the nonsingularity of $I+A\Delta_B$, derived by the process described above.
Among them, conditions~3), 6) and 8) are not well-known. 
The undefined phase notions in the corollary will be clarified in the subsequent discussion. There, we also provide a sketch of the proof together with geometric interpretations of all conditions and discuss their interrelations.
\begin{corol} \label{lem:low-dim-conds}
    For $A, B \in \cF^{n\times n}$, $I+AU\ct*BU$ is nonsingular for all unitary $U$ if any of the following conditions holds:
    \begin{enumerate}
        \item \emph{Small gain}: $\svmax(A)\svmax(B)<1$; 
        \item \emph{Large gain}: $\svmin(A)\svmin(B)>1$;
        \item \emph{Numerical range}: $\invnr(A)\cap \nr(-B) = \emptyset$;
        \item \emph{Sectorial phase} \cite{wangPhasesSemiSectorialMatrix2023,chenPhaseTheoryMultiinput2024}: Both $A$ and $B$ are semisectorial, there exists an integer $k$ such that\\ 
        $\begin{aligned}2k\pi - \pi < &\phmin(A)+\phmin(B) \leqslant \\ &\phmax(A)+\phmax(B) < 2k\pi+\pi. \end{aligned}$
        \item \emph{Scaled relative graph} \cite{baron-pradaStabilityResultsMIMO2025,chaffeyGraphicalNonlinearSystem2023}:\\ $(\srg[](A)\setminus\{0\})\inv\cap \srg(-B)=\emptyset$;
        \item \emph{Normalized numerical range}: $\cj(\nnr(A))\cap\nnr(-B)=\emptyset$;
        \item \emph{Small singular angle / SRG phase} \cite{chenSingularAngleNonlinear2025,baron-pradaMixedSmallGain2025}: \\
        $\angmax(A)+\angmax(B) < \pi.$
        \item \emph{Large singular angle}: $\angmin(A)+\angmin(B)>\pi$.
        \item \emph{Segmental phase} \cite{chenCyclicSmallPhase2025}: There exist $\theta(A),\theta(B)\in\rF$ and integer $k$ such that \\
        $\begin{aligned}
        2k\pi-\pi<\sphmin[\theta(A)](A)&+\sphmin[\theta(B)](B) \leqslant\\ \sphmax[\theta(A)](A)&+\sphmax[\theta(B)](B) <2k\pi+\pi.
        \end{aligned}$
    \end{enumerate}
\end{corol}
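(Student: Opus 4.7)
The plan is to route every item through the master equivalence of Theorem~\ref{lem:dw-sep-ext}: nonsingularity of $I + AU\ct B U$ for all unitary $U$ is the same as the 3-D disjointness $\invdwshell(A) \cap \dwshell(-B) = \emptyset$. Each of the nine sufficient conditions will then be verified to imply this disjointness by simultaneously pushing both $\invdwshell(A)$ and $\dwshell(-B)$ through a common shadow-casting map into a lower-dimensional space, and checking the (easier) disjointness of the shadows. Since every projection used is a well-defined function, disjointness of images transfers back to disjointness of pre-images, so in fact one never needs to do any computation in 3-D directly.

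For the gain conditions (1)--(2), I would apply $\proj[\nu]$; the two shadows reduce to the intervals $\invgasq(A)$ and $\gasq(-B) = \gasq(B)$ on $\nnreals$ (see Table~\ref{tab:inv-sets} and Table~\ref{tab:2d-set-defns}), whose disjointness amounts to either $\svmax(A)\svmax(B) < 1$ or $\svmin(A)\svmin(B) > 1$. For (3) and (5), the projections $\projdn$ and $\projdn \circ \projin$ produce the shadows $\invnr(A),\nr(-B)$ and $(\srg(A) \setminus \{0\})\inv,\srg(-B)$ respectively, which recover exactly the stated separation conditions. Condition (6) is obtained analogously through the paraboloidal-hull projection identified in the third row of Table~\ref{tab:inv-sets}, giving $\cj(\nnr(A)) \cap \nnr(-B)$ as the shadow.

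For the angular conditions (4), (7), (8), and (9), I would add one further layer of covering: each 2-D shadow is enclosed in a sector / arc / cone characterized by its extremal phases, and disjointness of these simpler covers is equivalent to the phase-sum inequalities listed. In (4) the semisectorial hypothesis lets us cover $\nr(A)$ by $\cone[\phmin(A),\phmax(A)]$ and $\nr(B)$ by $\cone[\phmin(B),\phmax(B)]$; taking inverse and negation maps these to covers of $\invnr(A)$ and $\nr(-B)$, and the double inequality is precisely the statement that the two cones do not intersect modulo $2\pi$. The same template handles (7)--(8) on the SRG (with $\angmin/\angmax$ playing the role of $\phmin/\phmax$, and the arc-connectedness guaranteed by Proposition~\ref{prop:theta-srg-basic}) and (9) on the normalized numerical range.

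The step I expect to be the main obstacle is condition (4). The inverse numerical range $\invnr(A)$ carries both a conjugation and a case split depending on whether $0$ is a non-normal eigenvalue of $A$ (see Table~\ref{tab:inv-sets}), so the sectorial-phase argument has to be stated carefully enough to accommodate the singular case, where the cone may degenerate or expand to the whole plane; and the winding-index $k$ in the displayed phase-sum bound must be tracked through the inversion $\invmap$ in order to align the arguments of $\invnr(A)$ and $\nr(-B)$ on a common $2\pi$-range. Once this sectorial bookkeeping is pinned down, conditions (7)--(9) reduce to simpler avatars of the same angle-sum calculation, and the remaining items are purely mechanical applications of the shadow-projection principle stated in the first paragraph.
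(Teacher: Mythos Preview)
Your proposal is correct and mirrors the paper's own argument almost exactly: route everything through Theorem~\ref{lem:dw-sep-ext}, then obtain each condition by pushing $\invdwshell(A)$ and $\dwshell(-B)$ through a common projection ($\proj[\nu]$ for (1)--(2), $\projdn$ for (3), $\projdn\circ\projin$ for (5), the paraboloidal-hull map for (6)), with the angular conditions (4), (7)--(9) arising from one further conic/segmental covering step. Your identification of condition~(4) as the delicate case---because of the conjugation in $\invnr(A)$, the singular/non-normal-zero case split, and the $2k\pi$ alignment---is exactly the bookkeeping the paper works through in Remark~\ref{rem:dw-n-sec-ph}.
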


    \textbf{\emph{Conditions 1)\,\&\,2):}}
    We map both sets in the DW separation condition by $\proj[\nu]$, which results in the condition $\invgasq(A)\cap \gasq(-B) = \emptyset$. This, in turn, can be guaranteed by either $\svmax(B)<1/\svmax(A)$, or $1/\svmin(A) < \svmin(B)$. The former yields the small gain condition and the latter yields the large gain condition. Note that 2) implicitly rules out the possibility of $A$ or $B$ being singular.

    \textbf{\emph{Condition 3)\,\&\,4):}} 3) is derived from the DW separation through the mapping $\projdn$.
    Referring to \cref{tab:inv-sets}, when $A$ is singular and $0$ is not a normal eigenvalue, $\invnr(A)$ becomes the entire complex plane, which prevents it from being separated from $\numran(-B)$ for any $B$. This stands as a limitation of 3).
    The smallest and largest sectorial phases $\minph(A),\maxph(A)$ correspond to the extreme rays of $\conihull(\nr(A))$ (in counterclockwise order). These phases are well defined only when the origin is not in the interior of $\nr(A)$; matrices satisfying this are refer to as \emph{semisectorial matrices}, and semisectorial matrices with $\maxph(A)-\minph(A)<\pi$ are called \emph{quasisectorial}\cite{wangPhasesSemiSectorialMatrix2023,chenPhaseTheoryMultiinput2024}.
    Now if we further apply the conic hull operator after $\projdn$ to the DW separation, we arrive at condition~4). To see this, first note that for any $\gamma >0$,
    \begin{align*}
        &\conihull(\left(\projdn ((\conihull(\dwshell(A))) \cap \plane[\gamma])\right)) \\
        =& \left\{\begin{NiceArray}{l@{\hspace{.1em}}l}
            (\projdn \conihull(\dwshell(A)))\setminus \{0\} &, A\text{ is quasisectorial,}\\
            \projdn \conihull(\dwshell(A)) &\text{, otherwise.}
        \end{NiceArray}\right.
         \\
        =&
        \left\{\begin{NiceArray}{l@{\hspace{.1em}}l}
            \conihull(\nr(A))\setminus \{0\} &, A\text{ is quasisectorial}\\
            \conihull(\nr(A)) &\text{, otherwise.}
        \end{NiceArray}\right.
    \end{align*} 
    Referring to \cref{tab:inv-sets}, $\conihull(\invnr(A))$ equals $\cj(\conihull(\nr(A)))$ (or $\cj(\conihull(\nr(A)))\setminus\{0\}$  when $A$ is quasisectorial).
    Meanwhile, $\conihull(\nr(-B))=e^{-\ii*\pi}\conihull(\nr(B))$.
    Hence, the projected DW condition under the compound mapping $\mathbf{cone}\circ \projdn$ can be stated as:
    \begin{align*}
        &\text{If $A$ is quasisectorial,}\\
            &\hspace{2em}(\cj(\conihull(\nr(A)))\setminus \{0\})\cap e^{-\ii*\pi} \conihull(\nr(B)) = \emptyset;\\
        &\text{Otherwise, }\cj(\conihull(\nr(A))) \cap e^{-\ii*\pi} \conihull(\nr(B)) = \emptyset.   
    \end{align*}
    This incorporates the fact that 0 is not in $\invnr(A)$ whenever $A$ is quasisectorial.
    Meanwhile, note that the phase inequalities in 4) implicitly imply that at least one of the two matrices is quasisectorial --- as the sum of phase spreads is strictly less than $2\pi$. Then, we can assume $A$ is quasisectorial without loss of generality. Moreover, the inequality can be rewritten as 
    $[-\phmin(A),-\phmax(A)]\,\cap\,(\cup_{k\in\mathbb{Z}} [\phmin(B)+(2k-1)\pi,\phmax(B)+(2k-1)\pi]) = \emptyset$. 
    This condition guarantees that $(\cj(\conihull(\nr(A)))\setminus\{0\}) \cap e^{-\ii*\pi}\conihull(\nr(B))=\emptyset$.
    The converse implication is straightforward. 
    Thus, 4) is precisely the projected DW separation under the composition of conic hull and $\projdn$ operations.

    \textbf{\emph{Conditions~5)\,--\,9):}} These conditions are closely related. 
    First, 5) and 6) are obtained by directly applying the mappings $\projdn \circ \projin$ and $\projdn (\plane[1]\cap \parahull((\cdot)))$, respectively, to the DW separation condition.

    Conditions~7) and 8) are derived from 5) in an analogous way as the sectorial phase condition~4) from  3), namely, by taking the conic hull of the SRG and the inverse SRG.  
    The extreme rays of $\conihull(\srg[0+](A))$ correspond to $\angmin(A)$ and $\angmax(A)$, which can always be settled in $[0,\pi]$ (see \cref{fig:srg-ph}). Notably, $\angmax(A)$ coincides with the singular angle used in \cite{chenSingularAngleNonlinear2025} and with the maximum SRG phase defined in \cite{baron-pradaMixedSmallGain2025}, but $\angmin(A)$, which is equally informative as $\angmax(A)$, has been overlooked. 
    Due to the third identity of \cref{prop:theta-srg-basic}, the intersection $\srg(A)\cap \rF$ consists only of a discrete set of real eigenvalues of $A$, and 0 can never be an interior point of $\srg(A)$. Consequently, $\angmin(A)=0$ (or $\angmax(A)=\pi$) if and only if $A$ has positive (or negative) real eigenvalues. 
    By \cref{tab:inv-sets} and the first two identities of \cref{prop:theta-srg-basic}, we have 
    $\conihull(\invsrg[](A)) = \conihull(\srg(A))\setminus \{0\}$ and $\conihull(\srg(-B)) = e^{-\ii*\pi} \conihull(\srg(B))$.
    The separation of these two cones can be fully characterized by the separation of angular intervals $[\angmin(A),\angmax(A)]$ and $[\pi-\angmax(B),\pi-\angmin(B)]$, which can be guaranteed by either $\angmin(A)+\angmin(B)>\pi$ or $\angmax(A)+\angmax(B)<\pi$.
    \begin{figure}
    \centering
    \subfloat[3-D plot with DW shell]{
        \makebox[.5\columnwidth]{
            \includegraphics[height=5cm]{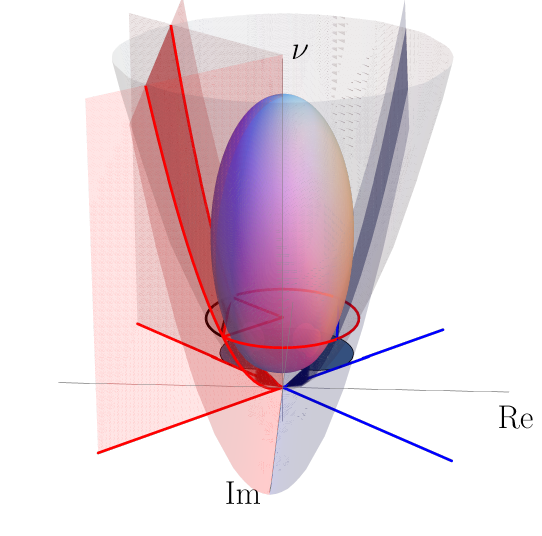}
        }
        \vspace{.4cm}
    }
    \begin{minipage}[b]{.44\columnwidth}
        \subfloat[Normalized numerical range]{
            \makebox[\textwidth]{\includegraphics[height=3cm]{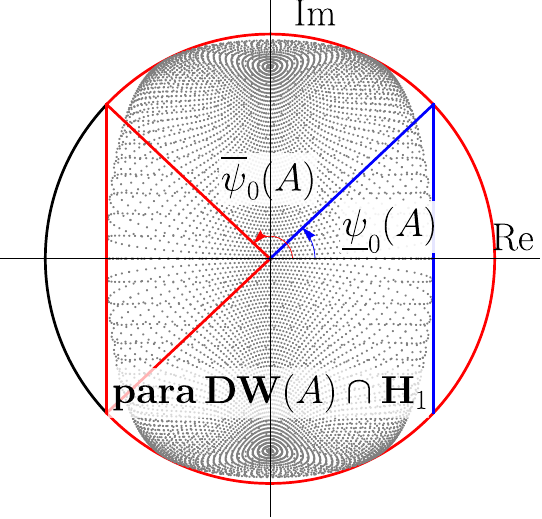}} \label{fig:nnr}
        }\\
        \subfloat[Scaled relative graph \label{fig:srg-ph}]{
            \makebox[\textwidth]{\includegraphics[height=2.5cm]{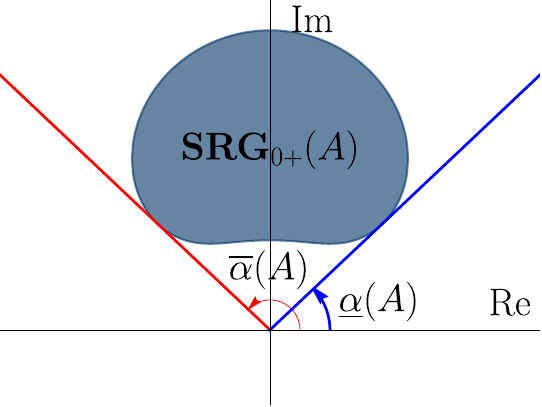}}
        }
    \end{minipage}
    \caption{Joint visualization of the DW shell, normalized numerical range (singular angle), and SRG (SRG phases) for one example matrix.\label{fig:srg-ph-ang-nnr}}
    \vspace{-1.5em}
\end{figure}

Condition~9) stems from 6). 
    Since $\dwshell(A)\subseteq \epi(\parab[1])$, $\nnr(A) = \projdn (\plane[1]\cap \parahull((\dwshell(A))))$ is always contained within the unit disc, and it intersects the unit circle only at points whose arguments coincide with those of the eigenvalues.
    Condition 6) can then be ensured by requiring the existence of two disjoint circular segments of the unit disc with one containing $\cj(\nnr(A))$ and the other containing $\nnr(-B)$. This segmental separation can be algebraically characterized by two independent angular inequalities as in Condition~9) that must be simultaneously satisfied, in a manner similar to the sectorial phase condition 4).
    As noted in \cite{chenCyclicSmallPhase2025}, for a fixed bisector angle $\theta$, the smallest segment that contains $\nnr(A)$ is achieved when its chord is tangent to $\nnr(A)$. This corresponds to minimizing the central angle $\delta_{\theta}$ subject to the normalized numerical containment constraint. The quantities $\sphmin[\theta](A),\sphmax[\theta](A)$ then denote the extremal angles of this smallest segment with bisector angle $\theta$. 
    As shown in \cref{fig:nnr}, the angle $\sphmax[\theta](A)$ (and hence also $\sphmin[\theta](A)$) can be obtained by projecting $\nnr(A)$ along the direction of the $(\theta-\pi/2)$-axis onto the unit semicircle in $\cF_{\theta+}$, which produces a minor arc. The arguments of the two endpoints of this arc (taken in clockwise order) coincide with $\sphmax[\theta](A)$ and $\sphmin[\theta-\pi](A)$, respectively. 
    This projection resembles the one used to obtain $\srg[\theta](A)$, where $\dwshell(A)$ is first projected onto the paraboloid along the same direction.
    Indeed, it can be verified from the definitions that the extreme rays of $\conihull(\srg[\theta+](A))$ correspond exactly to $\sphmax[\theta(A)](A)$ and $\sphmin[\theta(A)-\pi](A)$, as illustrated in \cref{fig:srg-ph-ang-nnr}.  
    In summary, the extremal phases of the standard SRG coincide with the extremal angles of the two $\nnr(A)$-covering circular segments whose respective bisector angles are $0$ and $-\pi$. Hence, conditions 7) and 8) are sufficient for the more general 9), whereas the segmental phases in 9) can be embedded in the more general $\theta$-SRGs. 

\begin{remark} \label{rem:gain-phase-3d}
    The gain and phase conditions discussed above can all be interpreted as using 3-D regions (see \cref{fig:sep-3d-region}), characterized by specific measures, to cover (or `approximate') the $\invdwshell(A)$ and $\dwshell(-B)$. These conditions ensure the DW separation by requiring that the covering 3-D regions themselves are disjoint.
    \begin{figure}
    \centering
    \subfloat[Gain]{
        \makebox[.3\columnwidth]{\includegraphics[height=3.2cm]{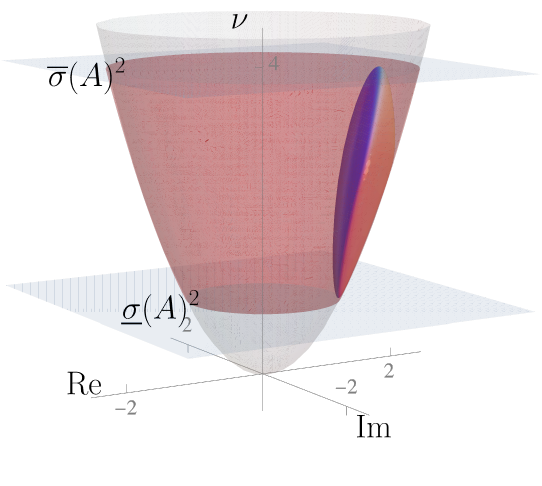}}
    }
    \subfloat[{Sectorial phase}]{
        \makebox[.32\columnwidth]{\includegraphics[height=3.2cm]{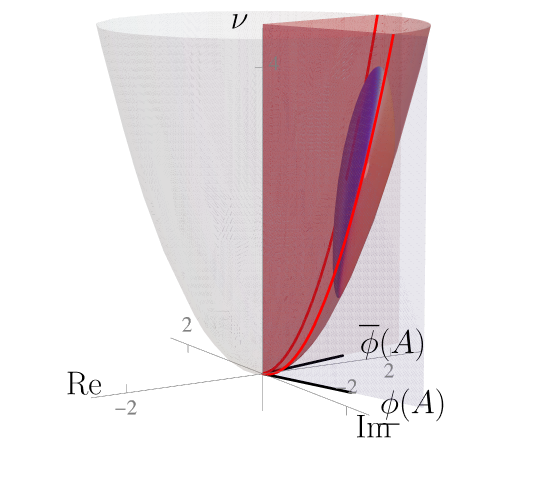}}
    }
    \subfloat[{Segmental phase}\label{fig:3-d-seg}]{
        \makebox[.32\columnwidth]{\includegraphics[height=3.2cm]{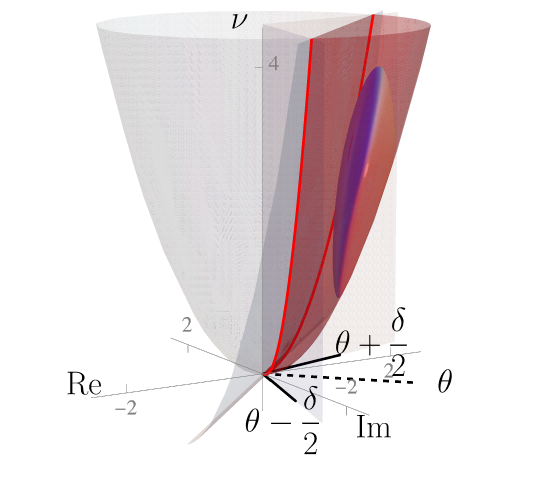}}
    }
    \caption{Equivalent 3-D regions that cover the DW shell, corresponding respectively to the squared gain, sectorial phase, and segmental phase intervals.\label{fig:sep-3d-region}} 
    \vspace{-1.5em}
    \end{figure}
\end{remark}

As mentioned earlier, the phase component of $\theta$-SRG encapsulates the segmental phase information, whereas its gain component aligns with the widely accepted singular value-based gain. Thus, the $\theta$-SRG emerges as a natural candidate for 2-D mixed gain-phase representation. In the next theorem, we show that by searching over $\theta\in[0,\pi)$ (essentially adding back the missing dimension), separation in terms of $\theta$-SRGs fully recovers the DW separation condition.
\begin{theorem}\label{thm:dw-sep-vnumran}
    Given $A, B \in \cF^{n\times n}$, $\mati + A U\ct B U $ is nonsingular for all unitary $U$ if and only if there exists a scalar $\theta \in [0,\pi)$ such that $\invsrg[\theta](A)\cap \srg[\theta](-B) = \emptyset$.
\end{theorem}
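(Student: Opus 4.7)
My plan is to invoke \cref{lem:dw-sep-ext} to reduce the assertion to the equivalence
\[
\invdwshell(A)\cap\dwshell(-B)=\emptyset \iff \exists\,\theta\in[-\pi/2,\pi/2):\ \invsrg[\theta](A)\cap\srg[\theta](-B)=\emptyset.
\]
The cornerstone will be the explicit identity
\[
(\projdn\circ\proj[\theta])(z,\nu)=e^{\ii*\theta}\Bigl(\tRe(e^{-\ii*\theta}z)+\ii*\sqrt{\nu-\tRe(e^{-\ii*\theta}z)^{2}}\,\Bigr),
\]
from which two points of $\epi(\parab[1])$ share an image iff they share the same $\nu$ and the same rotated real part $\tRe(e^{-\ii*\theta}\cdot)$. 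Thus the fibers are horizontal segments at fixed $\nu$ in direction $e^{\ii*(\theta+\pi/2)}$, so the $\theta$-SRG separation is equivalent to having disjoint orthogonal projections of $\invdwshell(A)$ and $\dwshell(-B)$ onto the 2-plane spanned by $e^{\ii*\theta}$ and the $\nu$-axis; equivalently, a separating hyperplane in $\rF^{3}$ exists whose normal has a horizontal component parallel to $e^{\ii*\theta}$.

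The sufficiency direction is then immediate: any common point of the two 3-D sets would map to a common image, contradicting disjointness of the $\theta$-SRG sets; hence $\invdwshell(A)\cap\dwshell(-B)=\emptyset$ and the nonsingularity conclusion follows from \cref{lem:dw-sep-ext}.

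For necessity I plan to use a strict separation argument. $\dwshell(-B)$ is compact by \cref{lem:dw-shell-basics}. The inverse DW shell $\invdwshell(A)$ is convex (since $\invmap$ is linear fractional, hence convexity-preserving), is contained in $\{\nu\geqslant 1/\svmax(A)^{2}\}$, and is closed in $\rF^{3}$ (verified using that $\invmap$ is a homeomorphism of $\{\nu>0\}$). Two disjoint closed convex sets with one compact admit a strictly separating hyperplane $H$ with normal $(n_{z},n_{\nu})\in\cF\times\rF$. If $n_{z}\neq 0$, I write $n_{z}=|n_{z}|e^{\ii*\theta}$ and reduce $\theta$ into $[-\pi/2,\pi/2)$ via the $\pi$-periodicity from \cref{prop:theta-srg-basic}; the strict 3-D gap then translates level by level into a strictly positive gap between the $\tRe(e^{-\ii*\theta}\cdot)$-projections of the two $\nu$-slices, giving $\invsrg[\theta](A)\cap\srg[\theta](-B)=\emptyset$. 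If $n_{z}=0$, then $H$ is purely horizontal and the strict gap occurs in $\nu$; the two 3-D sets then have strictly disjoint $\nu$-ranges, so at every height at most one slice is nonempty, and the $\theta$-SRG condition is trivially met for every $\theta$.

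The main obstacle I foresee is confirming the applicability of strict separation when $A$ is singular, since then $\invdwshell(A)$ is unbounded (extending to $\nu\to\infty$). The containment in $\{\nu\geqslant 1/\svmax(A)^{2}\}$ rules out accumulation at $\nu=0$, and together with the compactness of $\dwshell(-B)$ guarantees a positive distance between the two disjoint sets---which is precisely what is needed to produce a strictly separating hyperplane despite the lack of compactness of $\invdwshell(A)$.
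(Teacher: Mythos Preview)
Your approach is essentially the paper's: reduce via \cref{lem:dw-sep-ext}, note sufficiency is immediate since the $\theta$-SRGs are images of the DW shells under a common map $\projdn\circ\proj[\theta]$, and for necessity pick a separating hyperplane and set $\theta$ from the argument of the horizontal component of its normal (then reduce modulo $\pi$ via \cref{prop:theta-srg-basic}). You are in fact more careful than the paper about closedness and strict separation in the unbounded singular case; one small caveat is that your unqualified convexity claim for $\invdwshell(A)$ fails when $n=2$ (the shell is then an ellipsoidal surface, cf.\ \cref{lem:dw-shell-basics}), though the argument survives after passing to convex hulls.
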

\begin{proof}
    By \cref{lem:dw-sep-ext}, we only need to show that $\invdwshell(A)\,\cap\,\dwshell(B)=\emptyset \Longleftrightarrow \invsrg[\theta](A)\,\cap\srg[\theta](-B) = \emptyset$ for some $\theta \in [0,\pi)$.
    
    ``$\Longleftarrow$'': This direction holds trivially, since $\invsrg[\theta+](A)$ and $\srg[\theta+](-B)$ are images of $\invdwshell(A),\dwshell(-B)$ under the same mapping $\projdn \circ \proj[\theta]$.
    
    ``$\Longrightarrow$'': $\invdwshell(A)\cap\dwshell(-B) = \emptyset$ ensures the existence of a hyperplane $\plane(n,x_0)$ that strictly separates $\invdwshell(A)$ and $\dwshell(-B)$. 
    By choosing $\theta = \angle (\projdn n) + \pi/2$, we can guarantee that $\invsrg[\theta](A)\cap\srg[\theta](-B)=\emptyset$. Moreover, $\theta$ can be confined to $[0,\pi)$ due to the $\pi$-periodicity of $\theta$-SRGs. Note that if the hyperplane is horizontal, i.e., $\projdn n = 0$, then we allow $\angle 0$ to take any value. Indeed, in this case, $\invdwshell(A)$ and $\dwshell(-B)$ are separated by pure gain condition, and $\proj[\theta](\invdwshell(A))$ and $\proj[\theta](\dwshell(-B))$ are separated for all $\theta$. This completes the proof.
\end{proof}
\begin{figure}
    \centering
    \adjincludegraphics[Clip = {.06\width} {0.05\height} {0} {0}, height=4cm]{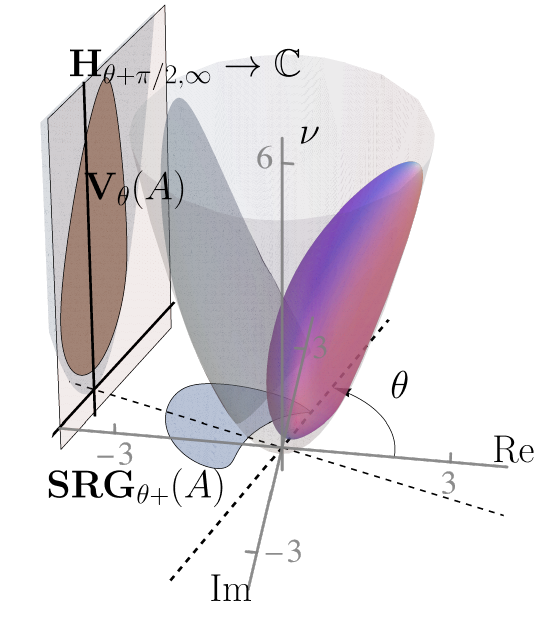}
    \caption{{$\theta$}-SRG and {$\theta$}-vertical numerical range derived from the DW shell.\label{fig:vnumran}}
    \vspace{-1.5em}
\end{figure}
\begin{remark} \label{rem:vnumran}
    We can define a vertical numerical range and its rotated variants, parameterized by $\theta\in[0,\pi)$, as
    $\vnumran(A):= \numran(\hermp(A)+\ii*A\ct A)$ and $\vnumran[\theta](A):=\vnumran(e^{-\ii*\theta}A)$.
    The set $\vnumran[\theta](A)$ can again be interpreted as a shadow of $\dwshell(A)$ captured by replacing the paraboloidal screen used to construct $\srg[\theta](A)$ (see the last row of \cref{tab:optical-configs}) with a flat screen placed at $\plane[\theta+\pi/2,\infty]$. As illustrated in \cref{fig:vnumran}, $\vnumran[\theta](A)$ is the shadow with respect to a local coordinate system embedded in $\plane[\theta+\pi/2,\infty]$.
    Define a mapping $h_{\theta}:\parab[1]\cap (\cF_{\theta+}\times\nnreals) \mapsto \{z\in\cF:\tIm(z)\geqslant \tRe(z)^2\}$ as $h_{\theta}(z,|z|^2):=\tRe(e^{-\ii*\theta}z) + \ii* |z|^2$. Note that both $h_{\theta}$ and $\projdn$ are bijections when restricted to the domain $\parab[1]$. Let $\invvnumran(A):=(h_{\theta}\circ\proj[\theta]) \invdwshell(A)$, we have
    \begin{align*}
        \srg[\theta+](A) \xleftrightarrows[\projdn^{-1}]{\projdn} 
        & 
        \makebox[6.5em][c]{$\proj[\theta]\dwshell(A)$} 
        \xleftrightarrows[h_{\theta}]{h_{\theta}^{-1}} \vnumran[\theta](A),\\
        \invsrg[\theta+](A) \xleftrightarrows[\projdn^{-1}]{\projdn} 
        &
        \makebox[6.5em][c]{$\proj[\theta]\invdwshell(A)$} 
       \xleftrightarrows[h_{\theta}]{h_{\theta}^{-1}} \invvnumran[\theta](A),
    \end{align*}
    and there is an obvious one-to-one correspondence between $\srg[\theta+](A)$ and $\vnumran[\theta](A)$ (and between their inverses). 
    It follows that
    \begin{align*}
        \invsrg[\theta](A)\cap \srg[\theta](-B)=\emptyset \Leftrightarrow \invvnumran[\theta](A)\cap\vnumran[\theta](-B)=\emptyset,
    \end{align*}
    and hence $\vnumran[\theta](A)$, which is convex, carries exactly the same set of information as $\srg[\theta](A)$.
    Pates \cite{patesScaledRelativeGraph2021} relates the standard SRGs to numerical ranges of specific operators via the Beltrami-Klein mapping. We speculate the numerical range obtained therein is the lateral projection of the unit ball version of the Davis's shell (see \cref{rem:davis-defn}).
\end{remark}

On top of the $\theta$-SRG separation, we can again take the conic hull and arrive at an angular condition that extends the small and large SRG phase conditions. 
To remove any potential ambiguity, let us first set up an agreement of determining the value for $\theta$-SRG phases:
Choose %
$\theta \in \rF$, and let $\angmin[\theta](A),\angmax[\theta](A) \in [0,\pi]$ denote the counterclockwise angular \emph{deviations} from the ray defined by $\theta$ to the two extreme rays of $\conihull(\srg[\theta+](A))$. In other words, $\angmin[\theta](\cdot),\angmax[\theta](\cdot)$ are defined in a local polar coordinate system. 
Note that
\begin{subequations}
    \label{eq:seg-ph-relation}
    \begin{align}
    \angmin[\theta+\pi](A) &= \pi - \angmax[\theta](A) = \angmin[\theta](-A), \label{eq:small-seg-ph}\\
    \angmax[\theta+\pi](A) &= \pi - \angmin[\theta](A) = \angmax[\theta](-A), \label{eq:large-seg-ph}
    \end{align}
\end{subequations} owing to the symmetry of $\theta$-SRG about the $\theta$-axis and the first two identities of \cref{prop:theta-srg-basic}.
With the above said, we have:
\begin{corol}[Uniparameter {$\theta$}-SRG Phase Condition] \label{corol:srg-phase}
    Given $A, B \in \cF^{n\times n}$, $I+AU\ct* B U$ is nonsingular for all unitary $U$ if there exists a $\theta\in[0,\pi)$ such that either of the following two angular inequalities holds:
    \begin{enumerate}
        \item $\angmin[-\theta](A)+\angmin[\theta](B)>\pi$;
        \item $\angmax[-\theta](A)+\angmax[\theta](B)<\pi$.
    \end{enumerate} 
    Furthermore, this uniparameter $\theta$-SRG phase condition is equivalent to the biparameter condition~9) in \cref{lem:low-dim-conds}.
\end{corol}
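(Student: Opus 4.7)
The plan splits the claim into two parts: showing sufficiency of the uni-parameter $\theta$-SRG phase condition for nonsingularity, and establishing its equivalence with the bi-parameter segmental phase condition~9) of \cref{lem:low-dim-conds}. Both parts rely on passing from the $\theta$-SRG sets to their conic hulls and reasoning about angular intervals, so the bulk of the work is essentially one-dimensional.

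For \emph{sufficiency}, I would fix a $\theta \in [-\pi/2, \pi/2)$ at which one of the two inequalities holds, verify $\invsrg[\theta](A) \cap \srg[\theta](-B) = \emptyset$, and then apply \cref{thm:dw-sep-vnumran} to conclude nonsingularity of $I + AU\ct B U$ for all unitary $U$. To obtain the set separation it suffices to separate the conic hulls. From \cref{tab:inv-sets}, $\invsrg[\theta](A) = (\srg[-\theta](A)\setminus \{0\})^{-1}$; pointwise inversion negates arguments, and combined with the symmetry of $\srg[-\theta](A)$ about the $(-\theta)$-axis in \cref{prop:theta-srg-basic}, $\conihull(\invsrg[\theta](A))$ becomes symmetric about the $\theta$-axis with angular deviations from the $\theta$-ray in $[\angmin[-\theta](A), \angmax[-\theta](A)]$ on each side. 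By the scalar-multiplication and $\pi$-periodicity rules of \cref{prop:theta-srg-basic}, $\srg[\theta](-B) = e^{\ii*\pi}\srg[\theta](B)$, so $\conihull(\srg[\theta](-B))$ has deviations from the $\theta$-ray in $[\pi - \angmax[\theta](B), \pi - \angmin[\theta](B)]$. Disjointness of the two intervals in $[0, \pi]$ collapses algebraically to either $\angmax[-\theta](A) + \angmax[\theta](B) < \pi$ or $\angmin[-\theta](A) + \angmin[\theta](B) > \pi$, exactly the hypothesized inequalities.

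For the \emph{equivalence} with condition~9), I would invoke the correspondence recorded at the end of \cref{rem:dw-n-srg-seg}: the two extreme rays of $\conihull(\srg[\theta+](A))$ are precisely the defining rays of the smallest circular segments centered at $\theta$ and at $\theta - \pi$ that contain $\nnr(A)$. Translating to angular deviations gives $\angmax[\theta](A) = \delta_\theta(A)$ and $\angmin[\theta](A) = \pi - \delta_{\theta - \pi}(A)$, where $\delta_\alpha(A)$ denotes the half-spread of the smallest segment centered at $\alpha$ enclosing $\nnr(A)$. The uni~$\Rightarrow$~bi direction is then immediate: the ``small'' inequality rewrites as $\delta_{-\theta}(A) + \delta_\theta(B) < \pi$, matching condition~9) with $(\theta(A), \theta(B), k) = (-\theta, \theta, 0)$; the ``large'' inequality rewrites as $\delta_{-\theta - \pi}(A) + \delta_{\theta - \pi}(B) < \pi$, matching condition~9) with $k = -1$.

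The reverse bi~$\Rightarrow$~uni direction is where the main work sits. Given $(\theta(A), \theta(B), k)$ satisfying condition~9) with half-spreads $\delta(A), \delta(B)$, I would first establish the monotonicity $\delta_\alpha(A) \leqslant \delta(A) + d(\alpha, \theta(A))$, with $d(\cdot, \cdot)$ the circular arc-length in $[0, \pi]$; this is a short trigonometric computation asserting that widening a segment's spread by the amount of the center-shift preserves containment (it reduces to the identity $\sin^2\psi + \cos^2\psi = 1$ applied to any $z\in \nnr(A)$ written in its $\theta(A)$-rotated polar form). A standard triangle-inequality argument on the circle then yields $\min_\theta[d(-\theta, \theta(A)) + d(\theta, \theta(B))] = d(-\theta(A), \theta(B)) = |\theta(A) + \theta(B) - 2k\pi|$, which is strictly below $\pi - \delta(A) - \delta(B)$ by condition~9). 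A minimizing $\theta$ therefore satisfies $\delta_{-\theta}(A) + \delta_\theta(B) < \pi$, i.e.\ the ``small'' uni-parameter inequality; the $\pi$-periodicity of $\theta$-SRGs places $\theta$ in $[-\pi/2, \pi/2)$. The principal obstacle is the modular arithmetic of this circular triangle step: one must ensure that the integer $k$ of condition~9) is handled consistently across the wrap-around on the unit circle, so that the minimum is indeed attained and equals $|\theta(A) + \theta(B) - 2k\pi|$ rather than some other representative modulo $2\pi$.
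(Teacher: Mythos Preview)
Your sufficiency argument and the uni~$\Rightarrow$~bi direction match the paper's proof essentially line for line: both pass to conic hulls of $\invsrg[\theta](A)$ and $\srg[\theta](-B)$, read off the angular intervals $[\angmin[-\theta](A),\angmax[-\theta](A)]$ and $[\pi-\angmax[\theta](B),\pi-\angmin[\theta](B)]$, and reduce to the stated inequalities; and both verify the forward equivalence by direct substitution of centers.

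Where you diverge is in the bi~$\Rightarrow$~uni direction. The paper argues geometrically: condition~9) produces two disjoint convex circular segments (centered at $-\theta(A)$ and $\theta(B)+\pi$) covering $\invnnr(A)$ and $\nnr(-B)$; the separating hyperplane theorem then yields a separating chord of the unit disk, and one may enlarge one of the two segments until its complement is the opposite-centric segment, so that $\theta$ can be taken as $-\theta(A)$ or $\theta(B)$ directly. Your route is analytic: the Lipschitz bound $\delta_\alpha(A)\leqslant \delta(A)+d(\alpha,\theta(A))$ combined with the circle triangle inequality shows that any $\theta$ on the short arc from $-\theta(A)$ to $\theta(B)$ satisfies $\delta_{-\theta}(A)+\delta_\theta(B)<\pi$. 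Both are correct; yours is more explicitly quantitative and makes the admissible range of $\theta$ transparent (the entire geodesic rather than just its endpoints), while the paper's argument avoids the modular-arithmetic bookkeeping you flag by working with the convex bodies directly. The monotonicity lemma you state is indeed a short check---a segment of half-spread $\delta$ centered at $\theta(A)$ is contained in the segment of half-spread $\delta+d(\alpha,\theta(A))$ centered at $\alpha$, since both arc endpoints and the chord of the former lie in the latter---and the wrap-around concern is harmless because condition~9) already forces $|\theta(A)+\theta(B)-2k\pi|<\pi$, so this representative is the geodesic distance.
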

\begin{proof}
    See Appendix~\ref{app:pf-unicentric-srg-phase}.
\end{proof}
\begin{remark} \label{rem:compare-with-bicentric}
    The $\theta$-SRG phases, in essence, fold the segmental phases with centers $\theta$ and $\theta+\pi$ into one shot. These two concepts exploit exactly the same information. Nonetheless, \cref{corol:srg-phase} shows that, for the purpose of nonsingularity test, it suffices to search a single parameter within a finite $\pi$-interval, rather than searching over two parameters as in 9) of \cref{lem:low-dim-conds}. We also believe that $\theta$-SRG is a more numerically `benign' form of nonconvexity (see \cref{sec:numerical-theta-srg}) than normalized numerical range, though the latter may appear more handy conceptually in phase-based analysis.  
\end{remark}
\begin{remark}
    If a matrix is singular and its zero eigenvalue is not normal, then its $\theta$-SRG phases are $0$ and $\pi$ regardless of the choice for $\theta$. The $\theta$-SRG phase condition can never be satisfied. Thus, to cope with this type of matrices, we must turn to \cref{thm:dw-sep-vnumran} for mixed gain-phase analysis.
\end{remark}

As noted in \cref{rem:dw-for-eigen-estimate}, by probing the scalar set that preserves DW separation when multiplied to a matrix component, one can obtain a more refined estimate of the spectrum of $AB$. All the aforementioned gain and phase conditions overestimate the DW shells, thereby containing stronger implications than the mere nonsingularity of $I+AB$. In the following, we extract the hidden implication behind the $\theta$-SRG phase condition.
\begin{corol} \label{corol:srg-ph-eigen}
    Given $A, B\in\cF^{n\times n}$ and $\theta\in[0,\pi)$, 
    the following implications hold:
    \begin{align*}
        &\angmin[-\theta](A)+\angmin[\theta](B)>\pi \Longrightarrow \nzeigs(AB)\subseteq \\
        &\qquad \cone[-(2\pi - \angmin[-\theta](A)-\angmin[\theta](B)),2\pi - \angmin[-\theta](A)-\angmin[\theta](B)], \\
        &\angmax[-\theta](A)+\angmax[\theta](B)<\pi \Longrightarrow \nzeigs(AB)\subseteq \\
        &\qquad \cone[-(\angmax[-\theta](A)+\angmax[\theta](B)),\angmax[-\theta](A)+\angmax[\theta](B)].
    \end{align*}
\end{corol}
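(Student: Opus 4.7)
The plan is to establish a $\theta$-twisted Euclidean triangle inequality that propagates $\theta$-SRG phase bounds through matrix products. For any nonzero $u, v \in \cF^n$ and any $\tau \in \rF$, I would introduce the auxiliary angle
\[\delta_\tau(u, v) := \arccos\frac{\tRe(e^{-\ii*\tau}\iprod(u)(v))}{\|u\|\|v\|} \in [0, \pi].\]
Unpacking the definition $\srg[\tau](M) = e^{\ii*\tau}\srg(e^{-\ii*\tau}M)$, any pair $(u, v)$ with $v = Mu$ generates the SRG point $(\|v\|/\|u\|) \exp(\ii*(\tau + \delta_\tau(u, v))) \in \srg[\tau+](M)$, so $\delta_\tau(u, v) \in [\angmin[\tau](M), \angmax[\tau](M)]$.

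The technical heart of the argument is the triangle inequality
\[\delta_{\tau_1 + \tau_2}(u, w) \leq \delta_{\tau_1}(u, v) + \delta_{\tau_2}(v, w)\]
for $\tau_1, \tau_2 \in \rF$ and nonzero $u, v, w \in \cF^n$. Setting $v' := e^{-\ii*\tau_1}v$ and $w'' := e^{-\ii*(\tau_1+\tau_2)}w$, a short calculation identifies $\delta_{\tau_1}(u, v) = \delta_0(u, v')$, $\delta_{\tau_2}(v, w) = \delta_0(v', w'')$, and $\delta_{\tau_1+\tau_2}(u, w) = \delta_0(u, w'')$, reducing the claim to the standard Euclidean triangle inequality in $\cF^n \cong \rF^{2n}$ applied to $u, v', w''$. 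I expect this step to be the main obstacle; keeping the complex phase factors consistent through the substitution requires some care.

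With these tools, the $\angmax$ case is immediate. For $\lambda \in \nzeigs(AB)$, pick nonzero $x$ with $ABx = \lambda x$ and set $y := Bx$; then $y \neq 0$ (since $\lambda \neq 0$) and $Ay = \lambda x$. Applying the triangle inequality to $(u, v, w) = (x, y, \lambda x)$ with $(\tau_1, \tau_2) = (\theta, -\theta)$, and observing that $\delta_0(x, \lambda x) = \arccos(\tRe(\lambda)/|\lambda|) = |\angle\lambda|$, I arrive at
\[|\angle\lambda| \leq \delta_\theta(x, y) + \delta_{-\theta}(y, \lambda x) \leq \angmax[\theta](B) + \angmax[-\theta](A),\]
which is the stated cone containment.

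The $\angmin$ case reduces to the $\angmax$ case via the identity $\angmax[\tau](-M) = \pi - \angmin[\tau](M)$, which follows from $\srg[\tau+](-M) = -\srg[\tau-](M)$ together with the symmetry of $\srg[\tau](M)$ about the $\tau$-axis. Since $(-A)(-B) = AB$, the hypothesis $\angmin[-\theta](A) + \angmin[\theta](B) > \pi$ rewrites as $\angmax[-\theta](-A) + \angmax[\theta](-B) < \pi$, and applying the $\angmax$ case to $(-A, -B)$ yields the desired cone with half-width $2\pi - \angmin[-\theta](A) - \angmin[\theta](B)$.
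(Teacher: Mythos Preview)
Your proof is correct and takes a genuinely different route from the paper. The paper works entirely within its DW-shell framework: it notes that the phase hypothesis forces $\invdwshell(A)$ and $\dwshell(-B)$ into disjoint paraboloidal hulls of circular segments, then argues that rotating $\dwshell(-B)$ around the $\nu$-axis by any $|\delta|<\angmin[-\theta](A)+\angmin[\theta](B)-\pi$ (together with arbitrary positive scaling) keeps those hulls disjoint, so that $\invdwshell(A)\cap\dwshell(-zB)=\emptyset$ for all $z$ in a punctured cone $K$; the eigenvalue bound then drops out of the general DW-separation $\Rightarrow$ nonsingularity mechanism of \cref{lem:dw-sep-ext} and \cref{rem:dw-for-eigen-estimate}.

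Your argument bypasses the 3-D geometry entirely. The twisted angle $\delta_\tau$ and its triangle inequality reduce everything to the spherical triangle inequality in $\rF^{2n}$ applied directly to an eigenvector triple $(x,Bx,\lambda x)$, and the reduction of the $\angmin$ case to the $\angmax$ case via $\angmax[\tau](-M)=\pi-\angmin[\tau](M)$ is clean. This is more elementary and self-contained for the corollary taken in isolation; it also makes transparent that the bound is really a submultiplicativity-type statement for $\theta$-SRG phases along input--output chains. The paper's approach, while less direct, is more in keeping with the article's unifying thesis: it exhibits the phase inequality as a statement about covering DW shells by paraboloidal segments, and fits the eigenvalue estimate into the general template ``probe all $z$ for which $\invdwshell(A)\cap\dwshell(-zB)=\emptyset$ and take the complement of $-z^{-1}$''.
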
 
\begin{proof}
    See Appendix~\ref{app:pf-srg-ph-eig}.
\end{proof}
\begin{remark} \label{rem:eigen-estimate}
    In \cref{rem:compare-with-bicentric}, we noted that \cref{corol:srg-phase} provides a more efficient way for qualitative nonsingularity tests of $I+AB$. Furthermore, \cref{corol:srg-ph-eigen} shows that the condition ensures that the spectrum of $AB$ lies within a conic region that excludes the non-positive real axis. We shall soon see that this condition is also more efficient for a qualitative stability test. 
    Nonetheless, we acknowledge that for the spectrum estimation of $AB$, the biparameter condition in \cite{chenCyclicSmallPhase2025} generally yields tighter asymmetric angular bounds. Even so, when seeking angular bounds on the spectrum of $AB$, our results can serve as an effective starting point of finding a feasible initial solution. 
\end{remark}

To complete the graph of implications in \cref{fig:all-in-one-diagram}, we now prove the remaining $2.2\Longrightarrow3.2$.

\emph{Proof of 2.2$\Longrightarrow$3.2. }Recall that the sectorial phase condition~2.2 (condition~\emph{4)} in \cref{lem:low-dim-conds}) corresponds to the separation of conic hulls of $\projdn \invdwshell(A)$ and $\projdn \dwshell(-B)$. This is equivalent to the existence of a vertical hyperplane $\plane[\phi,0]$ that strictly separates $\invdwshell(A)$ and $\dwshell(-B)$. 
$\plane[\phi,0]$ divides $\epi(\parab[1])$ into two 3-D opposite segments (as illustrated in \cref{fig:3-d-seg}), with their interiors containing $\invdwshell(A)$ and $\dwshell(-B)$, respectively. This implies that by choosing $\theta = \phi-\pi/2$, the corresponding $\theta$-SRG phase condition (i.e., condition 3.2) is satisfied.

Note that the graph in \cref{fig:all-in-one-diagram} provides a \emph{precise} characterization of the relationships among all conditions in the general setup. If there is no directed path from one condition to another in the graph, then the implication from the former to the latter does not hold, and counterexamples exist. Such counterexamples can often be constructed easily using normal matrices. For instance, the following example demonstrates that 3.2 does not imply 2.2:
\begin{example}
    For $A = \diag*{-\ii*, 1}, B = e^{\ii*(-3\pi/4)}I$, both matrices are sectorial and $\phmin(A) = -\pi/2, \phmax(A) = 0, \phmin(B) = \phmax(B) = -3\pi/4$, and clearly 2.2 (condition~4 in \cref{lem:low-dim-conds}) does not hold. However, by setting $\theta = \pi/4$, we have $\angmin[-\theta](A) = \angmax[-\theta](A) = \pi/4$ and $\angmin[\theta](B)=\angmax[\theta](B) = \pi$. Thus, the large phase theorem of 3.2 (\cref{corol:srg-phase}) holds and, by \cref{corol:srg-ph-eigen}, the arguments of nonzero eigenvalues of $AB$ are bounded in the interval $[-3\pi/4,3\pi/4]$. 
\end{example}

\section{Graphical Stability Conditions via DW Shells and $\theta$-SRGs} \label{sec:stability-ana}
Let $\rhinf$ be the ring of real rational stable transfer functions and $\rhinf^{n\times n}$ be the set of $n\times n$ matrices over $\rhinf$.
Consider a negative feedback system consisting of stable LTI system components described by transfer matrices $G(s), H(s) \in \rhinf^{n\times n}$. The closed-loop system, as illustrated in \cref{fig:standard-feedback}, is said to be internally stable if the mapping from external inputs $(u_1,u_2)$ to internal signals $(e_1,e_2)$ has an inverse in $\rhinf^{2n\times 2n}$, i.e.,
\begin{align*}
    \exists\, K(s) \in \rhinf^{2n\times 2n}
    \text{ such that }
    K(s)\begin{bNiceArray}{c@{}c}
        I & -H(s) \\
        G(s) & I
    \end{bNiceArray} = I.
\end{align*} \vspace{-.4cm}
\begin{figure}[h!]
    \centering
    \adjincludegraphics[Clip ={0.07\width} {.05\height} {.07\width} {0.07\height}, width=.8\columnwidth]{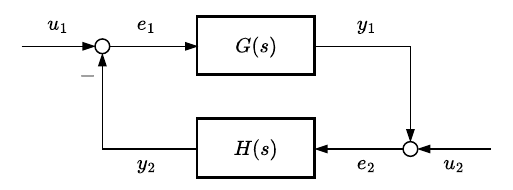}
    \caption{A standard negative feedback system. \label{fig:standard-feedback}}
    \vspace{-.4cm}
\end{figure}

A classical method for verifying internal stability is the generalized Nyquist criterion \cite{desoerGeneralizedNyquistStability1980}. For the setup considered here, this criterion is satisfied if the eigenloci of $G(s)H(s)$ do not cross or encircle $-1$; that is, $\Lambda(G(\ii*\omega)H(\ii*\omega))\cap[-\infty,-1]=\emptyset$ for all $\omega\in\nnreals$. 
\cref{lem:dw-sep-ext} and \cref{rem:dw-for-eigen-estimate} jointly provide a direct approach to ensuring this requirement:
\begin{theorem} \label{thm:dw-closed-loop}
    Consider the negative feedback system as illustrated in \cref{fig:standard-feedback} with $G(s),H(s)\in\rhinf^{n\times n}$. The closed-loop system is internally stable if for each $\omega \in\nnreals \cup \{\infty\}$,
    $\invdwshell(G(\ii*\omega))\cap \dwshell(-(\mu H(\ii*\omega))) = \emptyset$ for all $\mu\in[0,1]$.
\end{theorem}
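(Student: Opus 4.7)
The plan is to invoke the generalized Nyquist criterion via a homotopy in the scaling parameter $\mu$. Since $G, H \in \rhinf^{n\times n}$, neither factor has poles in the closed right half plane, so the number of unstable open-loop poles is zero and internal stability reduces to the characteristic determinant $d(s) := \det(I + G(s)H(s))$ being nonzero on the standard Nyquist contour and having winding number zero about the origin.

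First I would establish the non-vanishing (non-crossing) part. Fix $\omega \in \nnreals \cup \{\infty\}$ and $\mu \in [0,1]$. The hypothesis gives $\invdwshell(G(\ii*\omega)) \cap \dwshell(-\mu H(\ii*\omega)) = \emptyset$, so applying \cref{lem:dw-sep-ext} with $A = G(\ii*\omega)$ and $B = \mu H(\ii*\omega)$ yields that $I + G(\ii*\omega)\, U\ct (\mu H(\ii*\omega))\, U$ is nonsingular for every unitary $U$. Specializing to $U = I$ gives $\det(I + \mu G(\ii*\omega) H(\ii*\omega)) \neq 0$. Sweeping $\mu$ over $(0,1]$ shows that the eigenloci of $G(\ii*\omega)H(\ii*\omega)$ avoid the ray $(-\infty, -1]$; in particular, taking $\mu = 1$ gives that $d(\ii*\omega) \neq 0$ for every $\omega$ on the Nyquist contour.

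Next I would handle the winding-number part by the homotopy $d_\mu(s) := \det(I + \mu G(s) H(s))$, $\mu \in [0,1]$. The previous step shows $d_\mu(\ii*\omega) \neq 0$ for every $(\mu, \omega) \in [0,1] \times (\nnreals \cup \{\infty\})$, and $d_\mu(\ii*\omega)$ is jointly continuous in $(\mu, \omega)$ (with its value at $\omega = \infty$ well-defined since $G, H \in \rhinf^{n\times n}$). Hence, as $\mu$ varies from $0$ to $1$, the image curve $\omega \mapsto d_\mu(\ii*\omega)$ deforms continuously in $\cF \setminus \{0\}$, so its winding number about the origin is invariant in $\mu$. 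At $\mu = 0$, $d_0 \equiv 1$ has winding number zero, and therefore $d_1 = d$ also has winding number zero. Combined with the non-vanishing on the imaginary axis, the generalized Nyquist criterion is satisfied and the closed loop is internally stable.

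The main obstacle --- and the reason for parameterizing by $\mu \in [0,1]$ rather than merely checking $\mu = 1$ --- is the encirclement condition, which cannot be read off from a single frequencywise DW separation. The full family of separations over $[0,1]$ supplies the continuous homotopy from the trivially stable identity loop to the actual loop, pinning the winding number at zero. The only minor technicality to verify is the continuity of $d_\mu(\ii*\omega)$ at $\omega = \infty$, which is immediate from $G, H$ being proper and stable.
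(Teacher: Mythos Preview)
Your proof is correct and follows essentially the same route as the paper: apply \cref{lem:dw-sep-ext} frequencywise to obtain $\det(I+\mu\,G(\ii*\omega)H(\ii*\omega))\neq 0$ for all $\mu\in[0,1]$, and then invoke the generalized Nyquist criterion. The paper compresses your explicit homotopy-in-$\mu$ argument into the equivalent observation that the eigenloci of $G(\ii*\omega)H(\ii*\omega)$ avoid the ray $[-\infty,-1]$, but the underlying mechanism is identical.
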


The condition can be restated as that $\invdwshell(G(\ii*\omega))$ does not intersect $\cup_{\mu\in[0,1]} \dwshell(-(\mu H(\ii*\omega)))$ for each $\omega \in \nnreals\cup\{\infty\}$. Though $\invdwshell(G(\ii*\omega))$ is generally convex, the DW shell union that it needs to avoid is in general non-convex. Recall \cref{fig:parab-scaling}. The DW shell union is actually a portion of $\parahull(\dwshell(-H(\ii*\omega)))\cup\{(0,0)\}$ formed by shrinking $\dwshell(-H(\ii*\omega))$ along the hull. 

Separation of nonconvex sets in 3-D can be difficult to work with. Next, we therefore derive the counterpart of \cref{thm:dw-closed-loop} in terms of $\theta$-SRGs by applying the mappings from the matrix case, mutatis mutandis, in a frequencywise manner. Before that, for any given square matrix $A$, it is straightforward to verify---pointwise on $\dwshell(\mu A)$ with explicit forms of $\projdn, \proj[\theta]$---that 
\begin{align*}
    (\projdn \circ \proj[\theta]) \bigcup_{\mu\in[0,1]} \dwshell(\mu A) = \bigcup_{\mu\in[0,1]} \mu \srg[\theta+](A). 
\end{align*}
Then, by applying $\projdn\circ\proj[\theta]$ to the shells in \cref{lem:dw-sep-ext}, we obtain the following condition for closed-loop stability.
\begin{corol} \label{corol:cl-stab-theta-srg}
    Consider the negative feedback system as illustrated in \cref{fig:standard-feedback} with $G(s),H(s)\in\rhinf^{n\times n}$. The closed-loop system is internally stable if for each $\omega \in \nnreals \cup \{\infty\}$,
    there exists a $\theta(\omega)\in \rF$ such that 
    \begin{align*}
        \invsrg[\theta(\omega)+](G(\ii*\omega)) \cap \Big(- \bigcup_{\mu\in[0,1]} \mu \srg[\theta(\omega)-](H(\ii*\omega))\Big) = \emptyset.
    \end{align*}   
\end{corol}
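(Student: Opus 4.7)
The plan is to obtain the corollary as a direct projection of Theorem~\ref{thm:dw-closed-loop} via the compound mapping $\projdn\circ\proj[\theta]$, exactly parallel to how the $\theta$-SRG separation \cref{thm:dw-sep-vnumran} is obtained as a projection of the DW separation \cref{lem:dw-sep-ext}. The starting point is to recast the condition in \cref{thm:dw-closed-loop} as the frequencywise disjointness $\invdwshell(G(\ii*\omega))\cap\bigcup_{\mu\in[0,1]}\dwshell(-\mu H(\ii*\omega))=\emptyset$, and then to argue that, whenever the images of two sets under a well-defined function are disjoint, the sets themselves must be disjoint (the contrapositive of the image of a nonempty intersection being nonempty).

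Next I would apply $\projdn\circ\proj[\theta(\omega)]$ frequencywise. On the left-hand side, this mapping applied to $\invdwshell(G(\ii*\omega))$ produces $\invsrg[\theta(\omega)+](G(\ii*\omega))$ by the very definition of the inverse set in \cref{tab:inv-sets}. On the right-hand side, I would substitute $A=-H(\ii*\omega)$ into the displayed identity
\begin{align*}
    (\projdn\circ\proj[\theta])\,\bigcup_{\mu\in[0,1]}\dwshell(\mu A)=\bigcup_{\mu\in[0,1]}\mu\,\srg[\theta+](A),
\end{align*}
given immediately before the corollary statement. This yields $\bigcup_{\mu\in[0,1]}\mu\,\srg[\theta(\omega)+](-H(\ii*\omega))$.

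The remaining algebraic step is to convert $\srg[\theta+](-H)$ into $-\srg[\theta-](H)$. This is a direct consequence of the scalar-rotation identity $\srg[\theta](e^{\ii*\alpha}A)=e^{\ii*\alpha}\srg[\theta-\alpha](A)$ in \cref{prop:theta-srg-basic} specialized to $\alpha=\pi$, combined with the $\pi$-periodicity in $\theta$: $\srg[\theta](-A)=-\srg[\theta-\pi](A)=-\srg[\theta](A)$; tracking which half-plane the portion lives in after the rotation by $\pi$ swaps $\cF_{\theta+}$ and $\cF_{\theta-}$, yielding $\srg[\theta+](-H)=-\srg[\theta-](H)$. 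Substituting and pulling the scalar $\mu\geqslant 0$ out of the union gives
\begin{align*}
    (\projdn\circ\proj[\theta(\omega)])\bigcup_{\mu\in[0,1]}\dwshell(-\mu H(\ii*\omega)) = -\bigcup_{\mu\in[0,1]}\mu\,\srg[\theta(\omega)-](H(\ii*\omega)),
\end{align*}
which matches the right-hand side in the corollary statement.

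Putting the pieces together, disjointness of the images implies disjointness of the preimages, which by \cref{thm:dw-closed-loop} implies internal stability. The only nontrivial bookkeeping is the sign/half-plane identity $\srg[\theta+](-H)=-\srg[\theta-](H)$; this is the step I would write out most carefully, since confusing $\srg[\theta+]$ and $\srg[\theta-]$ (or forgetting that the $\pi$-rotation of $\cF_{\theta+}$ is $\cF_{\theta-}$) would invert the geometry of the stability condition. The rest is a routine translation of the matrix-level projection argument of \cref{thm:dw-sep-vnumran} to the frequency-parametrized, homotopy-in-$\mu$ setting required by the generalized Nyquist criterion underlying \cref{thm:dw-closed-loop}.
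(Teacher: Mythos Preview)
Your proposal is correct and follows essentially the same route as the paper: the corollary is obtained by applying $\projdn\circ\proj[\theta(\omega)]$ frequencywise to the DW separation condition of \cref{thm:dw-closed-loop}, using the displayed identity $(\projdn\circ\proj[\theta])\bigcup_{\mu\in[0,1]}\dwshell(\mu A)=\bigcup_{\mu\in[0,1]}\mu\,\srg[\theta+](A)$. Your explicit derivation of the half-plane swap $\srg[\theta+](-H)=-\srg[\theta-](H)$ from the rotation rule and $\pi$-periodicity in \cref{prop:theta-srg-basic} is a detail the paper leaves implicit, and it is handled correctly.
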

\begin{remark}
    Note that the condition stated in \cref{corol:cl-stab-theta-srg} is only a sufficient condition for the one in \cref{thm:dw-closed-loop}. 
    An equivalent formulation of the DW condition via \cref{thm:dw-sep-vnumran} is as follows: for each $\omega\in\nnreals\cup\{\infty\}$ and each $\mu \in [0,1]$, there exists a $\theta(\omega,\mu)$ such that $\invsrg[\theta(\omega,\mu)+](G(\ii*\omega)) \cap (-\mu \srg[\theta(\omega,\mu)-](H(\ii*\omega))) = \emptyset$.
    This, however, does not guarantee the existence of a \emph{uniform} $\theta(\omega)$ that works for all $\mu\in[0,1]$ at a given $\omega$ (and counter examples do exist).
\end{remark}
\begin{remark}
    Note that $\invdwshell(A)\cap \cup_{\mu\in[0,1]} \dwshell(\mu B) = \emptyset \Leftrightarrow \invdwshell(B) \cap \cup_{\mu\in[0,1]} \dwshell(\mu A) = \emptyset$, and a similar equivalence holds for the $\theta$-SRG separation.
    Therefore, in both \cref{thm:dw-closed-loop} and \cref{corol:cl-stab-theta-srg}, we may interchange the roles of $G(\ii*\omega)$ and $H(\ii*\omega)$ at each frequency. That is, the association of $G$ and $H$ with either the shell union or the inverse shell need not be consistent across all frequencies. Although this does not reduce conservatism, it can be useful when the inverse shell or the shell union is more tractable for one component than for the other. 
\end{remark}

Recall from \cref{tab:inv-sets} that $\invsrg[\theta(\omega)+](G(\ii*\omega))$ can be linked to $\srg[(-\theta(\omega))-](G(\ii*\omega))$.
Again, by using simpler geometric regions to cover these frequency dependent $\theta$-SRGs and requiring the separation of the covering regions frequencywise, we can generate much neater (but also more conservative) sufficient stability conditions in terms of the gain and $\theta$-SRG phase measures. 
\begin{itemize}
    \item If discs centered at the origin are used, then \cref{corol:cl-stab-theta-srg} can be tailored as requiring the existence of such a closed disc that contains the set $- \cup_{\mu\in[0,1]} \mu \srg[\theta(\omega)-](H(\ii*\omega))$, while $\invsrg[\theta(\omega)+](G(\ii*\omega))$ lies entirely outside the same disc. This exactly reduces to the frequencywise small gain condition. Particularly, since the parameter $\mu\in[0,1]$ is applied to $\theta$-SRG of $H$, the resulting union of contractively scaled $\theta$-SRGs should always be the one lying within the disc, and the other set can only stay outside. The other way around, which will lead to a large gain condition, is not feasible for this case with a $\mu$.
    \item Taking the conic hulls of the two sets and requiring them to be disjoint yields the frequencywise $\theta$-SRG phase condition. Note that $\cup_{\mu\in[0,1]} \mu \srg[\theta(\omega)-](H(\ii*\omega))$ constitutes precisely the ``tip part'' of $\conihull(\srg[\theta(\omega)-](H(\ii* \omega)))$. In this case, either the small or the large phase condition guarantees the separation of cones covering the two sets in \cref{corol:cl-stab-theta-srg}, and thus serves as a sufficient condition.   
\end{itemize}
An explicit way to combine the above gain and phase conditions is to use $\theta$-dependent outer and inner sectors, with each defined by a radius (gain) and two angles (phases), to cover the two sets in \cref{corol:cl-stab-theta-srg}, respectively. Then the frequencywise separation of the outer and inner sectors yields an explicit mixed gain-phase condition, as stated below:
\begin{corol} \label{corol:cl-stab-direct-gain-phase}
    The frequencywise $\theta$-SRG condition in \cref{corol:cl-stab-theta-srg} is satisfied if for each $\omega\in\nnreals\cup\{\infty\}$, at least one of the following two conditions hold:
    \begin{enumerate}
        \item $\svmax(G(\ii*\omega)) \svmax(H(\ii*\omega)) < 1$;
        \item There exists $\theta(\omega)\in \rF$ such that either of the following inequalities holds: 
        \begin{enumerate}
            \item $\angmin[-\theta(\omega)](G(\ii*\omega))+\angmin[\theta(\omega)](H(\ii*\omega)) > \pi$;
            \item $\angmax[-\theta(\omega)](G(\ii*\omega))+\angmax[\theta(\omega)](H(\ii*\omega)) < \pi$.
        \end{enumerate}
    \end{enumerate} \vspace{.5em}
\end{corol}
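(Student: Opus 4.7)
The plan is to treat each frequency $\omega$ independently, verifying that either of the two listed conditions implies the separation required in Corollary~\ref{corol:cl-stab-theta-srg}. Throughout I shall suppress the argument $\ii*\omega$ for brevity. The argument closely mirrors how Remarks~\ref{rem:dw-n-ga} and \ref{rem:dw-n-srg-seg} derive the gain and phase theorems from the DW separation, with the only twist being that the $H$-side set is now the union $\bigcup_{\mu\in[0,1]}\mu\,\srg[\theta-](H)$ rather than $\srg[\theta-](H)$ itself.

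For Case~1 (small gain), I would appeal to the magnitude bounds read off from Table~\ref{tab:inv-sets}: the inverse squared-gain interval $\invgasq(G)$ lies in $[1/\svmax(G)^2,\infty)$, so every point of $\invsrg[\theta+](G)$ has modulus at least $1/\svmax(G)$. On the other hand, $\srg[\theta-](H)$ has modulus at most $\svmax(H)$, and multiplying by $\mu\in[0,1]$ only shrinks this, so $-\bigcup_{\mu\in[0,1]}\mu\,\srg[\theta-](H)$ is contained in the closed disc of radius $\svmax(H)$. The hypothesis $\svmax(G)\svmax(H)<1$ then guarantees the two sets are separated by concentric circles, and this holds for any choice of $\theta(\omega)$.

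For Case~2 (phase), I would cover the two sets by conic regions and apply the argument used in Corollary~\ref{corol:srg-phase}. The key observation is that $\bigcup_{\mu\in[0,1]}\mu\,\srg[\theta-](H)$ sits inside $\conihull(\srg[\theta-](H))\cup\{0\}$ (it is exactly the ``tip part'' noted in the discussion preceding \cref{corol:cl-stab-direct-gain-phase}), so scaling by $\mu\leqslant 1$ does not enlarge the cone. Hence negating, $-\bigcup_{\mu\in[0,1]}\mu\,\srg[\theta-](H)$ lies inside the cone in $\cF_{\theta+}$ whose angular deviations from the $\theta$-axis form the interval $[\pi-\angmax[\theta](H),\pi-\angmin[\theta](H)]$. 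On the other side, by the inverse $\theta$-SRG formula in Table~\ref{tab:inv-sets}, $\invsrg[\theta+](G)=(\srg[-\theta-](G)\setminus\{0\})\inv$, whose conic hull is swept out by the same angular deviations from the $\theta$-axis as $\srg[-\theta+](G)$, namely the interval $[\angmin[-\theta](G),\angmax[-\theta](G)]$. Disjointness of these two intervals in $[0,\pi]$ is exactly the disjunction $\angmin[-\theta](G)+\angmin[\theta](H)>\pi$ or $\angmax[-\theta](G)+\angmax[\theta](H)<\pi$, which is the stated hypothesis.

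I expect the main obstacle to be the bookkeeping of the angular conventions in Case~2: the inverse map $z\mapsto 1/z$ reflects arguments about the real axis while the $-\theta$ superscript reflects the half-plane, and one must check that these two reflections combine so that the $-\theta$-SRG phases of $G$ indeed measure the correct angular deviations from the $\theta$-axis after the inversion. Once this alignment is settled (it is essentially the same calculation carried out in the proof of Corollary~\ref{corol:srg-phase}), the rest of the proof is a routine conic-hull separation argument, and the frequency-wise $\theta(\omega)$ can be chosen to match the one witnessing the matrix-level inequality at that frequency.
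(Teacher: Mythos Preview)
Your proposal is correct and matches the paper's own argument essentially line by line. The paper's proof is contained in the two bullet points immediately preceding the corollary: for the gain case it uses discs centered at the origin to separate the two sets (exactly your concentric-circles argument), and for the phase case it observes that $\bigcup_{\mu\in[0,1]}\mu\,\srg[\theta-](H)$ is the ``tip part'' of the conic hull, so that conic-hull separation---already worked out in the proof of \cref{corol:srg-phase}---gives the stated phase inequalities.
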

\begin{remark}
    The gain, phase, and explicit mixed gain-phase conditions all leverage specific $\mu$-independent sets to overapproximate the two sets to be separated in \cref{corol:cl-stab-theta-srg}, thereby avoiding an explicit contractive parameter $\mu$ in the conditions.
    It is worth mentioning that \cite[Thm.~3.5]{lestasLargeScaleHeterogeneous2012} introduced another clever way to remove the argument $\mu$ while simultaneously convexifying the condition in \cref{thm:dw-closed-loop}: it appends a pair of ``dummy'' dynamics to $G(s), H(s)$ and applies \cref{thm:dw-closed-loop} to open loop components $\blkdiag{G(s),0}$ and $\blkdiag{H(s),0}$ instead. 
    The zero added to $G(s)$ extrudes $\invdwshell(G(\ii*\omega))$ vertically toward $\infty$, while $\dwshell(\blkdiag{-H(\ii*\omega),0}) = \convhull(\{\dwshell(-H(\ii*\omega))\cup\{(0,0)\}\})$. This construction guarantees the existence of a hyperplane with normal vector $(z,\nu)$ ($\nu>0$) that separates these two modified convex bodies. Moreover, $\dwshell(-H(\ii*\omega))$ and $0$ always lie on the bounded side opposite the direction of this normal vector. Therefore, $\cup_{\mu\in[0,1]} \dwshell(-\mu H(\ii*\omega))$ is always contained within the same half space as $\dwshell(-H(\ii*\omega))$ and is separated from $\invdwshell(G(\ii*\omega))$ by the same hyperplane. 
\end{remark}

The relationship among the main results proposed in this section are as follows:
\begin{align*}
    \text{\cref{corol:cl-stab-direct-gain-phase}}\implies \text{\cref{corol:cl-stab-theta-srg}}\implies \text{\cref{thm:dw-closed-loop}}.
\end{align*} The converse of each implication does not hold in general.
We have followed a top-down approach in deriving graphical stability conditions: starting from the DW shells in 3-D, which capture the richest set of information, then moving to $\theta$-SRGs in 2-D, and finally to the even simpler but coarser gain- and phase-based conditions. Despite that conservatism increases as we inch towards lower dimensions, the resulting conditions exhibit more tractable algebraic properties.

We conclude this section with an example of interconnection that falls within the gap between \cref{corol:cl-stab-theta-srg} and \cref{corol:cl-stab-direct-gain-phase}:
\begin{example} \label{example:system}
    Consider the negative feedback system with open-loop components:
    \begin{align*}
        G(s) = 
        {\everymath{\displaystyle}
        \NiceMatrixOptions{cell-space-limits = 2pt}
        \begin{bNiceArray}{cc}
            \frac{1}{s+1} & \frac{s}{s+1} \\
            0   & \frac{1}{s+1}
        \end{bNiceArray}, \quad 
        H(s) = \begin{bNiceArray}{cc}
            \frac{1}{s+3} & 0 \\
            \frac{0.5s}{s+2} & \frac{s}{s+1}
        \end{bNiceArray}}.
    \end{align*}
    Note that $G(\ii*\infty)$ is a nonzero nilpotent matrix, whose DW shell is a unit ball sitting on the origin.
    This immediately rules out the applicability of sectorial phase and the $\theta$-SRG phase conditions, since $G(\ii*\infty)$ has no well-defined sectorial phases and its $\theta$-SRG phase interval is $[0,\pi]$ for all $\theta$ (see \cref{example2:2d}).
    Meanwhile, $\|G(\ii*\infty)\| = 1$ and $\|H(\ii*\infty)\| > 1$ --- so the small gain condition also fails to hold.
    Therefore, the explicit gain-phase condition in \cref{corol:cl-stab-direct-gain-phase} is not satisfied.
    However, note that $\dwshell(G(\ii*\infty))$ (unit ball) is mapped by $\invmap$ to a shifted $\epi(\parab[1])$ sitting on $(0,1)$, leaving considerable space for $\cup_{\mu\in[0,1]} \dwshell(-H(\ii*\infty))$.
    By plotting the $\invdwshell(G(\ii*\omega))$ and $\dwshell(-H(\ii*\omega))$ over 40 sample frequencies uniformly distributed on a log scale from $10^{-3}$ to $10^{4}$ (rad/s), \cref{example2:dw} shows that the condition in \cref{corol:cl-stab-theta-srg} (and hence also \cref{thm:dw-closed-loop}) is met, certifying closed-loop stability.
    \begin{figure}
        \centering
        \subfloat[DW trajectories \label{example2:dw}]{
            \makebox[.3\columnwidth]{\includegraphics[width=.4\columnwidth]{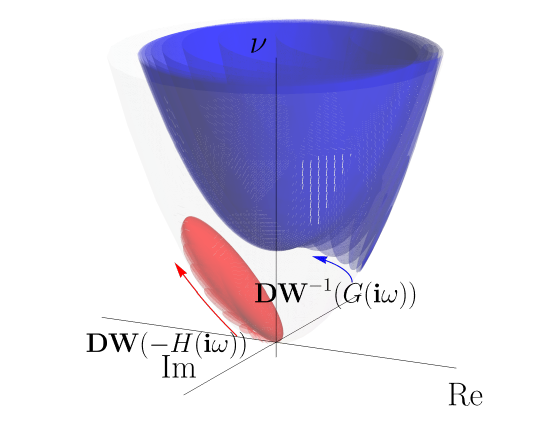}}
        }
        \subfloat[Separation of $\theta$-SRGs at $\omega=\infty$ \label{example2:2d}]{
            \makebox[.3\columnwidth]{\includegraphics[width=.31\columnwidth]{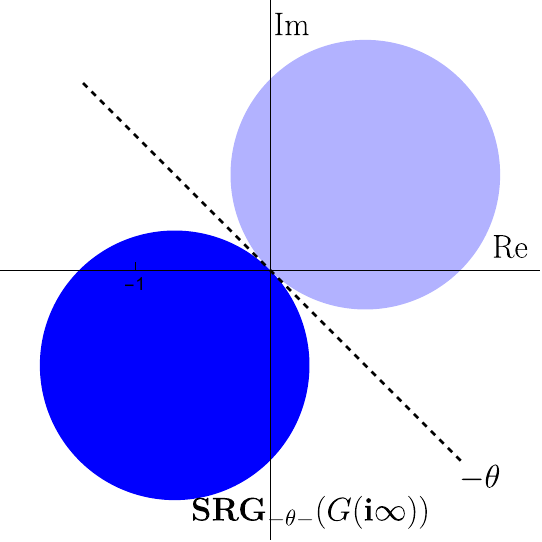}}
            \makebox[.3\columnwidth]{\includegraphics[width=.31\columnwidth]{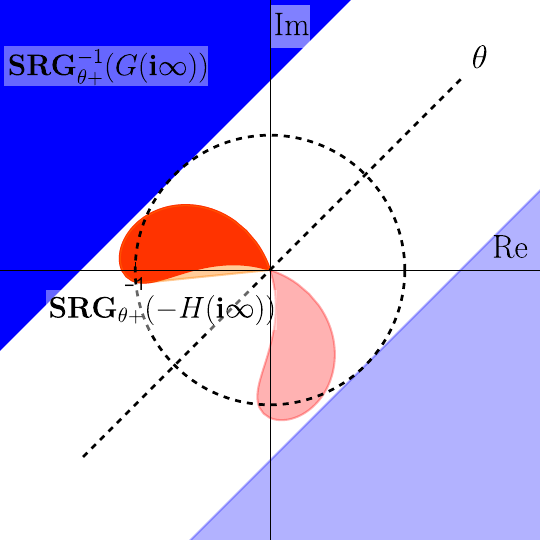}}
        }
        \caption{Graphical evidence for closed-loop stability of the system in \cref{example:system} (the arrows point to the direction of frequency increase).}
        \vspace{-1.5em}
    \end{figure}
\end{example}

\section{Numerical Characterization of $\theta$-SRGs \label{sec:numerical-theta-srg}}
The boundary points of DW shells can be computed at any desired resolution using a method similar to that for plotting numerical ranges \cite[pp.~33-36]{hornTopicsMatrixAnalysis1994}, which involves rotating the shell and solving an eigenvalue problem at each rotation. A detailed algorithm is described in \cite[Sec.~4]{lestasLargeScaleHeterogeneous2012}. 
As the DW shell perspective elucidates many elementary properties of its lower-dimensional projections, we show in this section that it also offers valuable insights into plotting algorithms for 2-D sets discussed earlier. 
In particular, we propose a numerically tractable algorithm for the visual characterization of $\theta$-SRGs.

\subsection{Generalization of convexity via texture}
Convex sets are generally easier to handle both theoretically and numerically. 
In terms of visualization, if one could draw enough samples from a convex set, then the convex hull of these samples would give a reasonable approximation of the set itself. 
However, this is not viable for non-convex sets. Nonetheless, certain types of non-convex sets are more tractable than the rest. To distinguish this class of ``benign'' non-convexity, we propose the notion of texture, which can be interpreted as a generalization of convexity.
A set in $\cF$ is convex if, for every line drawn on the complex plane, its intersection with the set is either empty, or a line/ray/line segment (simply connected). Thus, two possible directions for generalization are: (1) Restrict the behavior of `arbitrary drawing' to drawing along certain specified directions; (2) Replace the `lines' with a designated class of curves.
Based on these ideas, we propose the following definition of ``texture'':
\begin{defn} \label{defn:texture}
Let $\texture$ be a collection of arcs \cite[Sec.~43]{churchillComplexVariablesApplications2013} in $\cF$ that can be parameterized by a parameter $\xi \in\rF$. Given a set $S\subseteq \cF$, if $\texture\,(\xi)$ covers $S$ over an interval $\Xi$ and $\texture\,(\xi)\cap S$ is connected for each $\xi \in \Xi$, we say the set $S$ has texture $\texture$. 
\end{defn}

A matrix-dependent set in the complex plane may exhibit multiple textures. However, only the matrix-independent textures are of interest herein. If a matrix-dependent set, though non-convex, possesses a matrix-independent texture $\texture$ such that, for each non-empty intersection between an arc in $\texture$ and the set, the two endpoints can be easily computed, we then consider the set to be numerically characterizable.
We define three typical textures:
\begin{alignat*}{2}
    \texture[R]\,(\theta)&:= \{re^{\ii*\theta}: r \in \rF\}; &&\text{ --- radar}\\
    \texture[L](\theta)\,(k)&:= \{(r+k\ii*)e^{\ii*\theta}: r \in \rF\}; &&\text{ --- grating}\\
    \texture[O](c)\,(r)&:= \{z\in \cF: |z-c|=r\}. &&\text{ --- ripple}
\end{alignat*}
A set on the complex plane is convex if and only if it has texture $\texture[L](\theta)$ for all $\theta\in[0,\pi)$.
A convex set also has texture $\texture[R]$. For the 2-D sets in \cref{tab:2d-set-defns}, we have the following observations:
\begin{prop} \label{prop:2d-set-texture}
Given $A\in \cF^{n\times n}$ and $\theta\in \rF$, $\srg[\theta+](A)$ exhibits textures $\texture[L](\theta-\pi/2)$ and $\texture[O](he^{\ii*\theta})$ ($h \in \rF$), the two portions $\ssg[PI](A)$ and $\ssg[NI](A)$ of the signed SRG have textures $\texture[L](\pi/2)$ and $\texture[O](c)$ ($c\in \rF$), and $\nnr(A)$ has texture $\texture[R]$.     
\end{prop}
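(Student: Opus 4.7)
The plan is to leverage \cref{prop:dw-low-dim-set}: each of $\srg[\theta+](A)$, $\ssg[PI](A)$, $\ssg[NI](A)$, and $\nnr(A)$ is realized as the image of $\dwshell(A)$ (or of a convex half of it) under an explicit continuous map $\phi$. For each texture arc $\gamma$, I identify the preimage $\phi^{-1}(\gamma)$ inside the relevant portion of $\dwshell(A)$ as the cross-section by an affine $2$-plane of $\cF\times\rF$, possibly further intersected with a closed half-space. Because $\dwshell(A)$ is a convex body when $n\geqslant 3$ and an ellipsoidal surface when $n=2$, every such cross-section is connected---convex for $n\geqslant 3$, and a single point, an ellipse, or a sub-arc thereof for $n=2$. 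Since the continuous image of a connected set is connected, the texture property in the sense of \cref{defn:texture} follows.

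For $\srg[\theta+](A)$, the rotation identity $\srg[\theta+](A)=e^{\ii*\theta}\,\srg*[PI](e^{-\ii*\theta}A)$ from \cref{prop:theta-srg-basic} reduces both claims to the case $\theta=0$, where $\srg*[PI](A)$ is the image of $\dwshell(A)$ under $\phi(z,r):=\tRe(z)+\ii*\sqrt{r-\tRe(z)^2}$. A vertical line $\{\tRe(z)=k\}$ of the family $\texture[L](-\pi/2)$ pulls back under $\phi$ to $\{(z,r)\in\dwshell(A):\tRe(z)=k\}$; and because $|\phi(z,r)-r_0|^2=r-2r_0\tRe(z)+r_0^2$, a circle $|z-r_0|=\rho$ of $\texture[O](r_0)$ pulls back to $\{(z,r)\in\dwshell(A):r=2r_0\tRe(z)+\rho^2-r_0^2\}$. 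Both preimages are cross-sections of $\dwshell(A)$ by an affine $2$-plane, hence connected by the general argument. Rotating by $e^{\ii*\theta}$ then sends these families to $\texture[L](\theta-\pi/2)$ and $\texture[O](r_0 e^{\ii*\theta})$, $r_0\in\rF$, proving the claim.

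For $\ssg[PI](A)$ (with $\ssg[NI](A)$ handled symmetrically via conjugation), the same $\phi$ applies, but restricted to the upper half $D_+:=\dwshell(A)\cap\{(z,r):\tIm(z)\geqslant 0\}$; intersecting with the closed half-space $\{\tIm(z)\geqslant 0\}$ preserves convexity (for $n\geqslant 3$) or yields a connected half-surface (for $n=2$), so the cross-section analysis above delivers the textures $\texture[L](\pi/2)$ and $\texture[O](c)$, $c\in\rF$, verbatim. For $\nnr(A)$, \cref{prop:dw-low-dim-set} identifies $\nnr(A)=\{z/\sqrt{r}:(z,r)\in\dwshell(A),\,r>0\}$, and $z/\sqrt{r}$ lies on the line $\rF e^{\ii*\theta}$ of $\texture[R]$ precisely when $z\in\rF e^{\ii*\theta}$. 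Hence the preimage is $\dwshell(A)\cap\Pi_\theta\cap\{r>0\}$, where $\Pi_\theta:=\{(te^{\ii*\theta},r):t,r\in\rF\}$ is the plane through the $\nu$-axis at angle $\theta$; this is a convex (or elliptic) planar cross-section minus at most the single boundary point $(0,0)$, which remains connected, and the continuous normalization $(te^{\ii*\theta},r)\mapsto(t/\sqrt{r})\,e^{\ii*\theta}$ delivers a connected piece of the ray.

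The main technical subtlety is the case $n=2$, where $\dwshell(A)$ is a $2$-D ellipsoidal surface rather than a solid convex body: its planar cross-sections are closed curves (ellipses) and the half-space cuts used in the $\ssg$ and $\nnr$ cases produce half-ellipses or arcs. These remain connected, and the handful of degenerate configurations (tangency, empty intersection, flat or lower-dimensional shells) are trivial, so the unified argument carries through without modification.
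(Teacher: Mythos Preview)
Your argument is correct and aligns with the paper's underlying mechanism. The paper does not give an explicit proof of this proposition; the closest is the proof of \cref{prop:theta-srg-basic}, item~4), in Appendix~\ref{app:proof-theta-srg-basic}, which establishes the circular-arc texture of $\srg[\theta+](A)$ by a \emph{forward} construction: lift two points to $\dwshell(A)$, connect by a segment (convexity), and observe that $\projdn\circ\proj[\theta]$ maps that segment to a minor arc centered on the $\theta$-axis. Your approach is the \emph{preimage} version of the same idea---you show that each texture arc pulls back to an affine slice of the convex shell and invoke continuity---which is more directly tailored to \cref{defn:texture} and extends uniformly to the $\texture[L]$, $\ssg$, and $\nnr$ cases that the paper leaves implicit. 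The extra care you take with the $n=2$ ellipsoidal case and with removing $(0,0)$ in the $\nnr$ argument fills in details the paper elides; note in particular that $(0,0)$, when present, is necessarily an extreme point of $\dwshell(A)\subseteq\epi(\parab[1])$, so its removal preserves connectedness as you claim.
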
 
\begin{figure}
    \centering
    \subfloat[{$\texture[O](0)$}]{
        \makebox[.3\columnwidth]{\includegraphics[height=3.2cm]{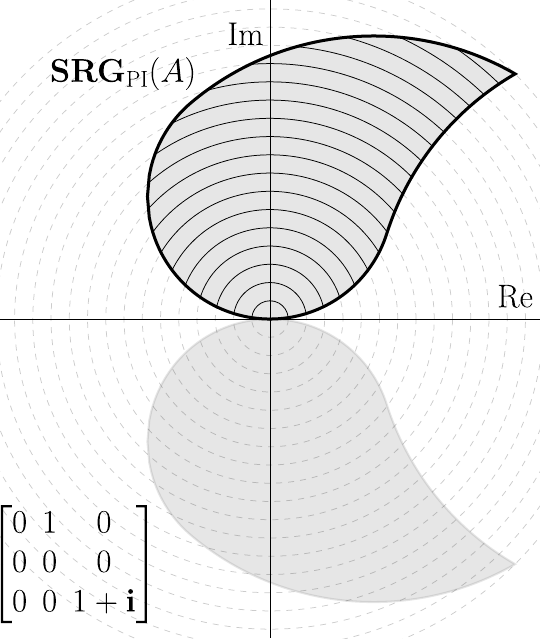}}
    }
    \subfloat[{$\texture[L](\pi/2)$}]{
        \makebox[.3\columnwidth]{\includegraphics[height=3.2cm]{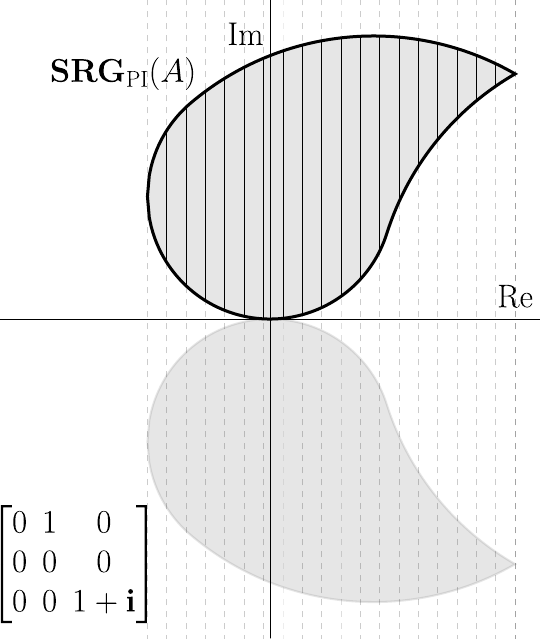}}
    }
    \subfloat[{$\texture[O](0.4), \texture[L](\pi/2)$}]{
        \makebox[.3\columnwidth]{\includegraphics[height=3.2cm]{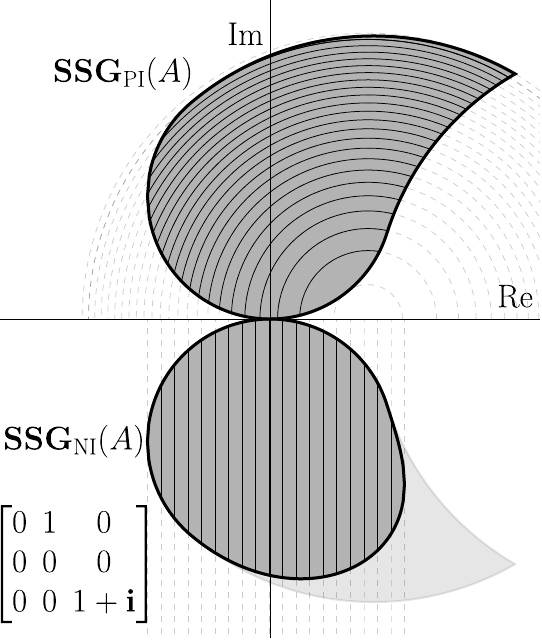}}
    }
    \caption{Textures of SRGs and SSGs.\label{fig:texture}} 
    \vspace{-1.5em}
\end{figure}
\begin{remark}
    The fact that $\srg[0+](A)$ has textures $\texture[O](c)$ for all $c\in\rF$ is referred to as hyperbolic convexity in SRG related literature, e.g., \cite{patesScaledRelativeGraph2021,chaffeyGraphicalNonlinearSystem2023}. 
\end{remark}
\begin{remark}
The boundary points of the DW shell can be readily computed. 
A brute-force approach to plotting these 2-D sets is to project the DW boundary points through the corresponding mappings. A subset of the projected points will lie on the boundary of the 2-D set, while the remainder will fall within its interior. If the set of interest is convex (e.g., the numerical range), then taking the convex hull of all image points (for which efficient algorithms exist, e.g., Graham scan) yields an inner approximation of the set. The accuracy of this approximation improves with denser sampling of the DW boundary.
However, for non-convex sets, identifying the boundary becomes more challenging. The texture information described in \cref{prop:2d-set-texture} provides useful heuristics for identifying the boundary points and arranging them in an appropriate order.
Nonetheless, we do not pursue this direction here, as a more reliable and straightforward method for computing the exact boundary points will be introduced later.
\end{remark}

\subsection{Tomography of DW shells using lossless semidefinite relaxation}
Given a matrix $A$ and a hyperplane $\plane(n,m) \subseteq \cF\times\rF$, consider the problem of maximizing a quadratic function over the set of input vectors that generate the cross section of $\dwshell(A)$ by $\plane(n,m)$:
\begin{maxi}
    {x}{\frac{x\ct Q x}{\|x\|^2}\quad (Q\text{ is a given Hermitian matrix})}{\label{opt:origin}}{}
    \addConstraint{x \in }{\left\{u:
    \begin{bNiceMatrix}
            u \\ v
    \end{bNiceMatrix} \in \gph(A), u\neq 0 \right.}
    \addConstraint{}{\left.\hspace{4em}
    \left(\frac{\iprod(u)(v)}{\|u\|^2},\frac{\|v\|^2}{\|u\|^2}\right) \in \plane(n,m) 
    \right\}.}
\end{maxi}
Suppose that $n=(z_n,\nu_n)$ and $m = (z_m,\nu_m)$, Problem (\ref{opt:origin}) can be explicitly rewritten as:
\begin{maxi}
    {x}{x\ct Q x}{\label{opt:ori-equiv}}{}
    \addConstraint{x\ct\, \left(\hermp*(\cj(z_n) A) + \nu_n A\ct A\right) x}{= \iprod(n)(m)}
    \addConstraint{x\ct x}{=1.}
\end{maxi}
It can be deduced from \cite{polikSurveySLemma2007,liConvexityJointNumerical2006,srazhidinovComputationPhaseGain2023} that the problem above belongs to a special class of quadratically constrained quadratic programs (QCQPs) that admit a \emph{lossless} semidefinite relaxation to the following semidefinite program (SDP):
\begin{maxi}
    {X\succcurlyeq\mato}{\iprod(Q)(X)}{\label{opt:sdp}}{}
    \addConstraint{\iprod(\hermp*(\cj(z_n) A) + \nu_n A\ct A)(X)}{= \iprod(n)(m)}
    \addConstraint{\iprod(I)(X)}{=1,}
\end{maxi} where we take the trace inner product for $\iprod(\cdot)(\cdot)$ when applied to Hermitian matrices. We refer to the (equal) optimal costs of Problems (\ref{opt:ori-equiv}) and (\ref{opt:sdp}) as $p(Q,(n,m))$.
\begin{figure}[b]
    \centering
    \subfloat[Visualization of Problem~(\ref{opt:origin}) \label{fig:dwct-sdp}]{
        \makebox[.43\columnwidth]{
        \includegraphics[height=4.5cm]{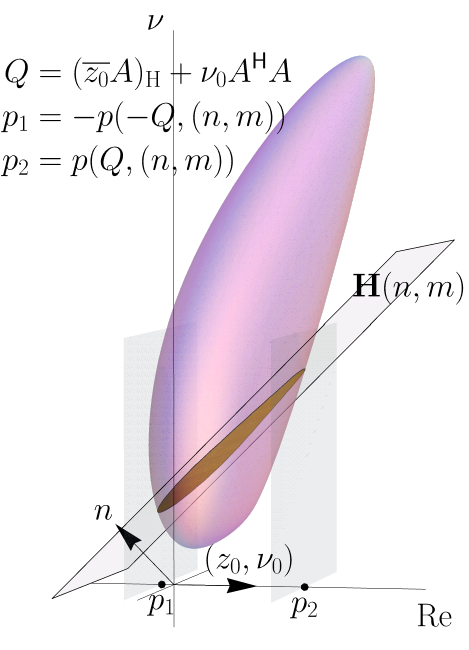}}
    }\,
    \subfloat[Scanning DW shell to plot SRG \label{fig:dwct-srg}]{
        \makebox[.48\columnwidth]{
        \includegraphics[height=4.5cm]{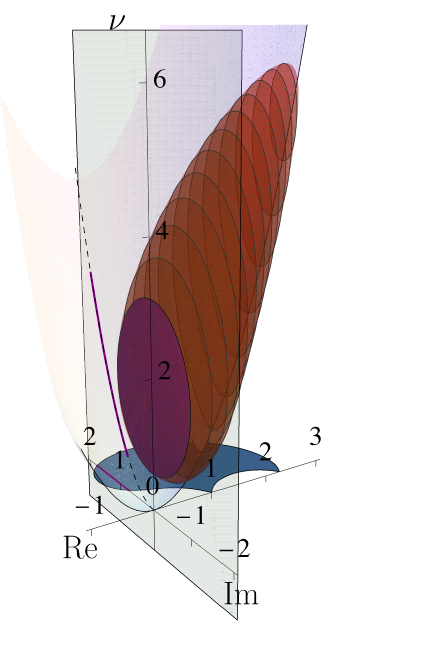}}
    }\vspace{-1em}
    \caption{Tomography of a DW shell.}
    \vspace{-1.5em}
\end{figure}
That said, given a nonzero normalized vector $(z_0,\nu_0)\in\cF\times\rF$, computing the extremal values of the orthogonal projection of the cross section $\dwshell(A)\cap \plane(n,m)$ onto the subspace spanned by $(z_0,\nu_0)$ can be cast as an SDP problem and efficiently solved using convex optimization toolboxes. Specifically, this can be done by solving (\ref{opt:sdp}) for both $Q=\hermp*(\cj(z_0)A)+\nu_0A\ct A$ and its additive inverse (see \cref{fig:dwct-sdp}). In light of this, the numerical approximation of the boundary of any cross section of a DW shell, which we refer to as its tomographic projection, can be refined to any desired resolution by solving a sequence of SDPs. 

\subsection{Algorithm for plotting $\theta$-SRGs}
For a given matrix $A$, \cref{prop:2d-set-texture} states that $\srg[\theta+](A)$ exhibits the texture $\texture[L](\theta-\pi/2)$. This indicates the existence of an interval $[\kmin,\kmax]$ such that $\srg[\theta+](A)$ can be dissected as a union of parallel line segments, each of which resides on some line $\texture[L](\theta-\pi/2)(k)$ with $k\in[\kmin,\kmax]$. 
Specifically, the bounds $\kmin$ and $\kmax$ can be taken as the smallest and largest eigenvalues of $\hermp*(e^{-\ii*\theta}A)$.
Furthermore, each $k$-determined line segment originates from the cross section of $\dwshell(A)$ by $\plane((e^{\ii*\theta},0),(ke^{\ii*\theta},0))$. Hence, the computation of the endpoints of the segment can be formulated as a DW shell tomography problem and solved by SDP solvers. Combining these ideas, we introduce the following DW tomography-based algorithm for plotting $\theta$-SRGs.
\begin{algorithm}
    \footnotesize
    \caption{Plotting $\theta$-SRG.}\label{alg:plot-theta-srg}

    \KwData{Complex matrix $A$, resolution $N$ (positive integer).} 
    $\mathbf{k} \leftarrow $ uniformly sample $N$ points from $[\underline{\lambda}(\hermp*(e^{-\ii*\theta}A)),\overline{\lambda}(\hermp*(e^{-\ii*\theta}A))]$ \; 
    \tcc{With endpoints included. $\kmin[\lambda](\cdot)$ and $\kmax[\lambda](\cdot)$ denote the smallest and largest eigenvalues.}
    Initialize two empty arrays $\mathbf{pt}_{\mathrm{high}}, \mathbf{pt}_{\mathrm{low}}$ to collect boundary points\;
    \For{each $k\in\mathbf{k}$}{
        $Q \leftarrow A\ct A$;\quad $n \leftarrow (e^{\ii*\theta},0)$;\quad $m\leftarrow (ke^{\ii*\theta},0)$ \;
        $p(Q,(n,m))\text{ and }p(-Q,(n,m)) \leftarrow $ solve the SDP (\ref{opt:sdp})\;
        Append $(\projdn\circ \proj[\theta])\,(ke^{\ii*\theta},p(Q,(n,m)))$ to the end of $\mathbf{pt}_{\mathrm{high}}$\;
        Insert $(\projdn\circ \proj[\theta])\,(ke^{\ii*\theta},-p(-Q,(n,m)))$ to the front of $\mathbf{pt}_{\mathrm{low}}$\;
        \tcc{See Sec.~\ref{subsec:shadow-interpret} for formulas of $\projdn, \proj[\theta]$.}
    }
    $\mathbf{pt}_{\theta+} \leftarrow $ concatenate $\mathbf{pt}_{\mathrm{high}}$ and $\mathbf{pt}_{\mathrm{low}}$ \;   
    $\srg[\theta+](A) \approx_{\mathrm{inner}}$ connect consecutive points in $\mathbf{pt}_{\theta+}$;\quad \tcc{This produces a polygon. Reflect the polygon across the $\theta$-axis to obtain a complete inner approxmiation of $\srg[\theta](A)$.} 
\end{algorithm}
\begin{remark} \label{rem:plot-ssg}
    The relaxation from (\ref{opt:ori-equiv}) to (\ref{opt:sdp}) remains lossless even when an additional quadratic constraint on $x$, such as $x\ct P x \geqslant c\,(\text{or }=c)$, is added to Problem (\ref{opt:ori-equiv}). This enables the incorporation of sign constraints in SSGs. Hence, \cref{alg:plot-theta-srg} can be readily adapted for plotting the SSGs.
\end{remark} 
\begin{remark}
    To plot $\theta$-SRGs, one may also turn to \cref{rem:vnumran}: first compute an ordered array of boundary points of $\vnumran[\theta](A)$, then map these points using $h_{\theta}\inv \circ \projdn$ and connect them sequentially.
    However, our aim here is to present a systematic and generalizable framework for dissecting and visualizing lower-dimensional sets derivable from the DW shell. The insights on textures and SDR-based method apply more broadly than to $\theta$-SRG plotting alone, as already evidenced by \cref{rem:plot-ssg}. 
\end{remark}

\section{Conclusions} \label{sec:concl}
We present a unified framework for the graphical stability analysis of MIMO LTI feedback systems, grounded in the concept of DW shells. By leveraging the properties of DW shells and various projection mappings, we established an intuitive geometric understanding of various existing graphical tools and gain / phase notions. This perspective provides a cohesive interpretation of existing stability criteria for bi-component feedback loops and reveals their relative conservatism via a shared geometric foundation.
Another contribution is the introduction of the rotated scaled relative graph ($\theta$-SRG), which captures both gain and (segmental) phase information in a unified representation. The proposed $\theta$-SRG condition yields a closed-loop stability criterion that is provably the least conservative among all SRG-based separation conditions for bi-component feedback loops. In addition, we developed an SDP-based algorithm for plotting rotated SRGs, which also offers insights into visualizing other members of the DW-family of graphical representations.

Future work includes extending the proposed framework to nonlinear systems, as well as to semistable linear systems using Davis's original shell defined on linear relations.

\appendices

\section{Proof of \cref{prop:theta-srg-basic}} \label{app:proof-theta-srg-basic}
The $\pi$-periodicity captures the fact that viewing $\dwshell(A)$ from the front and back (with respect to $\theta$-axis) yields a pair of mirror-symmetric shapes. Rigorously, 
    $\srg[\theta+\pi](A)=
    \projdn(\proj[\theta+\pi]\dwshell(A)\cup\proj[\theta+2\pi]\dwshell(A)) = 
    \projdn(\proj[\theta+\pi]\dwshell(A)\cup\proj[\theta]\dwshell(A)) = 
    \srg[\theta](A)$.
Property 2) follows directly from properties of DW shells (see discussions after \cref{rem:pf-basic-dw}) and $\projdn\circ\proj[\theta]$. In particular, note that
\begin{align*}
    &\projdn\circ\proj[-\theta] (\cj(z),r) = 
    \begin{bNiceMatrix}
        e^{-\ii*\theta} & 0
    \end{bNiceMatrix} \projin (e^{\ii*\theta} \cj(z),r) \\ 
    &\quad = \cj(\begin{bNiceMatrix}
e^{\ii*(\theta+\pi)} & 0
\end{bNiceMatrix} \projin (e^{-\ii*(\theta+\pi)} z,r)) = \cj(\projdn\circ \proj[\theta+\pi](z,r)).\end{align*}
Then, due to the symmetry between $\dwshell(A)$ and $\dwshell(A\ct)$ about the $\mathrm{Re}$-$\nu$ plane, we have 
\begin{align*}
    \srg[-\theta+](A\ct) = \cj(\srg[\theta-](A)), \srg[-\theta-](A\ct) = \cj(\srg[\theta+](A)).
\end{align*} It follows that $ \srg[\theta](A) = \cj(\srg[-\theta](A\ct))$.

The statements in 3) concerning invariance can be verified algebraically, but an optical interpretation guided by \cref{tab:optical-configs} is more intuitive. In an optical interpretation, the mapping $\proj[\theta]$ projects each point in $\epi(\parab[1])$ onto the half-paraboloid surface $\parab[1]\cap(\cF_{\theta+}\times \nnreals)$ using light rays parallel to the $(\theta-\pi/2)$-axis. Under this projection, every point in the interior of $\epi(\parab[1])$ is displaced onto the paraboloid screen, while points on the screen remain fixed. From property 3) of \cref{lem:dw-shell-basics}, the intersection between the half-paraboloid surface and the DW shell corresponds precisely to the set of eigenvalues within $\cF_{\theta+}$, lifted vertically onto the paraboloid. These lifted spectral points are thus exactly the set of $\proj[\theta]$-invariant points in $\dwshell(A)$.
The statement for $\proj[\theta+\pi]$ follows from the same line of argument, and the spectrum containment property of $\srg[\theta](A)$ emerges as a natural consequence.
Also, observe that the preimage of the $\theta$-axis under $\projdn\circ\proj[\theta]$ is precisely $\parab[1]\cap\plane[\theta-\pi/2,0]$, whose intersection with $\dwshell(A)$ corresponds to eigenvalues of $A$ that lie on the $\theta$-axis. 
Since the spectrum is contained in $\srg[\theta](A)$ for all $\theta \in \rF$, it follows that $\eigs(A)\subseteq\cap_{\theta\in[0,\pi)}\srg[\theta](A)$. For any $z\notin \Lambda(A)$, we have shown that $z\notin\srg[\angle z](A)$ (where $\angle z$ can be set arbitrarily if $z=0$). Due to the $\pi$-periodicity, we can always find $\theta(z)\in[0,\pi)$ such that $\srg[\theta(z)](A) = \srg[\angle z](A)$. Hence, this shows that $\cap_{\theta\in[0,\pi)} \srg[\theta](A) \subseteq \Lambda(A)$.

For 4) and 5), note that the intersection between any hyperplane in $\cF\times \nnreals$ and the paraboloid $\parab[1]$, if not empty, always projects to a circle under $\projdn$.\footnote{A point is a circle with radius 0, and a line is a circle with radius $\infty$.} In particular, if the hyperplane is aligned with the $(\theta-\pi/2)$-axis, the center of the circle lies on the $\theta$-axis. For distinct $z_1, z_2\in\srg[\theta+](A)$, there exist $\pt[1],\pt[2] \in \dwshell(A)$ such that $\projdn\circ\proj[\theta]\,\pt[i] = z_i$ for $i=1,2$. The line segment $\mathscr{L}(\pt[1],\pt[2])$ connecting $\pt[1]$ and $\pt[2]$ lies entirely within $\dwshell(A)$ (or the region it encloses when $n=2$), due to convexity. Hence, the image $\projdn\circ\proj[\theta]\,\mathscr{L}(\pt[1],\pt[2])$ is contained in $\srg[\theta+](A)$ with endpoints $z_1, z_2$. Recall again the optical behavior of $\proj[\theta]$: the projection $\proj[\theta]\,\mathscr{L}(\pt[1],\pt[2])$ is just a segment of the intersection between the unique hyperplane, which passes through $\pt[1]$ and $\pt[2]$ and is aligned with the $(\theta-\pi/2)$-axis, and $\parab[1]\cap (\cF_{\theta+}\times \nnreals)$. Under $\projdn$, this becomes an arc with its center lying on the $\theta$-axis .
For normal matrices, their DW shells are convex hulls of the lifted spectrum points. Thus, the boundaries of their $\theta$-SRGs all stem from line segments in 3-D connecting these lifted points, and property 5) follows. \vspace{-1em}

\section{Proof of \cref{lem:dw-sep-ext}} \label{app:proof-dw-sep-ext} 
We prove the theorem by contraposition. For the ``only if'' part: if $\invdwshell(A)\cap\dwshell(-B)\neq \emptyset$, then we pick $(z,\nu)$ from this intersection. Note that $\nu>0$ since the codomain of $\invmap$ is $\cF\times \poreals$. By definition, there exist \emph{unit} vectors $x, y$ such that:
\begin{align*}
    \frac{\cj(x\ct A x)}{\|Ax\|^2} = z = -y\ct B y,\quad \frac{1}{\|Ax\|^2} = \nu = \|By\|^2.
\end{align*}
Let $a := Ax/\|Ax\|$, then $\iprod(a)(x) = a\ct x = z/\sqrt{\nu}$. 
By the Gram-Schmidt process, we can always write: 
\begin{align*}
    x   &= \iprod(a)(x) a + \sqrt{1-|\iprod(a)(x)|^2}a_{\perp} \\
        &= \frac{z}{\sqrt{\nu}} a + \sqrt{1-\frac{|z|^2}{\nu}} a_{\perp},\\
    By  &= \iprod(y)(By) y + \sqrt{\nu - |\iprod(y)(By)|^2} y_{\perp} \\
        &= -z y + \sqrt{\nu - |z|^2}y_\perp,
\end{align*} where unit vectors $a_{\perp}$ and $y_{\perp}$ satisfy $a_{\perp}\perp a$ and $y_{\perp}\perp y$.
There are two extremal cases where the coefficients associated with the orthonormal pairs vanish:
1) $z = 0$, i.e., $a\perp x$ and $y\perp By$; 2) $x$ aligns with $a$ (or $By$ aligns with $y$), i.e., $x$ (or $y$) is an eigenvector of $A$ (or $B$). 
In the former case, we can simply take $a_{\perp} = x$ and $y_{\perp} = By/\|By\|$. 
In the latter case, we choose an arbitrary unit vector $a_{\perp} \perp a$ (or $y_{\perp} \perp y$).
Now take a unitary $U$ such that $U[a \hspace{.4em} -a_{\perp}] = [y\hspace{.3em}y_{\perp}]$ and we have
\begin{align*}
    (I+AU\ct B U)a 
    &= a + A U\ct\, (-zy + \sqrt{\nu-|z|^2} y_{\perp})\\
    &= a + A (-z a - \sqrt{\nu-|z|^2} a_{\perp}) \\
    & = a - \sqrt{\nu}Ax = 0.
\end{align*}
This implies that $I+AU\ct B U$ is singular.

For the ``if'' part: since $\dwshell(-U\ct B U) = \dwshell(-B)$, we can assume without loss of generality that $I+AB$ is singular. The goal is to construct a point that belongs to both $\invdwshell(A)$ and $\dwshell(-B)$.
Suppose that $y$ is a normalized null-vector of $I+AB$, i.e., $(I+AB)y = 0, \|y\|=1$. It follows from $A(By) = -y \neq 0$ that $By \neq 0$. Then, let $x = By/\|By\|$ and we have 
\begin{align*}
    \frac{1}{\|Ax\|^2} &= \frac{1}{\|ABy\|^2/\|By\|^2} = \|By\|^2, \\
    \frac{\cj(x\ct A x)}{\|Ax\|^2} &= \frac{x\ct A\ct x}{\|Ax\|^2}= \frac{y\ct B\ct A\ct B y/\|By\|^2}{\|Ax\|^2} =  -y\ct B y .
\end{align*}
This shows that $(\cj(x\ct A x)/\|Ax\|^2,1/\|Ax\|^2) \in \invdwshell(A)$ and $(-y\ct B y, \|By\|^2)\in \dwshell(-B)$ coincide. Thus, $I+AB$ being singular implies $\invdwshell(A)\cap \dwshell(-U\ct B U) \neq \emptyset$ for all unitary $U$. This completes the proof. \vspace{-1em}

\section{Remarks on Formulas in \cref{tab:inv-sets} \label{app:proof-inv-set-formula}}
These formulas can all be verified in a straightforward, pointwise fashion with the explicit forms of the mappings used to define the inverse sets. We elaborate here on some interesting observations. 
For the inverse numerical range, note that
$\projdn\circ \invmap (z,\nu) = \cj(z)/\nu$ for all $(z,\nu) \in \dwshell(A)\setminus\{(0,0)\}$. Let $\lambda = 1/\nu$, we have
\begin{align*}
    &\invnr(A) = \projdn\circ\invmap (\dwshell(A)\setminus \{(0,0)\})\\ 
    &= \projdn (\{\lambda(\cj(z),\nu): \lambda \in \poreals, (z,\nu) \in \dwshell(A), \nu\neq 0\} \cap \plane[1]) \\ 
    &= \cj(\projdn((\conihull(\dwshell(A)))\cap \plane[1])).
\end{align*} This holds regardless of the singularity of $A$.
For nonsingular $A$, it follows from $\invdwshell(A) = \dwshell(A\inv)$ that $\invnr(A) = \numran(A\inv)$.
If $A$ is singular and $0$ is a normal eigenvalue (i.e., the kernel and range of $A$ are orthogonal), then $A$ is unitarily similar to $\blkdiag{\hat{A},0}$ with $\hat{A}$ nonsingular,  and hence we have $\conihull(\dwshell(A)) = \conihull((\convhull(\dwshell(\hat{A})\cup\{0\}))) =\conihull(\dwshell(\hat{A}))$. Then, it follows that $\invnr(A) = \invnr(\hat{A}) = \nr(\hat{A}\inv)$, where $\hat{A}\inv$ is unitarily similar to the compression of $A\pinv$ to the range of $A$. 
If $A$ is singular but $0$ is not a normal eigenvalue, then $\dwshell(A)$ is smooth at $(0,0)$ and its conic hull spans the entire $\cF\times \poreals$. Thus, $\invnr(A) = \cF$ in this case. \vspace{-1em}

\section{Proof of \cref{corol:srg-phase} \label{app:pf-unicentric-srg-phase}}
The $\theta$-SRG separation condition in \cref{thm:dw-sep-vnumran} is satisfied if $\conihull(\invsrg[\theta](A))\cap\conihull(\srg[\theta](-B)) = \emptyset$. By \cref{tab:inv-sets}, we have $\conihull(\invsrg[\theta](A)) = \conihull(\inv(\srg[-\theta](A)\setminus\{0\}))$ . 
The pointwise inversion reflects the arguments across the real axis, and the conic hull operation removes the gain information. Combined with the fact that $\srg[-\theta](A)$ is symmetric about the $(-\theta)$-axis, this implies that $\conihull(\invsrg[\theta](A))$ is symmetric about the $\theta$-axis, and its $\cF_{\theta+}$ portion is $\cone[\theta+\angmin[-\theta](A),\theta+\angmax[-\theta](A)]$.
Meanwhile, since $\srg[\theta](-B) = e^{\ii*\pi} \srg[\theta](B)$, $\conihull(\srg[\theta](-B))$ is also symmetric about the $\theta$-axis, with its $\cF_{\theta+}$ portion being $\cone[\theta+\pi-\angmin[\theta](B),\theta+\pi-\angmax[\theta](B)]$.%
\footnote{One may need to include the origin if $B$ is singular, but this is not essential for conic separation, as the cone associated with $A$ always excludes the origin.}
Hence, the separation of the aforementioned cones can be algebraically characterized by requiring either of the angular inequalities stated in \cref{corol:srg-phase} holds.
\begin{figure}[h!]
    \centering 
    \subfloat[Covers NNR]{
        \makebox[.3\columnwidth][c]{
            \includegraphics[height=3cm]{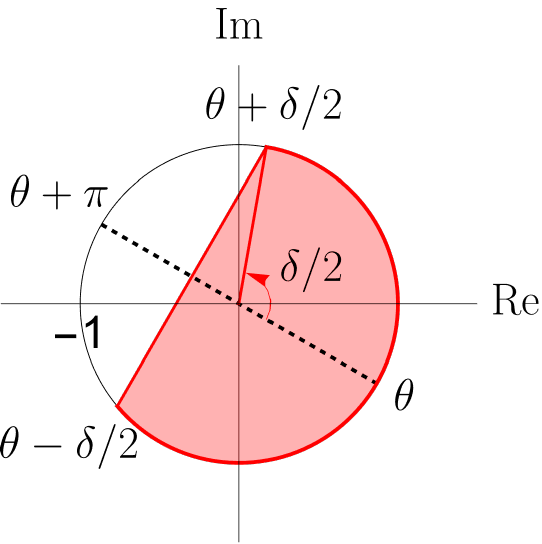}
        }
    }
    \subfloat[Covers $\theta$-SRG]{
        \makebox[.33\columnwidth][r]{
            \includegraphics[height=3cm]{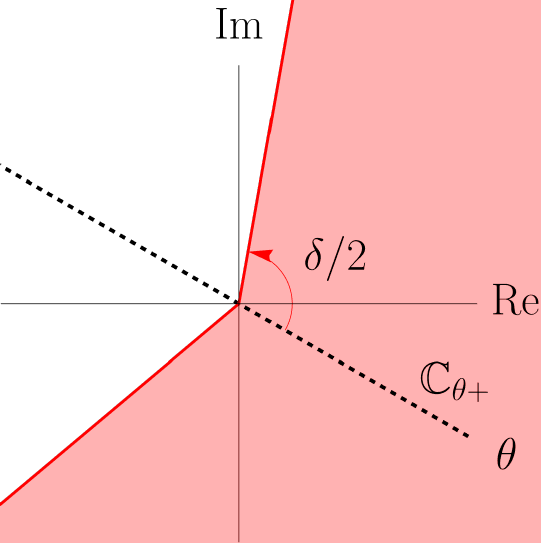}
        }
    }
    \subfloat[Covers DW shell]{
        \makebox[.29\columnwidth][c]{
            \adjincludegraphics[Clip ={0.07\width} {.1\height} {.15\width} {0}, height=3cm]{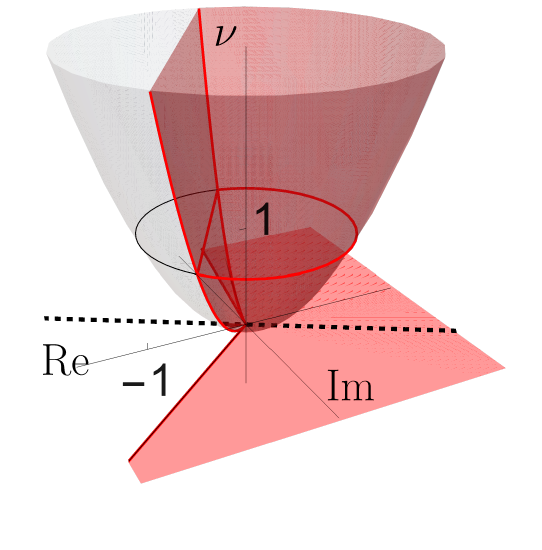}
        }
    }
    \caption{Three equivalent regions used to cover normalized numerical ranges, $\theta$-SRGs, and DW shells: (a) $\seg[\theta-\delta/2,\theta+\delta/2]$; (b) $\cone*[\theta-\delta/2,\theta+\delta/2]$; (c) paraboloidal hull of $\seg[\theta-\delta/2,\theta+\delta/2]\times \{1\}$.} \label{fig:nnr-cover-reg}
    \vspace{-1.5em}
\end{figure}

Now we show the equivalence with 9) in \cref{lem:low-dim-conds}.
Given $\theta \in \rF$, $\delta\in [0,2\pi]$, and a square matrix $C$, it follows from our discussion of 9) following \cref{lem:low-dim-conds} that
\begin{align*}
    &\nnr(C) \subseteq \seg[\theta-\delta/2,\theta+\delta/2]\\
    &\quad \Longleftrightarrow \srg[\theta](C) \subseteq \cone*[\theta-\delta/2,\theta+\delta/2]\\
    &\qquad \Longleftrightarrow
    \dwshell(C) \subseteq \parahull((\seg[\theta-\delta/2,\theta+\delta/2]\times \{1\})). \numberthis \label{eqv:nnr-srg-dw}
\end{align*} where the three regions on the right-hand side are illustrated in \cref{fig:nnr-cover-reg}.
Also, by definitions and \cref{tab:inv-sets}, the following relationships always hold:
\begin{align*}
    \nnr(C) &\subseteq \seg[\sphmin[\theta](C),\sphmax[\theta](C)],\\ 
    \invnnr(C) &\subseteq \seg[-\sphmax[\theta](C),-\sphmin[\theta](C)], \\
    \nnr(-C) &\subseteq \seg[\pi+\sphmin[\theta](C),\pi +\sphmax[\theta](C)].
\end{align*}
Condition~9) can be rewritten as the existence of bisector angles $\theta(A),\theta(B) \in \rF$ such that
\begin{align*}
    &\left[(2k+1)\pi + \sphmin[\theta(B)](B), (2k+1)\pi + \sphmax[\theta(B)](B)\right]  \\
    &\hspace{10em} \subseteq  \left(-\sphmin[\theta(A)](A), 2\pi -\sphmax[\theta(A)](A)\right)
\end{align*} for some integer $k$. 
It follows from the $\pi$-periodicity of $\theta$-SRG that
    $\sphmin[\theta+2l\pi](C) = \sphmin[\theta](C)+2l\pi, \sphmax[\theta+2l\pi](C) = \sphmax[\theta](C)+2l\pi$
for any matrix $C$ and all integers $l$. Hence, we may, without loss of generality, restrict $\theta(A)$ to $(-\pi,\pi]$ in the condition. Note that the open interval on the right is bisected at $-\theta(A)+\pi$, and there must exists some $\epsilon>0$ such that
\begin{align*}
    &\left[(2k+1)\pi + \sphmin[\theta(B)](B), (2k+1)\pi + \sphmax[\theta(B)](B)\right]  \\
    &\hspace{7em} \subseteq  \left[-\sphmin[\theta(A)](A)+\epsilon, 2\pi -\sphmax[\theta(A)](A)-\epsilon\right].
\end{align*}
Let $\delta := 2\pi - (\sphmax[\theta(A)](A)-\sphmin[\theta(A)](A)) = 2(\pi - \angmax[\theta(A)](A))$. The above inclusion implies that $\nnr(-B)\subseteq \seg[-\theta(A)+\pi-(\delta/2-\epsilon),-\theta(A)+\pi + (\delta/2-\epsilon)]$. The equivalences in (\ref{eqv:nnr-srg-dw}) indicate 
\begin{align*}
    &\srg[-\theta(A)+\pi](-B) \\ 
    &\quad \subseteq \cone*\left[-\theta(A)+\pi-\Big(\frac{\delta}{2}-\epsilon\Big),-\theta(A)+\pi + \Big(\frac{\delta}{2}-\epsilon\Big)\right].
\end{align*}
Noting that $\srg[-\theta(A)+\pi](-B) = e^{\ii*\pi}\srg[-\theta(A)](B)$, we have
$\srg[-\theta(A)](B)\subseteq \cone*[-\theta(A)-(\delta/2-\epsilon),-\theta(A)+(\delta/2-\epsilon)]$, which can be characterized by the inequality:
\begin{align*}
\angmax[-\theta(A)](B) < \frac{\delta}{2}  = \pi - \angmax[\theta(A)](A).
\end{align*} If $\theta(A) \in (-\pi,0]$, let $\theta = -\theta(A)$, and the above inequality gives the small phase condition. Otherwise, $\theta(A)\in (0,\pi]$ and we let $\theta = \pi-\theta(A) \in [0,\pi)$. The above inequality can then be rewritten as $\angmax[\theta-\pi](B) < \pi - \angmax[\pi - \theta](A)$, which gives the large phase condition by invoking relationship (\ref{eq:large-seg-ph}).

Conversely, suppose that the $\theta$-SRG phase conditions hold for some $\theta \in [0, \pi)$. Then, if the small phase condition (the $<\pi$ inequality) is active, we may take $\theta(A)=-\theta,\, \theta(B)=\theta$; if the large phase condition (the $>\pi$ inequality) is active, we instead take $\theta(A)=-\theta+\pi,\, \theta(B)=\theta-\pi$. It is straightforward to verify that, with the abovesaid choice, condition~9) is satisfied. This completes the proof. \vspace{-1em}

\section{Proof of \cref{corol:srg-ph-eigen}\label{app:pf-srg-ph-eig}}
We prove the case for the large phase condition $\angmin[-\theta](A)+\angmin[\theta](B)>\pi$ only, and the case for the small phase condition is alike.
The phase condition implies that $\conihull(\invsrg[\theta](A))\cap\conihull(\srg[\theta](-B))=\emptyset$.
Identity 2) in \cref{prop:theta-srg-basic} shows that $\srg[\theta](-\gamma B) = \gamma \srg[\theta](-B)$ for all $\gamma \in \nnreals$, and hence $\conihull(\srg[\theta](-\gamma B)) = \conihull(\srg[\theta](-B))$ for all $\gamma \in \nnreals$. Therefore, the phase condition also implies that $\conihull(\invsrg[\theta](A))\cap\conihull(\srg[\theta](-\gamma B))=\emptyset$ for all $\gamma \in \nnreals$, which further indicates that $\nzeigs(A) \subseteq (-\nnreals\inv)\scomp = \cone(-\pi,\pi)$ by \cref{rem:dw-for-eigen-estimate}.

By definition, $\invdwshell(A)\subseteq\parahull((\seg[(\theta+\pi)-(\pi-\angmin[-\theta](A)),(\theta+\pi)+(\pi-\angmin[-\theta](A))]\times \{1\}))$ and $\dwshell(-B)\subseteq \parahull((\seg[\theta-(\pi-\angmin[\theta](B)),\theta+(\pi-\angmin[\theta](B))]\times\{1\}))$. 
The large phase condition is equivalent to the separation of these two paraboloidal hulls.
We can rotate $\dwshell(-B)$ along with its containing paraboloidal hull around $\nu$-axis in either direction by less than $\angmin[-\theta](A)+\angmin[\theta](B)-\pi$ while keeping it disjoint from that of $\invdwshell(A)$. Since $\dwshell(-e^{\ii*\delta}B)$ is obtained by rotating $\dwshell(-B)$ around the $\nu$-axis by $\delta$, this, combined with our previous discussion on the nonnegative scaling, indicates that:
\begin{align*}
    &\angmin[-\theta](A)+\angmin[\theta](B)>\pi \\
    &\Longrightarrow 
    \invdwshell(A) \cap \dwshell(-z B) =\emptyset \Forall z \in K \text{ where }\\
    &\hspace{1em} K = \cone*(-(\angmin[-\theta](A)+\angmin[\theta](B)-\pi),\angmin[-\theta](A)+\angmin[\theta](B)-\pi).
\end{align*}
Since $(-K\inv)\scomp = \cone[\angmin[-\theta](A)+\angmin[\theta](B)-2\pi,2\pi - \angmin[-\theta](A)-\angmin[\theta](B)]$, \cref{corol:srg-ph-eigen} follows directly from \cref{rem:dw-for-eigen-estimate}. \vspace{-1em}

\section*{Acknowledgment}
The authors thank 
    Di Zhao (Nanjing University) and
    Wei Chen (Peking University)
for helpful discussions. \vspace{-1em}

\section*{References}
\def\refname{\vadjust{\vspace*{-2.5em}}} 
\vspace{-.5cm}

\vspace{-.68cm}
\begin{IEEEbiography}{Ding Zhang} 
    received the B.Eng. degree in Mechanical Engineering from Huazhong University of Science and Technology, Wuhan, China, in 2018, and the Ph.D. degree in Electronics and Computer Engineering from the Hong Kong University of Science and Technology, Hong Kong SAR, in 2024, where he is currently a postdoctoral research fellow. He was a visiting student with the EMAN Group at King Abdullah University of Science and Technology, Thuwal, Saudi Arabia, and the Control Group at the University of Cambridge, Cambridge, United Kingdom.
     His research focuses on graphical stability analysis of large-scale networked systems, with broader interests in matrix theory, spectral graph theory, and control theory.
     \vspace{-4em}
\end{IEEEbiography}

\begin{IEEEbiography}{Xiaokan Yang} 
    received the B.S. degree in Vehicle Engineering from Tsinghua University, Beijing, China, in 2021, where he is currently pursuing the Ph.D. degree in dynamical system and control with the Department of Mechanics and Engineering Science at Peking University. His research interests include linear systems and control, phase theory, LMIs and dissipativity theory. 
    \vspace{-4em}
\end{IEEEbiography}

\begin{IEEEbiography}{Axel Ringh} received the M.Sc. degree in Engineering Physics in 2014, and the Ph.D. degree in Applied and Computational Mathematics in 2019, both from KTH Royal Institute of Technology, Stockholm,
Sweden. From 2019 to 2021 he was a postdoctoral researcher with the Department of Electronic and Computer Engineering, the Hong Kong University of Science and Technology, Hong Kong S.A.R, China, and since 2021 he is with the Department of Mathematical Sciences, Chalmers University of Technology and University of Gothenburg, Gothenburg, Sweden (2021-2025 as assistant professor, and from 2025 as associate professor). His is the recipient of the European Control Conference 2015 Best Student Paper Award, and one of the recipients of the SIAM Activity Group on Control and Systems Theory Best SICON Paper Prize 2023. His research interests are within field of applied mathematics, specifically within the areas of optimization and systems theory, with applications to problems in control theory, signal processing, inverse problems, and machine learning.
\vspace{-4em}
\end{IEEEbiography}

\begin{IEEEbiography}{Li Qiu} (Fellow, IEEE) received his Ph.D. degree in electrical engineering from the University of Toronto in 1990. He was with the Canadian Space Agency, the Fields Institute for Research in Mathematical Sciences (Waterloo), and the Institute of Mathematics and its Applications (Minneapolis). In 1993, he joined the Hong Kong University of Science and Technology, Hong Kong, where he worked through the ranks until 2024. He is currently a Presidential Chair Professor with the Chinese University of Hong Kong, Shenzhen, China. His research interests include system, control, optimization theory and mathematics for information technology as well as their applications in manufacturing industry and energy systems. He served as an associate editor of the IEEE Transactions on Automatic Control and an associate editor of Automatica. He was the general chair of the 2009 7th Asian Control Conference. He was a Distinguished Lecturer from 2007 to 2010 and was a member of the Board of Governors in 2012 and 2017 of the IEEE Control Systems Society. He served as a member of the steering committee and as a vice president of Asian Control Association. He is a member of the steering committees of the International Symposiums of Mathematical Theory of Networks and Systems. He is the founding chairperson of the Hong Kong Automatic Control Association. He is a Fellow of IEEE and a Fellow of IFAC.
\end{IEEEbiography}

\end{document}